\newtheorem{theorem}{Theorem}
\newtheorem{lemma}{Lemma}
\newtheorem{definition}{Definition}
\newtheorem{proposition}{Proposition}
\newtheorem{corollary}{Corollary}
\theoremstyle{nonumberplain}
\newtheorem{proof}{Proof}
\begin{document}

\title{The Convergence Guarantees of a Non-convex Approach for Sparse Recovery}

\author{Laming~Chen and~Yuantao~Gu\thanks{This work was supported by National 973 Program of China (Grant No. 2013CB329201),
National Natural Science Foundation of China (NSFC 61371137, 60872087),
and the autonomous project of science and technology of Tsinghua University with No. of 2012THZ07123. The authors are with State Key Laboratory on Microwave and Digital Communications,
Tsinghua National Laboratory for Information Science and Technology, Department of Electronic Engineering, Tsinghua University, Beijing 100084, China (E-mail: gyt@tsinghua.edu.cn).}}

\date{Received December 24, 2012, revised November 18, 2013.}

\maketitle

\begin{abstract}
In the area of sparse recovery, numerous researches hint that non-convex penalties might induce better sparsity than convex ones, but up until now those corresponding non-convex algorithms lack convergence guarantees from the initial solution to the global optimum. This paper aims to provide performance guarantees of a non-convex approach for sparse recovery. Specifically, the concept of weak convexity is incorporated into a class of sparsity-inducing penalties to characterize the non-convexity. Borrowing the idea of the projected subgradient method, an algorithm is proposed to solve the non-convex optimization problem. In addition, a uniform approximate projection is adopted in the projection step to make this algorithm computationally tractable for large scale problems. The convergence analysis is provided in the noisy scenario. It is shown that if the non-convexity of the penalty is below a threshold (which is in inverse proportion to the distance between the initial solution and the sparse signal), the recovered solution has recovery error linear in both the step size and the noise term. Numerical simulations are implemented to test the performance of the proposed approach and verify the theoretical analysis.

\textbf{Keywords:}
Sparse recovery, \, sparseness measure, \, weak convexity, \, non-convex optimization, \, projected generalized gradient method, \, approximate projection, \, convergence analysis.
\end{abstract}

\section{Introduction}\label{Sec_Int}

Since the introduction of compressive sensing (CS) \cite{Candes,Donoho,Tao}, sparse recovery has received much attention and becomes a very hot topic these years \cite{Elad,MRI,Radar,BADC,EADC}. Sparse recovery aims to solve the following underdetermined linear system
\begin{align}\label{sparserecovery}
{\bf y}={\bf Ax},
\end{align}
where ${\bf y}\in\mathbb{R}^M$ denotes the measurement vector, ${\bf A}\in\mathbb{R}^{M\times N}$ is a sensing matrix with more columns than rows, i.e., $M<N$, and ${\bf x}=(x_i)\in\mathbb{R}^N$ is the sparse or compressible signal to be recovered.

Many algorithms have been proposed to solve the problem (\ref{sparserecovery}). If $\bf x$ is sparse, one typical method is to consider the following $\ell_0$-minimization problem
\begin{align}\label{l0opt}
\underset{\bf x}{\operatorname{argmin}}\|{\bf x}\|_0\ \ \textrm{subject to}\ \ {\bf y}={\bf Ax},
\end{align}
where the $\ell_0$ ``norm'' $\|{\bf x}\|_0=\#\{i:x_i\neq0\}$ counts the nonzero elements of $\bf x$. However, it is not practical to adopt this method since it is usually solved by combinatorial search, which is NP-hard. An alternate method \cite{BP} is to replace the $\ell_0$ ``norm'' with the $\ell_1$ norm, i.e.,
\begin{align}\label{l1opt}
\underset{\bf x}{\operatorname{argmin}}\|{\bf x}\|_1\ \ \textrm{subject to}\ \ {\bf y}={\bf Ax}.
\end{align}
The convex $\ell_1$-minimization problem (\ref{l1opt}) is also known as \emph{basis pursuit} (BP). It is certified that under some certain conditions \cite{BPRIP}, the optimal solution of $\ell_1$-minimization is identical to that of $\ell_0$-minimization. This conclusion greatly reduces the computational complexity, since $\ell_1$-minimization can be reformulated as a linear program (LP), and be solved by numerous efficient algorithms \cite{ConvexOpt}.

Another family of sparse recovery algorithms is put forward based on non-convex optimization
\begin{align}\label{lFopt}
\underset{\bf x}{\operatorname{argmin}} J({\bf x})\ \ \textrm{subject to}\ \ {\bf y}={\bf Ax},
\end{align}
where $J(\cdot)$ is a sparsity-inducing penalty. The optimization problem (\ref{lFopt}) is also termed as $J$-minimization \cite{Aldroubi}. These algorithms include focal underdetermined system solver (FOCUSS) \cite{FOCUSS}, iteratively reweighted least squares (IRLS) \cite{IRLS}, reweighted $\ell_1$-minimization \cite{RL1}, smoothed $\ell_0$ (SL0) \cite{SL0}, difference of convex (DC) algorithm \cite{DCAlgorithm}, improved smoothed $\ell_0$ (ISL0) \cite{ISL0}, and zero-point attracting projection (ZAP) \cite{ZAP}. It is theoretically proved \cite{Gribonval,nonconvexopt,lqmin,Saab} and experimentally verified \cite{FOCUSS,IRLS,RL1,SL0,DCAlgorithm,ISL0,ZAP,nonconvexopt,lqmin,Saab} that for some certain non-convex penalties, $J$-minimization tends to derive the sparse solution under weaker conditions than $\ell_1$-minimization. However, the inherent deficiency of multiple local minima in non-convex optimization limits its practical usage, where improper initial criteria might cause the solution trapped into the wrong ones.

The convergence performance of some non-convex sparse recovery algorithms has been studied in literatures. For example, in \cite{IRLSana}, a local convergence result of IRLS \cite{IRLS} for $\ell_p$-minimization with $p\in(0,1)$ is established where the convergence is guaranteed in a sufficiently small neighborhood of the sparse signal. Whether or not this neighborhood contains the initial solution is not discussed. In \cite{3MG}, the majorize-minimize (MM) subspace algorithm is proposed to solve the $\ell_2-\ell_0$ regularized problem and its convergence performance is also provided. Under some certain conditions, it is shown that the generated sequence will converge to a critical point, which is not, however, proved to be the global optimum. In \cite{Mohimani}, the convergence performance of SL0 \cite{SL0} is given. This is done due to the ``local convexity'' of the penalties, and SL0 needs to solve a sequence of optimization problems rather than a single $J$-minimization problem to guarantee convergence to the sparse signal.

This paper aims to provide theoretical convergence guarantees of a non-convex approach for sparse recovery from the initial solution to the global optimum. The question, which naturally appears and mainly motivates this paper, is raised as follows.
\begin{quote}
\emph{Does there exist a computationally tractable} algorithm that guarantees to find the sparse solution to $J$-minimization? If yes, in what circumstances does this statement hold?
\end{quote}
In this paper, exploiting the concept of \emph{weak convexity} \cite{rhoconvex} to characterize the non-convexity of the penalties, the mentioned question is replied as follows.
\begin{quote}
\emph{A computationally tractable non-convex approach is proposed with guarantees that it converges to the sparse solution provided that the non-convexity of the penalty is below a threshold.}
\end{quote}

This paper is organized as follows. Section~\ref{Sec_Pre} introduces the preliminaries of this paper, including the projected subgradient method, the concepts of sparseness measure and weak convexity, and some related state of the art researches. In Section~\ref{Sec_Main}, the main contributions of this paper, including the non-convex approach for sparse recovery and its performance guarantees, are demonstrated. The theoretical analysis and some further discussions are provided in Section~\ref{Sec_Theo}. Numerical simulations are implemented in Section~\ref{Sec_Simu} to verify the theoretical results. All of the proofs are included in Section~\ref{Sec_Proof}, and this paper is concluded in Section~\ref{Sec_Conc}.

\section{Preliminary}\label{Sec_Pre}

For constrained convex optimization problem, the projected subgradient method \cite{PSM} is an algorithm which is very simple to implement and easy to analyze. Specifically, consider the convex optimization
\begin{align}
\underset{\bf x}{\operatorname{argmin}}f({\bf x})\ \ \textrm{subject to}\ \ {\bf x}\in\mathcal{C},
\end{align}
where $f:\mathbb{R}^N\rightarrow\mathbb{R}$ is convex (and possibly nondifferentiable) and $\mathcal{C}\subset\mathbb{R}^N$ is a convex set. Denote $P_{\mathcal{C}}(\cdot)$ as the Euclidean projection on $\mathcal{C}$. The projected subgradient method is given by
\begin{align}
{\bf x}(n+1)=P_{\mathcal{C}}\left({\bf x}(n)-\kappa(n)g(n)\right),
\end{align}
where $\kappa(n)$ and $g(n)$ are the $n$th step size and any subgradient of $f(\cdot)$ at ${\bf x}(n)$, respectively. Theoretical analysis \cite{nonlinear,Boyd_Subgradient} reveals that this method converges to the optimum for some certain types of step size rules, e.g. the step size sequence which is square summable but not summable.

Several notable differences between the projected subgradient method and the ordinary projected gradient method \cite{PGM} should be pointed out. First, the projected subgradient method applies directly to nondifferentiable convex functions while the latter doesn't. Second, the function value of the solution sequence can increase in the projected subgradient method. Therefore, the key quantity is the Euclidean distance to the optimum instead of the function value. In addition, the projected subgradient method adopts step size sequence fixed in advance rather than an exact or approximate line search as in the projected gradient method.

For non-convex $J$-minimization problem (\ref{lFopt}), the projected subgradient method is no longer applicable. The following two subsections introduce the concepts of sparseness measure and weak convexity, by which the projected subgradient method can be generalized to be applicable to $J$-minimization.

\subsection{Sparseness Measure}

First, a class of sparsity-inducing penalties is introduced. The penalty $J({\bf x})$ in (\ref{lFopt}) is defined as
\begin{align}\label{costfunction}
J({\bf x})=\sum_{i=1}^{N} F(x_i),
\end{align}
where $F(\cdot)$ belongs to a class of sparseness measures \cite{Gribonval} satisfying the following Definition~\ref{definition_spar}.

\begin{definition}\label{definition_spar}
The sparseness measure $F:\mathbb{R}\rightarrow\mathbb{R}$ satisfies
\begin{enumerate}
\item
$F(0)=0$, $F(\cdot)$ is even and not identically zero;
\item
$F(\cdot)$ is non-decreasing on $[0,+\infty)$;
\item
The function $t\mapsto F(t)/t$ is non-increasing on $(0,+\infty)$.
\end{enumerate}
\end{definition}

As has been revealed in \cite{Gribonval}, the null space property with its constant \cite{NSP} is closely related to whether $J$-minimization is able to find the sparse signal. Define ${\bf x}_S$ as the vector generated by setting the entries of $\bf x$ indexed by $S^c=\{1,2,\ldots,N\}\setminus S$ to zeros.

\begin{definition}\label{definition_NSP}
Define null space constant $\gamma(J,{\bf A},K)$ as the smallest quantity such that
\begin{align}\label{NSC}
J({\bf z}_S)\le\gamma(J,{\bf A},K) J({\bf z}_{S^c})
\end{align}
holds for any set $S\subset\{1,2,\ldots,N\}$ with $\#S\le K$ and for any vector ${\bf z}\in \mathcal{N}({\bf A})$, where $\mathcal{N}({\bf A})$ denotes the null space of ${\bf A}$.
\end{definition}

Based on Definition~\ref{definition_spar} and Definition~\ref{definition_NSP}, the following proposition is derived in \cite{Gribonval}.

\begin{proposition}\label{proposition_NSP}
(Theorem 2, 3, and 5 from \cite{Gribonval}). For penalty $J(\cdot)$ formed by $F(\cdot)$ satisfying Definition~\ref{definition_spar}, the following statements hold:
\begin{enumerate}
\item
If $\gamma(J,{\bf A},K)<1$, then for any $\bf x$ satisfying $\|{\bf x}\|_0\le K$ and ${\bf y=Ax}$, $\bf x$ is the unique solution to (\ref{lFopt});
\item
If $\gamma(J,{\bf A},K)>1$, then there exist vectors $\bf x$ and $\bf x'$ such that $\|{\bf x}\|_0\le K$, ${\bf Ax=Ax'}$ and $J({\bf x'})<J({\bf x})$;
\item
$\gamma(\ell_0,{\bf A},K)\le\gamma(J,{\bf A},K)\le\gamma(\ell_1,{\bf A},K)$.
\end{enumerate}
\end{proposition}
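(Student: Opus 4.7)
The plan is to handle the three items in sequence, leaning on one preparatory lemma. Under Definition~\ref{definition_spar}, the sparseness measure $F$ is subadditive on $\mathbb{R}$: for $0<u\le v$, property~3 gives $F(u+v)/(u+v)\le F(v)/v$, equivalently $F(u+v)\le F(v)+(u/v)F(v)$, and using property~3 once more ($F(u)/u\ge F(v)/v$) bounds $(u/v)F(v)\le F(u)$. Evenness together with monotonicity (property~2) extends the bound to all of $\mathbb{R}$, and summing coordinatewise gives $J({\bf a}+{\bf b})\le J({\bf a})+J({\bf b})$. The same axioms force $F(t)>0$ for every $t>0$, since $F$ is non-zero and $F(t)/t$ is non-increasing on $(0,+\infty)$.

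For statement~(1), let ${\bf x}$ be $K$-sparse with support $S$ and let ${\bf x'}$ be any other feasible vector. Writing ${\bf z}:={\bf x'}-{\bf x}\in\mathcal{N}({\bf A})$, splitting $J({\bf x'})=J({\bf x'}_S)+J({\bf z}_{S^c})$, and applying subadditivity via ${\bf x}={\bf x'}_S-{\bf z}_S$ yields $J({\bf x'}_S)\ge J({\bf x})-J({\bf z}_S)$, so
\begin{align*}
J({\bf x'})\ge J({\bf x})-J({\bf z}_S)+J({\bf z}_{S^c})\ge J({\bf x})+\bigl(1-\gamma(J,{\bf A},K)\bigr)J({\bf z}_{S^c}).
\end{align*}
If ${\bf z}\neq{\bf 0}$ then ${\bf z}_{S^c}\neq{\bf 0}$, for otherwise the null-space inequality forces $J({\bf z}_S)=0$ and positivity of $F$ forces ${\bf z}_S={\bf 0}$. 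Hence under $\gamma<1$ the correction is strictly positive and ${\bf x}$ is the unique $J$-minimizer. Statement~(2) is then a direct construction: because $\gamma(J,{\bf A},K)$ is by definition the \emph{smallest} admissible constant, $\gamma>1$ produces some ${\bf z}\in\mathcal{N}({\bf A})$ and some $S$ with $\#S\le K$ satisfying $J({\bf z}_S)>J({\bf z}_{S^c})$. Taking ${\bf x}:={\bf z}_S$ (which is $K$-sparse) and ${\bf x'}:=-{\bf z}_{S^c}$, the identity ${\bf x}-{\bf x'}={\bf z}\in\mathcal{N}({\bf A})$ gives ${\bf A}{\bf x}={\bf A}{\bf x'}$, while evenness of $F$ yields $J({\bf x'})=J({\bf z}_{S^c})<J({\bf z}_S)=J({\bf x})$.

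Statement~(3), which I expect to be the main obstacle, I would attack by fixing any nonzero ${\bf z}\in\mathcal{N}({\bf A})$ and taking $S$ to index its $K$ largest magnitudes. The heart of the argument is the per-${\bf z}$ sandwich
\begin{align*}
\frac{\|{\bf z}_S\|_0}{\|{\bf z}_{S^c}\|_0}\;\le\;\frac{J({\bf z}_S)}{J({\bf z}_{S^c})}\;\le\;\frac{\|{\bf z}_S\|_1}{\|{\bf z}_{S^c}\|_1}.
\end{align*}
For the right-hand inequality, set $\lambda:=\min\{F(|z_j|)/|z_j|:j\in S^c,\,z_j\neq 0\}$; property~3 combined with $|z_i|\ge|z_j|$ for $i\in S$, $j\in S^c$ nonzero yields $F(|z_i|)\le\lambda|z_i|$ on $S$ and $F(|z_j|)\ge\lambda|z_j|$ on $S^c$, and summing gives the bound. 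For the left-hand inequality, set $\mu:=\min\{F(|z_i|):i\in S,\,z_i\neq 0\}$; monotonicity of $F$ gives $F(|z_j|)\le\mu$ for every nonzero $z_j$ with $j\in S^c$, so $J({\bf z}_S)\ge\|{\bf z}_S\|_0\,\mu$ while $J({\bf z}_{S^c})\le\|{\bf z}_{S^c}\|_0\,\mu$.

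To promote the sandwich to a bound on $\gamma$-constants I would note that the top-$K$ choice of $S$ simultaneously attains the maximum of each of the three ratios over admissible $S$ (for $\ell_1$ by rearrangement, for $J$ by monotonicity of $F$, and for $\ell_0$ because any $K$ indices drawn from $\operatorname{supp}({\bf z})$ attain the maximum $K/(\|{\bf z}\|_0-K)$), so taking the supremum over ${\bf z}$ delivers $\gamma(\ell_0,{\bf A},K)\le\gamma(J,{\bf A},K)\le\gamma(\ell_1,{\bf A},K)$. The two delicate points to watch are the clean extraction of the subadditivity lemma from the three axioms, and the verification that the top-$K$ comparison dominates the relevant suprema in the $\ell_0$ case, which implicitly assumes $\|{\bf z}\|_0>K$ throughout $\mathcal{N}({\bf A})\setminus\{{\bf 0}\}$; otherwise all three constants are $+\infty$ and the chain is trivially true.
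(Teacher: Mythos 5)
The paper offers no proof of this proposition to compare against: it is imported verbatim as Theorems 2, 3, and 5 of the cited reference \cite{Gribonval}, so your proposal must be judged on its own merits, and on those merits it is correct and self-contained. The subadditivity lemma is properly extracted from the three axioms (the two applications of the monotonicity of $F(t)/t$, then evenness plus monotonicity to pass from $F(|s|+|t|)$ to $F(s+t)$), and the positivity of $F$ on $(0,+\infty)$ follows as you say. In (1), the chain $J({\bf x}')\ge J({\bf x})-J({\bf z}_S)+J({\bf z}_{S^c})\ge J({\bf x})+\left(1-\gamma(J,{\bf A},K)\right)J({\bf z}_{S^c})$ is sound, and the observation that ${\bf z}\neq{\bf 0}$ forces ${\bf z}_{S^c}\neq{\bf 0}$ (else the null-space inequality and positivity of $F$ kill ${\bf z}_S$ too) is exactly the point needed for strictness. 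The construction in (2) from a violating pair $({\bf z},S)$ is the standard one and is valid. In (3), the sandwich at a fixed top-$K$ set is correct, and the promotion to the constants works because a single top-$K$-by-magnitude set simultaneously attains the maximum of all three ratios over admissible $S$ (for $J$ this uses monotonicity of $F$, as you note), so the per-${\bf z}$ inequalities survive the maximization over $S$ and then the supremum over ${\bf z}$; you also correctly dispose of the degenerate case where $\mathcal{N}({\bf A})$ contains a nonzero vector with at most $K$ nonzero entries, in which all three constants are $+\infty$. The only step worth making explicit is that the ``smallest quantity'' in Definition~\ref{definition_NSP} is itself admissible (the set of admissible constants is closed under limits), since both (1) and (2) quietly use the inequality with $\gamma(J,{\bf A},K)$ itself; this is a one-line remark, not a gap.
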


Proposition~\ref{proposition_NSP}.1)-2) reveals that the null space constant is a tight quantity for the tuple $(J,{\bf A},K)$ to indicate the performance of $J$-minimization. Here the tightness is in the sense that $\gamma(J,{\bf A},K)<1$ implies all $K$-sparse signals are the unique solutions to $J$-minimization, while not all $K$-sparse signals satisfy this if $\gamma(J,{\bf A},K)>1$. Proposition~\ref{proposition_NSP}.3) indicates that for the tuple $({\bf A},K)$, if all $K$-sparse signals are the unique solutions to $\ell_1$-minimization, i.e. $\gamma(\ell_1,{\bf A},K)<1$, this also applies to $J$-minimization. Therefore, \emph{in the worst case sense} which takes over all $K$-sparse signals, the performance of $J$-minimization is at least as good as that of $\ell_1$-minimization.

\subsection{Weak Convexity}

The concept of weak convexity was proposed decades ago \cite{Janin}. A real valued function $F(\cdot)$ defined on a convex subset $S\subseteq\mathbb{R}$ is $\rho$-convex if there exists some real number $\rho$ which is the largest quantity such that the inequality
\begin{align*}
F(\lambda t_1\!+\!(1\!-\!\lambda)t_2)\!\le\!\lambda F(t_1)\!+\!(1\!-\!\lambda)F(t_2)\!-\!\rho\lambda(1\!-\!\lambda)(t_1\!-\!t_2)^2
\end{align*}
holds for any $t_1,t_2\in S$ and for any $\lambda\in[0,1]$. $\rho>0$, $\rho=0$ and $\rho<0$ correspond to strong convexity, convexity and weak convexity, respectively. The following proposition reveals that $F(\cdot)$ can be decomposed into the sum of a convex function and a square.

\begin{proposition}\label{proposition_1}
(Proposition 4.3 from \cite{rhoconvex}). Function $F:S\rightarrow\mathbb{R}$ is $\rho$-convex if and only if there exists a convex function $H:S\rightarrow\mathbb{R}$ such that $F(t)=H(t)+\rho t^2$ for all $t\in S$.
\end{proposition}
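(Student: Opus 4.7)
My plan is to reduce the proposition to a single algebraic identity and then read off both directions from it. The identity is
\[
\lambda t_1^2 + (1-\lambda)t_2^2 - \bigl(\lambda t_1 + (1-\lambda)t_2\bigr)^2 = \lambda(1-\lambda)(t_1-t_2)^2,
\]
which is nothing but the variance of the two-point distribution placing mass $\lambda,\,1-\lambda$ on $t_1,\,t_2$. It provides the exact bridge between ordinary convexity and the quadratic correction in the $\rho$-convexity inequality, so the whole proof is essentially this one substitution performed in both directions.

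Concretely I would define the candidate $H(t):=F(t)-\rho t^2$ and expand the Jensen gap
\[
\bigl[\lambda H(t_1) + (1-\lambda)H(t_2)\bigr] - H\bigl(\lambda t_1 + (1-\lambda)t_2\bigr)
\]
by substituting $H=F-\rho(\cdot)^2$ and collecting the quadratic terms via the identity above. A short computation shows that this Jensen gap equals
\[
\bigl[\lambda F(t_1) + (1-\lambda)F(t_2) - \rho\lambda(1-\lambda)(t_1-t_2)^2\bigr] - F\bigl(\lambda t_1 + (1-\lambda)t_2\bigr),
\]
so pointwise in $(t_1,t_2,\lambda)$ the convexity inequality for $H$ is equivalent to the $\rho$-convexity inequality for $F$. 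Both directions of the proposition then drop out at once: the ``only if'' direction uses the hypothesis on $F$ to conclude that the Jensen gap of $H$ is nonnegative, and the ``if'' direction reverses this.

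As for the clause in the definition asking $\rho$ to be the \emph{largest} constant satisfying the weak-convexity inequality, I would note that this maximality is automatically consistent with the decomposition, since any larger $\rho'>\rho$ would, by the same equivalence, require $F-\rho' t^2 = H-(\rho'-\rho)t^2$ to be convex, and maximality of $\rho$ is precisely the statement that this fails for every $\rho'>\rho$. The main obstacle is therefore purely bookkeeping, namely lining up the quadratic terms so that the identity cancels the right pieces; no deeper machinery (subdifferentials, second-order conditions, strict convexity arguments) is needed, because the argument lives entirely at the level of pointwise two-point inequalities.
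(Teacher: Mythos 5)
You should note first that the paper itself contains no proof of this proposition: it is imported verbatim as Proposition 4.3 of \cite{rhoconvex}, so there is no in-paper argument to compare yours against. The core of your proof is the standard one and it is correct. The identity $\lambda t_1^2+(1-\lambda)t_2^2-\bigl(\lambda t_1+(1-\lambda)t_2\bigr)^2=\lambda(1-\lambda)(t_1-t_2)^2$ shows that, with $H(t)=F(t)-\rho t^2$, the Jensen gap of $H$ at any triple $(t_1,t_2,\lambda)$ equals exactly the $\rho$-convexity gap of $F$ at the same triple, so convexity of $H$ is pointwise equivalent to the defining inequality for $F$, and both implications drop out at once; convexity of $S$ guarantees that $\lambda t_1+(1-\lambda)t_2\in S$, so every term is defined.

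The weak point is your treatment of the maximality clause, and here you should be careful: under this paper's literal definition (``$\rho$ is the \emph{largest} quantity such that the inequality holds''), the ``if'' direction of the proposition is actually false, not merely in need of bookkeeping. Take $F(t)=2t^2$ on $S=\mathbb{R}$: it decomposes as $H(t)+1\cdot t^2$ with $H(t)=t^2$ convex, yet the largest admissible constant for $F$ is $2$, so $F$ is $2$-convex, not $1$-convex, in the paper's sense. In general, existence of a convex $H$ with $F=H+\rho t^2$ is equivalent to $\rho\le\rho_{\max}$ rather than to $\rho=\rho_{\max}$, since if $F-\rho_{\max}t^2$ is convex then so is $F-\rho t^2=(F-\rho_{\max}t^2)+(\rho_{\max}-\rho)t^2$ for every $\rho<\rho_{\max}$. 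Your remark that maximality is ``automatically consistent with the decomposition'' only restates what maximality means; it does not derive maximality from the existence of the decomposition, and no argument can, by the counterexample above. The correct resolution is that the proposition is quoted from \cite{rhoconvex}, where $\rho$-convexity means only that the two-point inequality holds for that particular $\rho$ (no maximality requirement); under that reading your proof is complete, and the maximality clause in this paper's paraphrase of the definition should be viewed as a normalization of $\rho$ rather than as something the proof must establish.
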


According to Proposition~\ref{proposition_1}, weakly convex functions are also known as semi-convex functions \cite{Colesanti}. For any $t\in{\rm int}S$ which denotes the interior of $S$, define the directional derivative of a $\rho$-convex function $F(\cdot)$ as
\begin{align}
D_F(t;\nu)=\lim_{\theta\rightarrow0_+}\frac{F(t+\theta\nu)-F(t)}{\theta},
\end{align}
then the generalized gradient set \cite{Generalized} is defined as
\begin{align}\label{definition_gen}
\partial F(t)=\{f(t):\nu f(t)\le D_F(t;\nu),\ \forall\nu\in\mathbb{R}\}.
\end{align}
If $F(\cdot)$ is convex, $\partial F(\cdot)$ is commonly known as the subgradient set. The following proposition demonstrates an important property of $\rho$-convex functions which will be used in the theoretical analysis.

\begin{proposition}\label{proposition_3}
(Proposition 4.8 from \cite{rhoconvex}). Let $F(\cdot)$ be $\rho$-convex on $S$, then for any $t_1\in {\rm int}S$, $t_2\in S$, and for any $f(t_1)\in\partial F(t_1)$,
\begin{align}
F(t_2)\ge F(t_1)+f(t_1)(t_2-t_1)+\rho(t_2-t_1)^2.
\end{align}
\end{proposition}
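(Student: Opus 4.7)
The plan is to reduce the claim to the standard subgradient inequality for convex functions, using the decomposition from Proposition~\ref{proposition_1}: since $F$ is $\rho$-convex on $S$, one can write $F(t)=H(t)+\rho t^2$ with $H:S\rightarrow\mathbb{R}$ convex. The strategy is to translate between the generalized gradient of $F$ and the ordinary subgradient of $H$, apply the convex-case inequality for $H$, and recover the extra quadratic correction by unwinding the algebra.

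The first step is to identify $\partial F(t_1)=\partial H(t_1)+2\rho t_1$. A direct computation of the directional derivative yields $D_F(t_1;\nu)=D_H(t_1;\nu)+2\rho t_1\nu$, because the quadratic contribution $\rho(2t_1\theta\nu+\theta^2\nu^2)/\theta$ tends to $2\rho t_1\nu$ as $\theta\rightarrow 0_+$. Substituting this into the definition~(\ref{definition_gen}), the condition $\nu f(t_1)\le D_F(t_1;\nu)$ for every $\nu\in\mathbb{R}$ becomes $\nu(f(t_1)-2\rho t_1)\le D_H(t_1;\nu)$, which is precisely the statement that $f(t_1)-2\rho t_1\in\partial H(t_1)$.

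The second step is to invoke the classical subgradient inequality for the convex function $H$ at $t_1\in\mathrm{int}\,S$: setting $h(t_1):=f(t_1)-2\rho t_1\in\partial H(t_1)$, one has $H(t_2)\ge H(t_1)+h(t_1)(t_2-t_1)$. Substituting $H(t)=F(t)-\rho t^2$, this rearranges to
\begin{align*}
F(t_2)\ge F(t_1)+f(t_1)(t_2-t_1)+\rho(t_2^2-t_1^2)-2\rho t_1(t_2-t_1),
\end{align*}
and the remaining quadratic terms collapse to $\rho(t_2-t_1)^2$ by completing the square, which yields the desired inequality.

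I expect the main obstacle to be the first step: cleanly extracting from the definition~(\ref{definition_gen}) that the generalized gradient behaves additively as $\partial F(t_1)=\partial H(t_1)+2\rho t_1$. This hinges on the directional derivative of $\rho t^2$ being the genuine linear functional $\nu\mapsto 2\rho t_1\nu$, for which the hypothesis $t_1\in\mathrm{int}\,S$ is used so that the one-sided limit is well defined in both signs of $\nu$. Once this identification is in place, the remainder is a two-line convex-analysis calculation followed by routine algebra.
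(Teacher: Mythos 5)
Your argument is correct, but there is nothing in the paper to compare it against: the paper states this proposition without proof, importing it verbatim as Proposition 4.8 of Vial \cite{rhoconvex}. Taken as a self-contained substitute, your proof holds up. The decomposition $F(t)=H(t)+\rho t^2$ is exactly Proposition~\ref{proposition_1}; the identity $D_F(t_1;\nu)=D_H(t_1;\nu)+2\rho t_1\nu$ is valid because the quadratic part has a genuine two-sided derivative and the convex part has finite one-sided directional derivatives at the interior point $t_1$; under definition~(\ref{definition_gen}) this gives $f(t_1)-2\rho t_1\in\partial H(t_1)$; and the classical subgradient inequality for $H$ together with your completing-the-square computation (which checks out: $\rho(t_2^2-t_1^2)-2\rho t_1(t_2-t_1)=\rho(t_2-t_1)(t_2+t_1-2t_1)=\rho(t_2-t_1)^2$) yields the claim. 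It is worth noting a shorter route that avoids translating subgradients altogether: apply the defining inequality of $\rho$-convexity to $F((1-\lambda)t_1+\lambda t_2)$, subtract $F(t_1)$, divide by $\lambda>0$, and let $\lambda\rightarrow 0_+$ to obtain $D_F(t_1;t_2-t_1)\le F(t_2)-F(t_1)-\rho(t_2-t_1)^2$; combining this with $f(t_1)(t_2-t_1)\le D_F(t_1;t_2-t_1)$, which is immediate from (\ref{definition_gen}), gives the proposition in two lines. That route still requires the existence of the directional derivative $D_F(t_1;\cdot)$ at interior points, which is itself most easily justified by the very decomposition you use, so the two arguments are of comparable depth; yours has the advantage of making the relation $\partial F(t_1)=\partial H(t_1)+2\rho t_1$ explicit, which is independently useful.
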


\subsection{Related Work}

Before formally introducing the main results of our paper, some related state-of-the-art researches are introduced. Being aware of them might be of benefit in realizing the contributions of our paper.

Some recent theoretical progress has been made based on the projected subgradient method. In \cite{SMAP}, the inexact projections are adopted, but these projections require approaching the exact one in the course of the algorithm. Another approximate subgradient projection method is introduced in \cite{Kiwiel}. Rather than approximate projection, it considers approximate subgradient. The ZAP algorithm \cite{ZAP} is essentially a special case of the non-convex approach introduced in our paper. The literature \cite{l1ZAP} attempts to provide the convergence analysis of ZAP, yet the analysis is only for $\ell_1$-ZAP which uses the convex $\ell_1$ norm as the sparsity-inducing penalty. Despite this fact, it already contains some important ideas which are helpful in the theoretical analysis of our paper.

Since the introduction of the concept of weak convexity \cite{Janin}, a branch of researches has been focused on the duality and optimality conditions for weakly convex minimization problems \cite{Jeyakumar,DCoptimization,Wu}. These researches can be regarded as the extensions of those in convex optimization. They mainly consider the condition under which a point is the global minimizer of a weakly convex problem, which differs from the goal of our paper: providing convergence guarantees of an algorithm. In the area of sparse recovery, little attention has previously been paid to the concept of weak convexity. Our paper can be regarded as a pioneer work to introduce the concept of weak convexity to the field of compressive sensing and sparse recovery, and we believe that there is still much room for further research.

To verify the theoretical analysis in our paper, numerical simulations are implemented in the setting of random Gaussian sensing matrices. We have noticed that there is previous research characterizing the precise behavior of general penalization terms with Gaussian sensing matrices. One may read \cite{Bayati} for further reference.

\section{Main Contribution}\label{Sec_Main}

The main contributions of this paper are threefold. First, by combining the concept of sparseness measure with weak convexity, most commonly used sparsity-inducing penalties are characterized and some new results on the performance evaluation of $J$-minimization are derived. Second, a non-convex algorithm based on projected subgradient method is proposed to solve $J$-minimization with performance guarantees. Last but not the least, a uniform approximate projection is adopted in the proposed algorithm to save computational resources, and its performance guarantees as well as computational complexity analysis are provided. These contributions are demonstrated in the following subsections respectively.

\subsection{Performance Evaluation of $J$-minimization in the Noiseless Scenario}\label{subsec_perf}

Our work adopts weakly convex sparseness measure to constitute the sparsity-inducing penalty $J(\cdot)$ in (\ref{lFopt}). The definition of weakly convex sparseness measure is proposed as follows.

\begin{definition}\label{definition_weak_spar}
The weakly convex sparseness measure $F:\mathbb{R}\rightarrow\mathbb{R}$ satisfies
\begin{enumerate}
\item
$F(0)=0$, $F(\cdot)$ is even and not identically zero;
\item
$F(\cdot)$ is non-decreasing on $[0,+\infty)$;
\item
The function $t\mapsto F(t)/t$ is non-increasing on $(0,+\infty)$;
\item
$F(\cdot)$ is a weakly convex function on $[0,+\infty)$.
\end{enumerate}
\end{definition}

\begin{table}[t]
\renewcommand{\arraystretch}{1.5}
\caption{Weakly Convex Sparseness Measures with Parameter $\rho$\protect\\(Requirements: $0\le p<1$ and $\sigma>0$)}
\begin{center}
\begin{tabular}{ccc}
\toprule[1pt]
No. & $F(t)$ & $\rho$ \label{table constant}\\
\hline
1. & $|t|$ & 0\\
2. & $\frac{|t|}{(|t|+\sigma)^{1-p}}$ & $(p-1)\sigma^{p-2}$\\
3. & $1-{\rm e}^{-\sigma|t|}$ & $-\sigma^2/2$\\
4. & $\ln(1+\sigma|t|)$ & $-\sigma^2/2$\\
5. & ${\rm atan}(\sigma|t|)$ & $ -3\sqrt{3}\sigma^2/16$\\
6. & $(2\sigma |t|-{\sigma^2}t^2)\mathcal{X}_{|t|\le\frac{1}{\sigma}}+\mathcal{X}_{|t|>\frac{1}{\sigma}}$ & $-\sigma^2$\\
\bottomrule[1pt]
\end{tabular}
\end{center}
\end{table}

\begin{figure}[t]
\begin{center}
\includegraphics[width=4in]{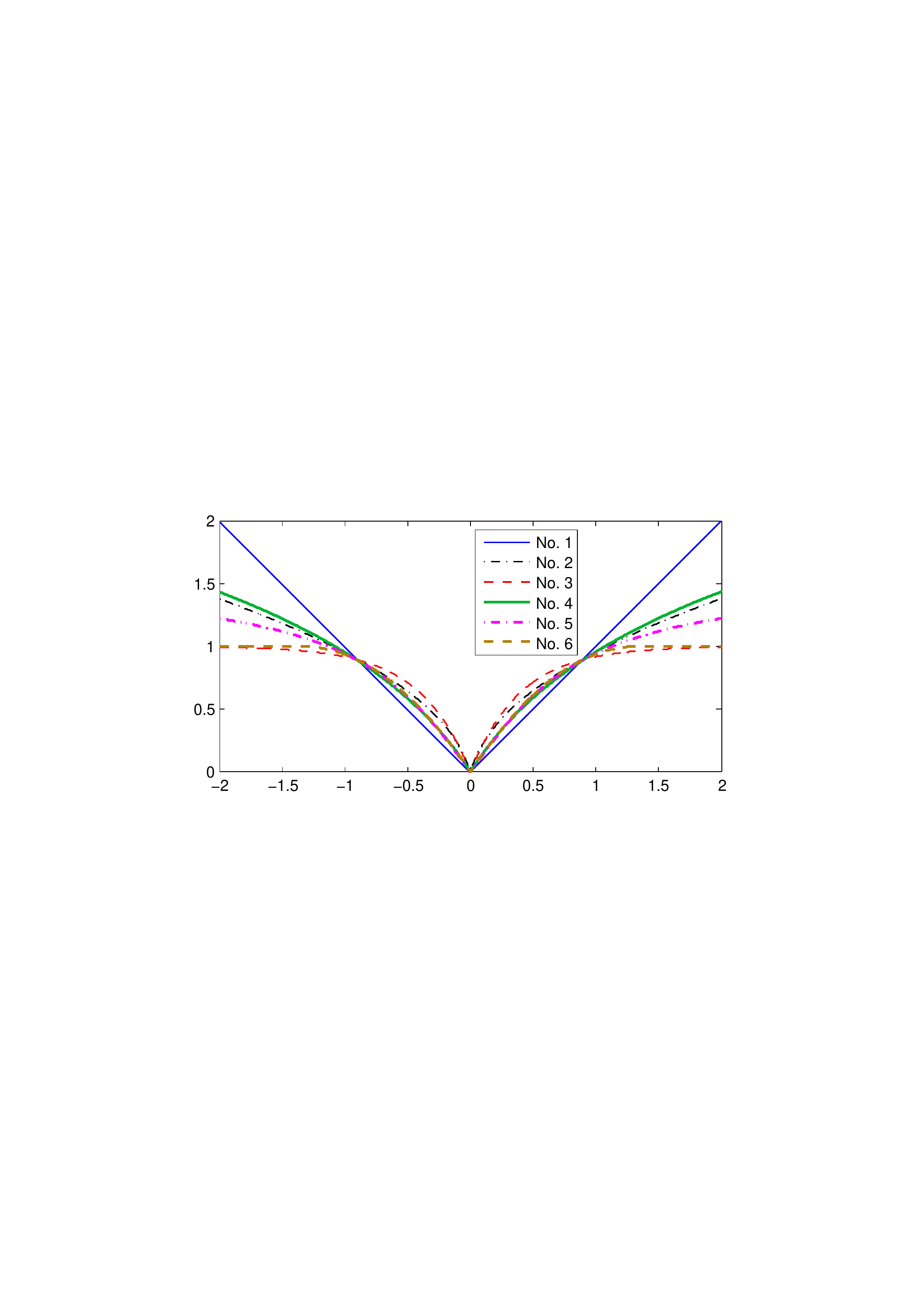}
\caption{The weakly convex sparseness measures listed in TABLE~\ref{table constant} are plotted. The parameter $p$ is set to $0.5$. The parameter $\sigma$ is set respectively so that they all contain the point $(0.9,0.9)$.}\label{picsparsemeasure}
\end{center}
\end{figure}

Definition~\ref{definition_weak_spar} is essentially a combination of the concepts of sparseness measure and weak convexity. Most commonly used non-convex penalties are formed by weakly convex sparseness measures. For instance, those penalties in \cite{RL1,ZAP,lqmin,Trzasko} are listed in TABLE~\ref{table constant} and plotted in Fig.~\ref{picsparsemeasure}, where $\mathcal{X}_P$ denotes the indicator function
\begin{align*}
\mathcal{X}_P=\left\{
\begin{array}{cl} 1 & P\textrm{ is true}; \\
0 & P\textrm{ is false}. \end{array}\right .
\end{align*}

It needs to be emphasized that the widely used $\ell_p$ ``norm'' $(0\le p<1)$ in the literatures of sparse recovery \cite{nonconvexopt,lqmin} does not belong to the class of sparsity-inducing penalties considered in this paper. This is due to the fact that the function
\begin{align}\label{lpfunction}
L_p(t)=|t|^p,\quad p\in[0,1)
\end{align}
goes against Definition~\ref{definition_weak_spar}.4), i.e., the requirement of weak convexity. However, approximations to (\ref{lpfunction}) are usually introduced to avoid infinite derivative around zero point and to improve the robustness. For example, in \cite{lqmin}, the function (\ref{lpfunction}) is approximated by
\begin{align*}
F(t)=\frac{|t|}{(|t|+\sigma)^{1-p}},\quad p\in[0,1),\sigma>0.
\end{align*}
This approximation satisfies Definition~\ref{definition_weak_spar}, and its parameter $\rho$ is shown in TABLE~\ref{table constant}. It hints that the requirement of weak convexity is reasonable and is an implicit assumption when robust algorithms or theoretical analysis is taken into consideration, which indicates the necessity of the introduction of weak convexity in this paper.

When the concept of sparseness measure meets weak convexity, some good properties show up.

\begin{lemma}\label{lemmapre}
The weakly convex sparseness measure $F(\cdot)$ satisfies the following properties:
\begin{enumerate}
\item
$F(\cdot)$ is continuous and there exists $\alpha>0$ such that $F(t)\le\alpha|t|$ holds for all $t\in\mathbb{R}$;
\item
For any constant $\beta>0$, $F(\beta t)$ is also a weakly convex sparseness measure, and its corresponding parameters are $\rho_{\beta}=\beta^2\rho$ and $\alpha_{\beta}=\beta\alpha$.
\end{enumerate}
\end{lemma}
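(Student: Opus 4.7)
For part 1, my plan is to invoke Proposition~\ref{proposition_1} to decompose $F(t)=H(t)+\rho t^2$ on $[0,+\infty)$ with $H$ convex and $H(0)=F(0)=0$. Continuity of $F$ on $(0,+\infty)$ is then immediate from continuity of convex functions on the interior of their domain, together with continuity of $t\mapsto\rho t^2$; continuity on $(-\infty,0)$ follows from evenness. The only genuine issue is continuity at $0$, which I will handle together with the linear bound.

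For the bound $F(t)\le\alpha|t|$, the key observation is that $F(t)/t$ is non-increasing on $(0,+\infty)$, so $\alpha := \sup_{t>0}F(t)/t = \lim_{t\to 0^+}F(t)/t$. What I must rule out is $\alpha=+\infty$. Writing $F(t)/t = H(t)/t + \rho t$, a standard convexity argument (from $H(0)=0$ and convexity, the chord slopes $H(t)/t$ are non-decreasing in $t>0$) gives $\lim_{t\to 0^+}H(t)/t = \inf_{t>0}H(t)/t \le H(1)<\infty$. Hence $\alpha=H'_+(0)$ is finite. Since $F$ is not identically zero and $F\ge 0$ on $[0,+\infty)$ (as $F$ is non-decreasing with $F(0)=0$), we have $\alpha>0$. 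Evenness then gives $F(t)\le\alpha|t|$ for all $t\in\mathbb{R}$. Finally, this bound $0\le F(t)\le\alpha t$ for $t>0$, combined with $F(0)=0$ and evenness, yields continuity at $0$, completing part 1. The main obstacle, as usual with this kind of lemma, is making sure $\alpha$ is finite, and that is precisely where weak convexity (not merely Definition~\ref{definition_spar}) is essential -- without it, $|t|^p$ would be a counter-example.

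For part 2, let $G(t):=F(\beta t)$. I will verify the four requirements of Definition~\ref{definition_weak_spar} directly. Conditions 1--2 are immediate from the corresponding properties of $F$ and $\beta>0$. Condition 3 follows from $G(t)/t = \beta\cdot F(\beta t)/(\beta t)$ and the substitution $s=\beta t$. For condition 4, using the decomposition $F=H+\rho t^2$ I obtain
\begin{align*}
G(t) = H(\beta t) + \rho(\beta t)^2 = \tilde H(t) + \beta^2\rho\, t^2,
\end{align*}
where $\tilde H(t):=H(\beta t)$ is convex as the composition of a convex function with a linear one. To confirm that $\beta^2\rho$ is the largest such constant (so that the $\rho$-parameter of $G$ is exactly $\beta^2\rho$), I will substitute $s_i=\beta t_i$ in the weak-convexity inequality for $G$ with a candidate constant $\rho'$; this rearranges to the weak-convexity inequality for $F$ with constant $\rho'/\beta^2$, showing that the two largest constants are related by exactly this factor.

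For the linear bound, $G(t) = F(\beta t) \le \alpha|\beta t| = (\beta\alpha)|t|$, so $\alpha_\beta = \beta\alpha$ works. This is essentially a one-line check once part 1 is in hand; I expect no obstacles here, and the whole of part 2 is routine bookkeeping compared to the finiteness argument in part 1.
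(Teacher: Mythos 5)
Your proposal is correct and follows essentially the same route as the paper's proof: both rest on the decomposition $F(t)=H(t)+\rho t^2$ from Proposition~\ref{proposition_1}, identify $\alpha$ with $\lim_{t\to 0^+}F(t)/t=\lim_{t\to 0^+}H(t)/t$ and obtain its finiteness from the monotonicity of convex chord slopes, and handle part 2 by writing $F(\beta t)=H(\beta t)+\beta^2\rho\,t^2$ with $H(\beta t)$ convex. You in fact fill in details the paper leaves implicit (continuity at the origin, positivity of $\alpha$, and the maximality check that the weak-convexity parameter of $F(\beta t)$ is exactly $\beta^2\rho$), so there is nothing to correct.
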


\begin{proof}
The proof is postponed to Section~\ref{prooflemmapre}.
\end{proof}

Besides $\rho$, the parameter $\alpha$ also plays an important role in characterizing the non-convexity of sparsity-inducing penalty $J(\cdot)$. Recalling $J$-minimization (\ref{lFopt}), its performance remains the same for any positive scaled version of the penalty $J(\cdot)$. Since the parameters of $\beta F(t)$ are $\rho^{\beta}=\beta\rho$ and $\alpha^{\beta}=\beta\alpha$ for $\beta>0$, we let $-\rho/\alpha$ characterize the non-convexity, where $-\rho$ divided by $\alpha$ is to remove the scaling effect on the penalty. The No.~6 weakly convex sparseness measure in TABLE~\ref{table constant} is plotted in Fig.~\ref{picsnonconvexity} with the same $\alpha=2$ but different non-convexity. As can be seen, non-convexity can be regarded as a measure of how quickly the generalized gradient of $F(\cdot)$ decreases. Lemma~\ref{lemmapre}.2) implies that the non-convexity of $J(\beta{\bf x})$ is
\begin{align}\label{nonconvexity}
\frac{-\rho_{\beta}}{\alpha_{\beta}}=\beta\frac{-\rho}{\alpha}
\end{align}
for $\beta>0$. This reveals that by choosing an appropriate $\beta$, we can always generate a sparsity-inducing penalty with any desired non-convexity.

\begin{figure}[t]
\begin{center}
\includegraphics[width=4in]{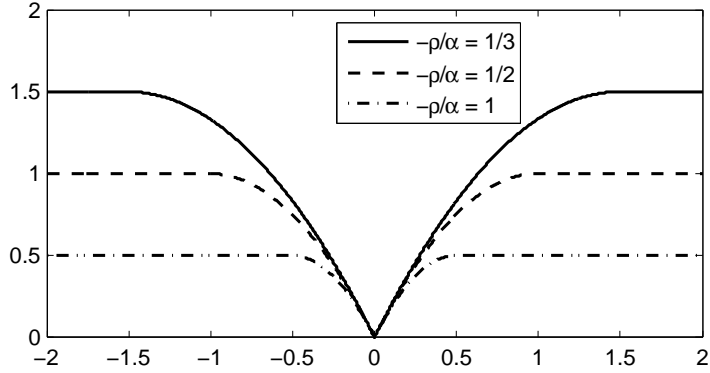}
\caption{The No.~6 weakly convex sparseness measure in TABLE~\ref{table constant} is plotted with different non-convexity. The parameter $\alpha$ is set to $2$.}\label{picsnonconvexity}
\end{center}
\end{figure}

The following theorem evaluates the performance of $J$-minimization for tuple $(J,{\bf A},{\bf x})$ under certain circumstances.

\begin{theorem}\label{theorem_l0}
Assume the tuple $({\bf A},K)$ satisfies $\gamma(\ell_0,{\bf A},K)<1$ and the vector ${\bf x}^*$ satisfies $\|{\bf x}^*\|_0\le K$. For any penalty $J(\cdot)$ formed by $F(\cdot)$ satisfying Definition~\ref{definition_weak_spar} and that $F(\cdot)$ is bounded, the global optimum $\hat{\bf x}^{\beta}$ of the problem
\begin{align}\label{opthl0}
\underset{\bf x}{\operatorname{argmin}}J(\beta{\bf x})\ \ \operatorname{subject\ to}\ \ {\bf Ax}={\bf Ax}^*
\end{align}
satisfies
\begin{align*}
\lim_{\beta\rightarrow+\infty}\|\hat{\bf x}^{\beta}-{\bf x}^*\|_2=0.
\end{align*}
\end{theorem}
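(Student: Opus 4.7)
The plan is to argue by contradiction, using that since $F$ is bounded and non-decreasing on $[0,\infty)$ with $F(0)=0$, the constant $M:=\sup_{t\ge 0}F(t)<\infty$ satisfies $F(\beta t)\nearrow M$ for every fixed $t\neq 0$ as $\beta\to\infty$, so $J(\beta\cdot)$ becomes an arbitrarily sharp proxy for $M\|\cdot\|_0$. Under $\gamma(\ell_0,{\bf A},K)<1$, Proposition~\ref{proposition_NSP}.1 applied with $J=\ell_0$ (which satisfies Definition~\ref{definition_spar}) says that ${\bf x}^*$ is the unique $K$-sparse point of $\{{\bf x}:{\bf A}{\bf x}={\bf A}{\bf x}^*\}$ and that ${\bf 0}$ is the only $K$-sparse element of $\mathcal{N}({\bf A})$; the goal is to force $\hat{\bf x}^{\beta_k}$ into such a $K$-sparse set.

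Suppose toward a contradiction that $\|\hat{\bf x}^{\beta_k}-{\bf x}^*\|_2\ge\delta>0$ along some $\beta_k\to\infty$. Optimality of $\hat{\bf x}^{\beta_k}$ and feasibility of ${\bf x}^*$, combined with $F\le M$, yield
\begin{align*}
J(\beta_k\hat{\bf x}^{\beta_k})\le J(\beta_k{\bf x}^*)\le\|{\bf x}^*\|_0\,M\le KM,
\end{align*}
and for any fixed $\eta>0$ the monotonicity and evenness of $F$ give $\#\{i:|\hat{x}_i^{\beta_k}|>\eta\}\cdot F(\beta_k\eta)\le KM$; since $F(\beta_k\eta)\to M$, this produces the key \emph{approximate $K$-sparsity} estimate $\#\{i:|\hat{x}_i^{\beta_k}|>\eta\}\le K$ for all $k$ sufficiently large.

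Next I would rule out the unbounded case by normalization: if $\|\hat{\bf x}^{\beta_k}\|_\infty\to\infty$ along a sub-subsequence, set ${\bf v}_k:=\hat{\bf x}^{\beta_k}/\|\hat{\bf x}^{\beta_k}\|_\infty$ and extract a further limit ${\bf v}^\infty$ with $\|{\bf v}^\infty\|_\infty=1$; since ${\bf A}{\bf v}_k={\bf A}{\bf x}^*/\|\hat{\bf x}^{\beta_k}\|_\infty\to{\bf 0}$, one has ${\bf v}^\infty\in\mathcal{N}({\bf A})\setminus\{{\bf 0}\}$, and for every coordinate $i$ with $v_i^\infty\neq 0$ one has $|\hat{x}_i^{\beta_k}|\ge\tfrac12|v_i^\infty|\|\hat{\bf x}^{\beta_k}\|_\infty$ exceeding every fixed threshold, so the approximate $K$-sparsity bound forces $\|{\bf v}^\infty\|_0\le K$, whence Proposition~\ref{proposition_NSP}.1 gives ${\bf v}^\infty={\bf 0}$, a contradiction. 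Hence $\hat{\bf x}^{\beta_k}$ is bounded; a further convergent sub-subsequence $\hat{\bf x}^{\beta_{k_j}}\to{\bf x}^\infty$ then satisfies ${\bf A}{\bf x}^\infty={\bf A}{\bf x}^*$, passing the approximate $K$-sparsity bound to the limit yields $\|{\bf x}^\infty\|_0\le K$, and Proposition~\ref{proposition_NSP}.1 identifies ${\bf x}^\infty={\bf x}^*$, contradicting $\|\hat{\bf x}^{\beta_{k_j}}-{\bf x}^*\|_2\ge\delta$.

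The chief obstacle is the unbounded case: boundedness of $F$ provides no coercivity of $J(\beta\cdot)$, so $\|\hat{\bf x}^\beta\|_\infty$ is not controlled \emph{a priori}, and one must verify that the effective-sparsity estimate survives the rescaling that extracts a nontrivial $K$-sparse null-space vector. A minor secondary point is the existence of $\hat{\bf x}^\beta$ for each $\beta$, which follows from the same NSP-based boundedness argument applied to any minimizing sequence.
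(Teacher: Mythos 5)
Your proposal is correct, and its first half coincides with the paper's argument: both use optimality of $\hat{\bf x}^{\beta}$ against the feasible point ${\bf x}^*$, boundedness and monotonicity of $F$, and $F(\beta\eta)\rightarrow\sup F$ to conclude that at most $K$ entries of $\hat{\bf x}^{\beta}$ exceed any fixed threshold once $\beta$ is large. Where you genuinely diverge is in converting this approximate sparsity into $\|\hat{\bf x}^{\beta}-{\bf x}^*\|_2\rightarrow0$: you argue by contradiction with compactness (a normalization step to exclude $\|\hat{\bf x}^{\beta_k}\|_\infty\rightarrow\infty$ by producing a nonzero $K$-sparse null vector, then a convergent subsequence whose limit is $K$-sparse and feasible, hence equal to ${\bf x}^*$ by uniqueness). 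The paper never needs boundedness or subsequences: writing ${\bf z}^{\beta}=\hat{\bf x}^{\beta}-{\bf x}^*$, it observes that at most $2K$ entries of ${\bf z}^{\beta}$ exceed $\delta$, and exploits the fact that $\gamma(\ell_0,{\bf A},K)<1$ forces every $2K$-column submatrix of ${\bf A}$ to have smallest singular value at least $D^{-1}>0$; since ${\bf A}{\bf z}^{\beta}={\bf 0}$, the large-entry part is controlled by the small-entry part, $\|{\bf z}^{\beta}_{I^{\beta}}\|_2\le D\|{\bf A}\|_2\sqrt{N}\delta$, yielding the explicit bound $\|{\bf z}^{\beta}\|_2\le(D\|{\bf A}\|_2+1)\sqrt{N}\delta$. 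The paper's route thus buys a quantitative $\varepsilon$--$\beta_0$ relation (an effective threshold, modulo how fast $F(\beta\delta)$ approaches its supremum) and disposes of the unboundedness worry automatically, whereas your route is softer and purely qualitative but makes the role of the null space property (no nonzero $K$-sparse null vectors, uniqueness of $K$-sparse feasible points) more transparent. One caveat on your closing remark: for \emph{fixed} $\beta$ the approximate-$K$-sparsity estimate is unavailable, since $F(\beta\eta)$ need not be near $\sup F$, so existence of $\hat{\bf x}^{\beta}$ does not follow from ``the same argument'' verbatim; instead one can note that along any unbounded minimizing sequence every coordinate in the support of the normalized limit direction contributes $\sup F$ to the cost, and nonzero null vectors have at least $2K+1$ nonzero entries, so the limiting cost exceeds $K\sup F\ge J(\beta{\bf x}^*)$, a contradiction. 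In any case the theorem (and the paper's proof) presupposes the global optimum exists, so this is not a gap relative to the paper.
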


\begin{proof}
The proof is postponed to Section~\ref{proofthl0}.
\end{proof}

Since the non-convexity of $J(\beta{\bf x})$ is (\ref{nonconvexity}), Theorem~\ref{theorem_l0} reveals that for a fixed sparse signal ${\bf x}^*$, the performance of $J$-minimization is close to that of $\ell_0$-minimization when the corresponding weakly convex sparseness measure is bounded and its non-convexity is large enough. One may notice that the condition in Theorem~\ref{theorem_l0} is $\gamma(\ell_0,{\bf A},K)<1$ rather than $\gamma(J,{\bf A},K)<1$ or $\gamma(\ell_1,{\bf A},K)<1$. As a matter of fact, $\gamma(\ell_0,{\bf A},K)<1$ is equivalent to the requirement of $M\ge2K+1$ and that any $2K$ column vectors of $\bf A$ are linearly independent, which is a much weaker condition than $\gamma(\ell_1,{\bf A},K)<1$. Therefore, for some $K$-sparse signals, they cannot be recovered by $\ell_1$-minimization, but can be recovered by $J$-minimization as shown in Theorem~\ref{theorem_l0}.

Recalling that the null space constant is a tight quantity for tuple $(J,{\bf A},K)$, a result on the performance of $J$-minimization is further derived from another perspective of view.

\begin{theorem}\label{theorem_NSP}
For any penalty $J(\cdot)$ formed by weakly convex sparseness measure $F(\cdot)$ satisfying Definition~\ref{definition_weak_spar}, the null space constant satisfies
\begin{align}
\gamma(J,{\bf A},K)=\gamma(\ell_1,{\bf A},K).
\end{align}
\end{theorem}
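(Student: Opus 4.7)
The plan is to prove the two inequalities $\gamma(J,\mathbf{A},K) \leq \gamma(\ell_1,\mathbf{A},K)$ and $\gamma(J,\mathbf{A},K) \geq \gamma(\ell_1,\mathbf{A},K)$ separately. The first is already Proposition~\ref{proposition_NSP}.3, since a weakly convex sparseness measure is a fortiori a sparseness measure; so only the reverse inequality requires work.

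For the reverse inequality I would exploit the scale invariance of the null space: if $\mathbf{z}\in\mathcal{N}(\mathbf{A})$, then $\beta\mathbf{z}\in\mathcal{N}(\mathbf{A})$ for every $\beta>0$. Fix any set $S$ with $\#S\le K$ and any $\mathbf{z}\in\mathcal{N}(\mathbf{A})$ with $\mathbf{z}_{S^c}\neq\mathbf{0}$. The defining inequality for $\gamma(J,\mathbf{A},K)$, applied to $\beta\mathbf{z}$, reads
\begin{equation*}
\sum_{i\in S} F(\beta|z_i|) \;\le\; \gamma(J,\mathbf{A},K)\sum_{i\in S^c} F(\beta|z_i|).
\end{equation*}
Dividing both sides by $\beta$ and letting $\beta\to 0^+$, I would use the factorization $F(\beta|z_i|)/\beta = |z_i|\cdot F(\beta|z_i|)/(\beta|z_i|)$ and pass to the limit in each of the finitely many terms on either side.

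The pivotal ingredient is that $L := \lim_{t\to 0^+} F(t)/t$ exists in $(0,+\infty)$. Existence, with $L = \sup_{t>0} F(t)/t$, is immediate because Definition~\ref{definition_weak_spar}.3 says $F(t)/t$ is non-increasing on $(0,\infty)$. Finiteness is exactly where weak convexity enters: by Lemma~\ref{lemmapre}.1, $F(t)\le\alpha t$ for $t\ge 0$, so $L\le\alpha<\infty$. Positivity follows because $F$ is non-decreasing and not identically zero, giving $F(t_0)>0$ for some $t_0>0$, whence $L\ge F(t_0)/t_0>0$.

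With $L\in(0,\infty)$ in hand, passing to the limit yields $L\|\mathbf{z}_S\|_1 \le \gamma(J,\mathbf{A},K)\cdot L\|\mathbf{z}_{S^c}\|_1$, i.e., $\|\mathbf{z}_S\|_1/\|\mathbf{z}_{S^c}\|_1\le\gamma(J,\mathbf{A},K)$. Taking the supremum over all admissible $(S,\mathbf{z})$ yields $\gamma(\ell_1,\mathbf{A},K)\le\gamma(J,\mathbf{A},K)$; the degenerate case $\mathbf{z}_{S^c}=\mathbf{0}$ with $\mathbf{z}_S\neq\mathbf{0}$ forces both constants to be $+\infty$ and is trivial. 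I expect the main obstacle to be precisely the finiteness of $L$: without the bound $F(t)\le\alpha|t|$ (itself a direct consequence of the weak-convexity axiom, via Lemma~\ref{lemmapre}), pathological choices such as $F(t)=\sqrt{|t|}$ would give $L=+\infty$ and the limiting ratio would degenerate. Thus the theorem crucially depends on clause~4 of Definition~\ref{definition_weak_spar}.
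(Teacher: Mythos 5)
Your proposal is correct and rests on essentially the same mechanism as the paper's own proof: both exploit the scale invariance of $\mathcal{N}({\bf A})$ and the fact that $F(\beta t)/\beta\rightarrow Lt$ as $\beta\rightarrow0^+$ with $L\in(0,+\infty)$, so that $J(\beta\,\cdot\,)$ reproduces the $\ell_1$ norm in the small-$\beta$ limit. The only difference is packaging: the paper argues by contradiction, fixing a near-optimal null-space witness for $\gamma(\ell_1,{\bf A},K)$ and managing $\delta$-margins, whereas you pass the defining inequality to the limit directly for every admissible pair $(S,{\bf z})$ and then take a supremum; your version also makes explicit the finiteness and positivity of $L$ (via Lemma~\ref{lemmapre}.1 and the non-vanishing of $F$), a point the paper's limit step uses tacitly.
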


\begin{proof}
The proof is postponed to Section~\ref{proofthNSP}.
\end{proof}

According to Theorem~\ref{theorem_NSP}, for any tuple $({\bf A},K)$ and penalty $J(\cdot)$ formed by weakly convex sparseness measure, the performance of $J$-minimization is the same as that of $\ell_1$-minimization in the worst case sense. It needs to be noted that, although the performance comparison between $J$-minimization and $\ell_1$-minimization for any tuple $({\bf A},{\bf x})$ is still unclear in our work, some important related works have also run into the same situation. In \cite{nonconvexopt,Saab}, it is shown that for tuple $({\bf A},K)$, the condition under which $\ell_p$-minimization $(0<p<1)$ is guaranteed to find all $K$-sparse signals is weaker than that of $\ell_1$-minimization, and this is also the worst case analysis. We do believe that the performance comparison between the non-convex optimization and $\ell_1$-minimization for tuple $({\bf A},{\bf x})$ is worthy of further study, as it is the key point to all the literatures introducing non-convex techniques to sparse recovery \cite{FOCUSS,IRLS,RL1,SL0,DCAlgorithm,ISL0,ZAP}. So far, we speculate that
\begin{quote}
\emph{For tuple $({\bf A},{\bf x}^*)$, as $\beta$ increases from zero to positive infinity, the performance of (\ref{opthl0}) would gradually improve from $\ell_1$-minimization to some optimization problems with better performance, say $\ell_0$-minimization.}
\end{quote}
We leave this as a possible future work and it is readdressed in the conclusion of this paper.

\subsection{Projected Generalized Gradient Method in the Noisy Scenario}

Borrowing the idea of projected subgradient method, we propose a non-convex algorithm to solve the $J$-minimization problem. Mathematically, initialized as the pseudo-inverse solution ${\bf x}(0)={\bf A}^{\dagger}{\bf y}$ where ${\bf A}^{\dagger}={\bf A}^{\rm T}({\bf AA}^{\rm T})^{-1}$ denotes the pseudo-inverse matrix of $\bf A$, the iterative solution ${\bf x}(n)$ obeys
\begin{align}
\tilde{\bf x}(n+1)&={\bf x}(n)-\kappa\nabla J({\bf x}(n)),\label{itera1}\\
{\bf x}(n+1)&=\tilde{\bf x}(n+1)+{\bf A}^{\dagger}({\bf y}-{\bf A}\tilde{\bf x}(n+1)),\label{itera2}
\end{align}
where $\kappa>0$ denotes the step size and $\nabla J({\bf x})$ is a column vector whose $i$th element is $f(x_i)\in\partial F(x_i)$ which denotes the generalized gradient set of $F(\cdot)$ at $x_i$. Since the generalized gradient is adopted to update the iterative solutions, this method is termed projected generalized gradient (PGG) method in this paper. The procedure of PGG is described in TABLE~\ref{PGGmethod}. The algorithm stops when the iteration number exceeds a certain bound.

\begin{table}[t]
\renewcommand{\arraystretch}{1.5}
\caption{The Procedure of the PGG Method}
\begin{center}
\begin{tabular}{l}
\toprule[1pt]
{\bf Input:} \hspace{0.5em} ${\bf A}$, ${\bf y}$, step size $\kappa$, stopping criterion; \label{PGGmethod}\\
{\bf Output:} \hspace{0.5em} ${\bf x}(n)$.\\
\hline
{\bf Initialization:} \hspace{0.5em} Calculate ${\bf A}^{\dagger}$, ${\bf x}(0)={\bf A}^{\dagger}{\bf y}$, $n=0$;\\
{\bf Repeat:}\\
\hspace{1.5em} Generalized gradient step:\\
\hspace{3.5em} Update iterative solution by (\ref{itera1});\\
\hspace{1.5em} Projection step:\\
\hspace{3.5em} Update iterative solution by (\ref{itera2});\\
\hspace{1.5em} Iteration number increases by one:\\
\hspace{3.5em} $n=n+1$;\\
{\bf Until:} \hspace{0.5em} Stopping criterion satisfied;\\
\bottomrule[1pt]
\end{tabular}
\end{center}
\end{table}

In the remaining content of this subsection, we consider the performance of PGG in the noisy scenario ${\bf y}={\bf Ax}^*+{\bf e}$ where ${\bf x}^* $ is the $K$-sparse signal to be recovered and $\bf e$ is the additive noise to the measurement vector. Define $\sigma_{\min}({\bf A})$ as the smallest nonzero singular value of ${\bf A}$. The following theorem reveals the performance of PGG in the noisy scenario.

\begin{theorem}\label{theorem_PGG}
(Performance of PGG). For any tuple $(J,{\bf A},K)$ with $J(\cdot)$ formed by weakly convex sparseness measure $F(\cdot)$ and $\gamma(J,{\bf A},K)<1$, and for any positive constant $M_0$, if the non-convexity of $J(\cdot)$ satisfies
\begin{align}\label{theoremfor4}
\frac{-\rho}{\alpha}\le\frac{1}{M_0}\frac{1-\gamma(J,{\bf A},K)}{5+3\gamma(J,{\bf A},K)},
\end{align}
the recovered solution $\hat{\bf x}$ by PGG satisfies
\begin{align}
\|\hat{\bf x}-{\bf x}^*\|_2\le \frac{4\alpha^2N}{C_1}\kappa+8C_2\|{\bf e}\|_2
\end{align}
provided that $\|{\bf x}^*\|_0\le K$ and $\|{\bf x}(0)-{\bf x}^*\|_2\le M_0$, where
\begin{align}
C_1=&\frac{F(M_0)}{M_0}\frac{1-\gamma(J,{\bf A},K)}{1+\gamma(J,{\bf A},K)},\label{lemmafor4}\\
C_2=&\frac{\alpha\sqrt{N}+C_1}{C_1\sigma_{\min}({\bf A})}.\label{lemmafor5}
\end{align}
\end{theorem}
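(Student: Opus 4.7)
The plan is to run the standard projected-(sub)gradient energy argument, adapted to accommodate weak rather than full convexity. Let ${\bf r}_n={\bf x}(n)-{\bf x}^*$. Because of (\ref{itera2}) one checks directly that ${\bf A}{\bf x}(n)={\bf y}$ for every $n$, so ${\bf A}{\bf r}_n={\bf e}$. Writing ${\bf w}:={\bf A}^{\dagger}{\bf e}$ and ${\bf z}_n:={\bf r}_n-{\bf w}$, one has ${\bf z}_n\in\mathcal{N}({\bf A})$ and $\|{\bf w}\|_2\le\|{\bf e}\|_2/\sigma_{\min}({\bf A})$, and the two-step iteration collapses to ${\bf z}_{n+1}={\bf z}_n-\kappa P\nabla J({\bf x}(n))$, where $P={\bf I}-{\bf A}^{\dagger}{\bf A}$ is the orthogonal projector onto $\mathcal{N}({\bf A})$. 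Expanding $\|{\bf z}_{n+1}\|_2^2$ and using the coordinate-wise bound $|f(x_i(n))|\le\alpha$ (which follows from Lemma~\ref{lemmapre}.1 together with Proposition~\ref{proposition_3}) produces
\[
\|{\bf z}_{n+1}\|_2^2\le\|{\bf z}_n\|_2^2-2\kappa\langle{\bf z}_n,\nabla J({\bf x}(n))\rangle+\kappa^2\alpha^2 N.
\]

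The main creative step is to turn this inner product into a genuine progress measure. Splitting ${\bf z}_n={\bf r}_n-{\bf w}$ and applying Proposition~\ref{proposition_3} coordinate-wise yields $\langle{\bf r}_n,\nabla J({\bf x}(n))\rangle\ge J({\bf x}(n))-J({\bf x}^*)+\rho\|{\bf r}_n\|_2^2$, while Cauchy--Schwarz controls the cross term against ${\bf w}$ by $\alpha\sqrt{N}\,\|{\bf e}\|_2/\sigma_{\min}({\bf A})$. To lower-bound $J({\bf x}(n))-J({\bf x}^*)$, I use the subadditivity of $F$ on $[0,\infty)$ (a consequence of Definition~\ref{definition_weak_spar}.3) with $S:={\rm supp}({\bf x}^*)$ to obtain
\[
J({\bf x}(n))-J({\bf x}^*)\ge J({\bf z}_{n,S^c})-J({\bf z}_{n,S})-J({\bf w}).
\]
Since ${\bf z}_n\in\mathcal{N}({\bf A})$ and $|S|\le K$, the null space property then gives $J({\bf z}_{n,S})\le\gamma J({\bf z}_{n,S^c})$ with $\gamma:=\gamma(J,{\bf A},K)<1$, and hence $J({\bf x}(n))-J({\bf x}^*)\ge\frac{1-\gamma}{1+\gamma}J({\bf z}_n)-J({\bf w})$. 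Finally, the monotonicity of $t\mapsto F(t)/t$ converts $J$ into an $\ell_2$-norm: as long as $\|{\bf z}_n\|_\infty\le M_0$,
\[
J({\bf z}_n)\ge\frac{F(M_0)}{M_0}\|{\bf z}_n\|_1\ge\frac{F(M_0)}{M_0}\|{\bf z}_n\|_2.
\]

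Assembling these pieces produces a master recursion of the form
\[
\|{\bf z}_{n+1}\|_2^2\le\|{\bf z}_n\|_2^2-2\kappa C_1\|{\bf z}_n\|_2-2\kappa\rho\|{\bf r}_n\|_2^2+\text{(noise)}+\kappa^2\alpha^2 N.
\]
The only obstacle not already present in the classical convex analysis is the sign of the term $-2\kappa\rho\|{\bf r}_n\|_2^2$, which is \emph{nonnegative} (because $\rho\le 0$) and therefore works against a decrease in $\|{\bf z}_n\|_2$. Here the hypothesis (\ref{theoremfor4}) plays its essential role: assuming inductively that $\|{\bf r}_n\|_2\le M_0$ and using $\|{\bf r}_n\|_2\le\|{\bf z}_n\|_2+\|{\bf w}\|_2$, the bound $-\rho/\alpha\le\frac{1}{M_0}\frac{1-\gamma}{5+3\gamma}$ is precisely the threshold needed to absorb $2\kappa(-\rho)M_0\|{\bf r}_n\|_2$ into a fraction of the decrease term $2\kappa C_1\|{\bf z}_n\|_2$, leaving a strictly negative coefficient on $\|{\bf z}_n\|_2$. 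A short induction then maintains $\|{\bf r}_n\|_2\le M_0$ for every $n$, and reading off the fixed point of the surviving recursion gives $\|{\bf z}_n\|_2\lesssim\kappa\alpha^2N/C_1+\alpha\sqrt{N}\,\|{\bf e}\|_2/(C_1\sigma_{\min}({\bf A}))$; adding $\|{\bf w}\|_2\le\|{\bf e}\|_2/\sigma_{\min}({\bf A})$ and absorbing universal constants produces the claimed bound. The hardest part is the numerical balancing that pins down the exact threshold $\frac{1-\gamma}{5+3\gamma}$ and keeps the invariant $\|{\bf r}_n\|_2\le M_0$ throughout the induction, while simultaneously leaving enough of the decrease term intact to yield the final constants $4$ and $8$ in the recovery estimate.
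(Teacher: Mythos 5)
Your proposal is correct and takes essentially the same route as the paper's own proof: the paper's argument (packaged into Lemma~\ref{lemmanew}, Corollary~\ref{coroconstant}, Lemma~\ref{theoremlocal}, Lemma~\ref{maintheorem}, and Lemma~\ref{theorem_new}) rests on exactly your ingredients — the orthogonal split of the error into its null-space part plus ${\bf A}^{\dagger}{\bf e}$, subadditivity of $F$ together with the null space constant and the monotonicity of $t\mapsto F(t)/t$ to obtain the lower bound $C_1\|{\bf z}_n\|_2$ minus a noise term, the weak-convexity inequality for the inner product, and the observation that condition (\ref{theoremfor4}) is precisely what guarantees $(-\rho)M_0\le C_1/4$ so the $\rho$-term can be absorbed into the decrease. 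The only difference is bookkeeping: you track the null-space component ${\bf z}_n$ with the constant offset ${\bf w}$ in a single collapsed recursion, while the paper tracks the full error ${\bf x}(n)-{\bf x}^*$ and distributes the identical estimates across its modular lemmas.
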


\begin{proof}
Theorem~\ref{theorem_PGG} can be directly derived from Lemma~\ref{maintheorem} and Lemma~\ref{theorem_new} in Section~\ref{Sec_Theo}.
\end{proof}

According to Theorem~\ref{theorem_PGG}, under some certain conditions, if the non-convexity of the penalty is below a threshold (which is in inverse proportion to the distance between the initial solution and the sparse signal), the recovered solution of PGG will get into the $(O(\kappa)+O(\|{\bf e}\|_2))$-neighborhood of ${\bf x}^*$. By choosing sufficiently small step size $\kappa$, the influence of the $O(\kappa)$ term can be omitted, and the PGG method returns a stably recovered solution. If $\rho=0$, $J(\cdot)$ is just a scaled version of the $\ell_1$ norm, and the condition (\ref{theoremfor4}) always holds for all $M_0>0$. Therefore, no constraint needs to be imposed on the distance between the initial solution and the sparse signal. This is consistent in the fact that $\ell_1$-minimization is convex and the initial solution can be arbitrary. In addition, larger non-convexity of the penalty induces smaller $M_0$, i.e., stronger constraint on the distance between the initial solution and the sparse signal, which is also an intuitive result.

\subsection{Extension and Discussion}

The initialization and the projection step of the PGG method involves the pseudo-inverse matrix ${\bf A}^{\dagger}$, whose exact calculation may be computationally intractable or even impossible because of its large scale in practical applications. To reduce the computational burden, a uniform approximate pseudo-inverse matrix of $\bf A$ is adopted. This method is termed approximate PGG (APGG) method. According to Appendix~\ref{app_cal} which introduces approximate calculation of the pseudo-inverse matrix, we use ${\bf A}^{\rm T}{\bf B}$ to denote the approximation of ${\bf A}^\dagger$. To characterize the approximate precision of the pseudo-inverse matrix, define
\begin{align*}
\|{\bf I}-{\bf A}{\bf A}^{\rm T}{\bf B}\|_2\le\zeta
\end{align*}
where $\|\cdot\|_2$ denotes the spectral norm of the matrix, and we assume $\zeta<1$ throughout this paper. Similar to Theorem~\ref{theorem_PGG}, the following theorem shows the performance of APGG in the noisy scenario.

\begin{theorem}\label{theorem_APGG}
(Performance of APGG). For any tuple $(J,{\bf A},K)$ with $J(\cdot)$ formed by weakly convex sparseness measure $F(\cdot)$ and $\gamma(J,{\bf A},K)<1$, and for any positive constant $M_0$, if the non-convexity of $J(\cdot)$ satisfies (\ref{theoremfor4}) and the approximate pseudo-inverse matrix ${\bf A}^{\rm T}{\bf B}$ satisfies $\zeta<1$, the recovered solution $\hat{\bf x}$ by APGG satisfies
\begin{align}\label{theoremfor5}
\|\hat{\bf x}-{\bf x}^*\|_2\le 2C_3\kappa+2C_4\|{\bf e}\|_2
\end{align}
provided that $\|{\bf x}^*\|_0\le K$ and $\|{\bf x}(0)-{\bf x}^*\|_2\le M_0$, where
\begin{align}
C_3&=\max\left\{2C_2C_5,\frac{2d\alpha^2N}{C_1}+C_6\right\},\label{lemmafor6}\\
C_4&=\max\left\{2C_2,C_7\right\},\label{lemmafor7}\\
C_5&=2\frac{\zeta\alpha\sqrt{N}\|{\bf A}\|_2}{1-\zeta},\label{lemmafor8}\\
C_6&=\frac{2\|{\bf B}\|_2C_5}{C_1}\left(2(1+\zeta)\alpha\sqrt{N}\|{\bf A}\|_2+(3+\zeta)C_5\right),\label{lemmafor9}\\
C_7&=\frac{4\|{\bf B}\|_2}{C_1}\left(\alpha\sqrt{N}\|{\bf A}\|_2+C_5\right),\label{lemmafor10}
\end{align}
$d=\|{\bf I}-{\bf A}^{\rm T}{\bf BA}\|_2^2$, and $C_1$ and $C_2$ are respectively specified as (\ref{lemmafor4}) and (\ref{lemmafor5}).
\end{theorem}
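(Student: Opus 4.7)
The plan is to derive Theorem~\ref{theorem_APGG} as a perturbation of Theorem~\ref{theorem_PGG}, tracking how replacing ${\bf A}^{\dagger}$ by ${\bf A}^{\rm T}{\bf B}$ corrupts each of the two iteration bounds that (per the authors' own remark) underlie the exact-projection result, namely Lemma~\ref{maintheorem} and Lemma~\ref{theorem_new}. My starting point is the identity ${\bf y}-{\bf A}{\bf x}(n+1)=({\bf I}-{\bf A}{\bf A}^{\rm T}{\bf B})({\bf y}-{\bf A}\tilde{\bf x}(n+1))$, which by the hypothesis $\|{\bf I}-{\bf A}{\bf A}^{\rm T}{\bf B}\|_2\le\zeta<1$ gives the geometric contraction $\|{\bf y}-{\bf A}{\bf x}(n+1)\|_2\le\zeta\|{\bf y}-{\bf A}\tilde{\bf x}(n+1)\|_2$. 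Iterating this together with the trivial bound $\|{\bf y}-{\bf A}\tilde{\bf x}(n+1)\|_2\le\|{\bf y}-{\bf A}{\bf x}(n)\|_2+\kappa\|{\bf A}\|_2\|\nabla J({\bf x}(n))\|_2\le\|{\bf y}-{\bf A}{\bf x}(n)\|_2+\kappa\alpha\sqrt{N}\|{\bf A}\|_2$ yields a steady-state residual $\|{\bf y}-{\bf A}{\bf x}(n)\|_2\le C_5\kappa/2+\|{\bf e}\|_2$, which is precisely the role that $C_5=2\zeta\alpha\sqrt{N}\|{\bf A}\|_2/(1-\zeta)$ appears to play.

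Once this residual bound is in hand, the next step is to revisit the proof of the exact-projection Lemma~\ref{maintheorem}. In that argument one controls $\|{\bf x}(n+1)-{\bf x}^*\|_2$ by writing ${\bf x}(n+1)-{\bf x}^*=({\bf I}-{\bf A}^{\dagger}{\bf A})(\tilde{\bf x}(n+1)-{\bf x}^*)+{\bf A}^{\dagger}{\bf e}$, and then using weak convexity of $F(\cdot)$ together with the null space constant to show that the distance contracts toward ${\bf x}^*$. In the approximate case the corresponding decomposition is ${\bf x}(n+1)-{\bf x}^*=({\bf I}-{\bf A}^{\rm T}{\bf B}{\bf A})(\tilde{\bf x}(n+1)-{\bf x}^*)+{\bf A}^{\rm T}{\bf B}{\bf e}$, and the matrix ${\bf I}-{\bf A}^{\rm T}{\bf B}{\bf A}$ is no longer an orthogonal projection but satisfies $\|{\bf I}-{\bf A}^{\rm T}{\bf B}{\bf A}\|_2^2=d$, which explains why $d$ multiplies $\alpha^2 N/C_1$ in the definition of $C_3$. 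The weak-convexity/null-space argument of Lemma~\ref{maintheorem} still goes through for the component of $\tilde{\bf x}(n+1)-{\bf x}^*$ lying in $\mathcal{N}({\bf A})$, and a separate term of size $O(\|{\bf B}\|_2\cdot(\|{\bf y}-{\bf A}{\bf x}(n+1)\|_2))$ — producing the $C_6$ and $C_7$ constants — must be added to account for the fact that $\tilde{\bf x}(n+1)-{\bf x}^*$ no longer sits exactly in the null space.

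Putting the pieces together, I would derive a recursion of the form $\|{\bf x}(n+1)-{\bf x}^*\|_2\le q\|{\bf x}(n)-{\bf x}^*\|_2+R(\kappa,\|{\bf e}\|_2,\zeta)$ with contraction factor $q<1$ inherited from the non-convexity condition (\ref{theoremfor4}) and a perturbation term $R$ built out of $C_5,C_6,C_7$. Unrolling the recursion, the steady-state error bound (\ref{theoremfor5}) follows, with $C_3$ and $C_4$ being the maxima of the exact-projection constants $2C_2C_5$ and $2C_2$ (which dominate when $\zeta\to 0$) and the pure-approximation constants $2d\alpha^2N/C_1+C_6$ and $C_7$ (which dominate when $\zeta$ is non-negligible). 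The main obstacle I anticipate is the second paragraph: retaining the weak-convexity/null-space-constant inequality used in Lemma~\ref{maintheorem} when $\tilde{\bf x}(n+1)-{\bf x}^*$ is only approximately in $\mathcal{N}({\bf A})$ — one has to split this vector into its null-space component (handled as in PGG) and a range-space component of size $O(\|{\bf y}-{\bf A}\tilde{\bf x}(n+1)\|_2)$, and then verify that the extra range-space term does not destroy the contraction $q<1$ provided the non-convexity threshold (\ref{theoremfor4}) is met and $\zeta<1$.
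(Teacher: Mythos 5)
Your first two paragraphs track the paper's actual route: your residual recursion $\|{\bf y}-{\bf Ax}(n+1)\|_2\le\zeta\|{\bf y}-{\bf A}\tilde{\bf x}(n+1)\|_2$ together with the per-step inflation $\kappa\alpha\sqrt{N}\|{\bf A}\|_2$ is exactly the paper's Lemma~\ref{theorem_space} (whose steady state defines $C_5$), and your decomposition ${\bf x}(n+1)-{\bf x}^*=({\bf I}-{\bf A}^{\rm T}{\bf BA})(\tilde{\bf x}(n+1)-{\bf x}^*)+{\bf A}^{\rm T}{\bf Be}$, with $d=\|{\bf I}-{\bf A}^{\rm T}{\bf BA}\|_2^2$ replacing the projector norm and extra $O(\|{\bf B}\|_2\,\|{\bf y}-{\bf Ax}\|_2)$ terms producing $C_6,C_7$, is precisely the content of the paper's one-step Lemma~\ref{lemma_APGG}, which perturbs the proof of Lemma~\ref{maintheorem}.

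The genuine gap is in your third paragraph, where the pieces are assembled. The one-step inequality this machinery actually delivers (the paper's (\ref{proofthfor1})) is
\begin{align*}
\|{\bf u}^+\|_2^2\le\|{\bf u}\|_2^2+d\alpha^2N\kappa^2-\frac{C_1\kappa}{2}\left(\|{\bf u}\|_2-C_6\kappa-C_7\|{\bf e}\|_2\right),
\end{align*}
so the decrease is \emph{additive}, of size proportional to $\kappa\|{\bf u}\|_2$, not multiplicative. No uniform contraction factor $q<1$ with remainder $R=O(\kappa)+O(\|{\bf e}\|_2)$ follows from it: the only way to manufacture one is to use $\|{\bf u}\|_2\ge\|{\bf u}\|_2^2/M_0$, giving $q^2=1-C_1\kappa/(2M_0)$, and unrolling that recursion yields a steady-state squared error of order $M_0\kappa+M_0\|{\bf e}\|_2$, i.e.\ a recovery error $O(\sqrt{\kappa}+\sqrt{\|{\bf e}\|_2})$ --- strictly weaker than the linear bound (\ref{theoremfor5}) you are trying to prove. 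Relatedly, condition (\ref{theoremfor4}) does not supply any contraction factor; its sole role (via Lemma~\ref{theorem_new}) is to guarantee $M_0\le C_1/(-4\rho)$, so that the entire initial ball lies in the region (\ref{theoremfor1}) where the descent inequality is valid. The correct assembly, and what the paper does, is a ``strict decrease outside the target ball'' argument: whenever $\|{\bf u}\|_2\ge\mu C_3\kappa+C_4\|{\bf e}\|_2$ with $\mu>1$ (and $n\ge N_\kappa$ so that Lemma~\ref{theorem_space} gives $\|{\bf Au}\|_2\le C_5\kappa+\|{\bf e}\|_2$, which is why $2C_2C_5$ and $2C_2$ enter $C_3,C_4$), the squared distance drops by at least $(\mu-1)d\alpha^2N\kappa^2>0$ and in particular never increases; hence the iterate remains in the basin (\ref{theoremfor1}) and after finitely many iterations lands inside the ball of radius $2C_3\kappa+2C_4\|{\bf e}\|_2$ (take $\mu=2$). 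A smaller point: your reading of the maxima in $C_3,C_4$ is backwards --- $2C_2C_5\to0$ as $\zeta\to0$, so it is the terms $2d\alpha^2N/C_1+C_6$ and $2C_2$ that reproduce the exact-projection constants of Theorem~\ref{theorem_PGG} in that limit; the $2C_2C_5$ and $2C_2$ entries exist to certify $\|{\bf u}\|_2\ge2C_2\|{\bf Au}\|_2$, the hypothesis of Lemma~\ref{theoremlocal}.
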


\begin{proof}
Theorem~\ref{theorem_APGG} can be directly derived from Lemma~\ref{lemma_APGG} and Lemma~\ref{theorem_new} in Section~\ref{Sec_Theo}.
\end{proof}

Similar to Theorem~\ref{theorem_PGG}, Theorem~\ref{theorem_APGG} also reveals that under some certain conditions, if the non-convexity of the penalty is below a threshold, the recovered solution of APGG will get into the $(O(\kappa)+O(\|{\bf e}\|_2))$-neighborhood of ${\bf x}^*$. This result is interesting since the influence of the approximate projection is only reflected on the coefficients instead of an additional error term. In the noiseless scenario with sufficiently small step size $\kappa$, the sparse signal ${\bf x}^*$ can be recovered with any given precision, even when a uniform approximate projection is adopted in this method.

By far, only the case of strictly sparse signal is analyzed and discussed. For compressible signal ${\bf x}^*$, assume $\|{\bf x}^*-{\bf x}_T^*\|_2\le\tau$. It is easily calculated that
\begin{align*}
{\bf y}={\bf Ax}^*+{\bf e}={\bf Ax}_T^*+({\bf e}+{\bf A}({{\bf x}^*-{\bf x}_T^*}))
\end{align*}
and
\begin{align*}
\|{\bf e}+{\bf A}({{\bf x}^*-{\bf x}_T^*})\|_2\le\|{\bf e}\|_2+\|{\bf A}\|_2\tau.
\end{align*}
According to Theorem~\ref{theorem_APGG}, the recovered solution $\hat{\bf x}$ of APGG will get into the $(2C_3\kappa+2C_4(\|{\bf e}\|_2+\|{\bf A}\|_2\tau))$-neighborhood of ${\bf x}_T^*$. Since ${\bf x}_T^*$ lies in the $\tau$-neighborhood of ${\bf x}^*$, the distance between $\hat{\bf x}$ and ${\bf x}^*$ will be no more than
\begin{align*}
2C_3\kappa+2C_4\|{\bf e}\|_2+(2C_4\|{\bf A}\|_2+1)\tau.
\end{align*}
This reflects the performance degradation due to the noise and non-sparsity of the original signal.

To end up this section, we talk about the computational complexity of the APGG method. The following Theorem~\ref{theorem_iternum} reveals how many iterations are needed for APGG to derive the solution with desired accuracy.

\begin{theorem}\label{theorem_iternum}
For any tuple $(J,{\bf A},K)$ with $J(\cdot)$ formed by weakly convex sparseness measure $F(\cdot)$ and $\gamma(J,{\bf A},K)<1$, positive constant $M_0$, vector ${\bf x}^*$ with $\|{\bf x}^*\|_0\le K$, and ${\bf A}^{\rm T}{\bf B}$ as an approximate pseudo-inverse matrix with $\zeta<1$, if the initial solution of APGG satisfies $\|{\bf x}(0)-{\bf x}^*\|_2\le M_0$ and the non-convexity of $J(\cdot)$ satisfies (\ref{theoremfor4}), then in at most
\begin{align*}
\frac{4C_3M_0}{d\alpha^2N\kappa}
\end{align*}
iterations, the recovered solution by APGG satisfies (\ref{theoremfor5}), where $C_3$ and $C_4$ are respectively specified as (\ref{lemmafor6}) and (\ref{lemmafor7}) and $d=\|{\bf I}-{\bf A}^{\rm T}{\bf BA}\|_2^2$.
\end{theorem}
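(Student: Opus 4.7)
The plan is to extract from the proof of Lemma~\ref{lemma_APGG} (the ingredient behind Theorem~\ref{theorem_APGG}) a per-iteration decrement on $\|{\bf x}(n)-{\bf x}^*\|_2$ that holds whenever the iterate is still outside the target neighborhood, and then count how many such decrements are required to travel from the initial distance $\|{\bf x}(0)-{\bf x}^*\|_2\le M_0$ down to $2C_3\kappa+2C_4\|{\bf e}\|_2$.

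First, I would combine (\ref{itera1}) and (\ref{itera2}) with ${\bf A}^\dagger$ replaced by ${\bf A}^{\rm T}{\bf B}$, and use ${\bf y}={\bf A}{\bf x}^*+{\bf e}$, to rewrite the APGG update as
\[
{\bf x}(n+1)-{\bf x}^* = ({\bf I}-{\bf A}^{\rm T}{\bf B}{\bf A})\bigl({\bf x}(n)-{\bf x}^*-\kappa\nabla J({\bf x}(n))\bigr) + {\bf A}^{\rm T}{\bf B}{\bf e}.
\]
Squaring produces three kinds of contributions: a cross term $-2\kappa\langle\nabla J({\bf x}(n)),{\bf x}(n)-{\bf x}^*\rangle$ that I handle via Proposition~\ref{proposition_3} (weak convexity) together with the hypothesis $\gamma(J,{\bf A},K)<1$ and the threshold (\ref{theoremfor4}) on $-\rho/\alpha$, exactly as is done inside the proof of Lemma~\ref{lemma_APGG}; a self term bounded by $\kappa^2\|({\bf I}-{\bf A}^{\rm T}{\bf B}{\bf A})\nabla J({\bf x}(n))\|_2^2 \le d\alpha^2 N \kappa^2$ via Lemma~\ref{lemmapre}.1) and the definition of $d$; and residual noise/approximation terms controlled by the constants $C_5$--$C_7$ and $\zeta$ already bundled into $C_3$ and $C_4$.

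Second, I would arrange these three contributions so that, whenever $\|{\bf x}(n)-{\bf x}^*\|_2 > 2C_3\kappa+2C_4\|{\bf e}\|_2$, half of the cross-term descent strictly dominates the quadratic and noise remainders, yielding a scalar one-step decrement of the form
\[
\|{\bf x}(n+1)-{\bf x}^*\|_2 \le \|{\bf x}(n)-{\bf x}^*\|_2 - \frac{d\alpha^2 N\,\kappa}{4 C_3}.
\]
Telescoping this inequality from $n=0$ with $\|{\bf x}(0)-{\bf x}^*\|_2\le M_0$ immediately shows that after at most $4 C_3 M_0/(d\alpha^2 N\kappa)$ iterations the iterate must fall inside the ball of radius $2C_3\kappa+2C_4\|{\bf e}\|_2$ about ${\bf x}^*$, which is precisely (\ref{theoremfor5}).

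The main obstacle will be the calibration in the second step: ensuring that the per-iteration decrement is really of the size $d\alpha^2 N\kappa/(4C_3)$ claimed, rather than a smaller rate. Because $C_3$ already contains the summand $2d\alpha^2N/C_1+C_6$, the $d\alpha^2 N\kappa^2$ quadratic error has to be absorbed into a term linear in $\kappa\|{\bf x}(n)-{\bf x}^*\|_2$ as soon as the iterate is outside the target ball. Verifying this amounts to retracing, but not redoing, the inequalities established for Lemma~\ref{lemma_APGG}, and then invoking the elementary estimate $\sqrt{a^2-b}\le a-b/(2a)$ to convert the squared-distance bound obtained there into the linear decrement above before telescoping.
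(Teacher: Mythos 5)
Your proposal is correct and follows essentially the same route as the paper: a constant per-iteration decrement $d\alpha^2N\kappa/(4C_3)$ in $\|{\bf x}(n)-{\bf x}^*\|_2$ valid whenever the iterate lies outside the target ball, telescoped against the initial distance $M_0$. The only (cosmetic) difference is that instead of retracing the internal inequalities of Lemma~\ref{lemma_APGG}, the paper invokes that lemma as a black box with an adaptively chosen parameter $\mu(n)=\bigl(\|{\bf x}-{\bf x}^*\|_2-C_4\|{\bf e}\|_2\bigr)/(C_3\kappa)$, making the squared-distance decrement linear in the current distance, and then completes the square rather than using $\sqrt{a^2-b}\le a-b/(2a)$.
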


\begin{proof}
The proof is postponed to Section~\ref{proofthiternum}.
\end{proof}

For calculating the approximate pseudo-inverse matrix of ${\bf A}$, the computational complexity of the method introduced in Appendix~\ref{app_cal} would be $O(MN)$ (if the initialization is adopted) or $O(M^2N)$ (if the method iterates for at least once). According to \eqref{lemmafor6}, it can be derived that $C_3$ is $O(N)$, therefore Theorem~\ref{theorem_iternum} reveals that the number of iterations needed is $O(\kappa^{-1})$. As for each iteration of APGG, the computational complexity is $O(MN)$. Overall, the computational complexity of APGG is at most $O(M^2N)+O(MN\kappa^{-1})$.

\section{Theoretical Analysis}\label{Sec_Theo}

This section mainly aims to establish theoretical supports for the results in Section~\ref{Sec_Main}. To begin with, some additional properties of weakly convex sparseness measure $F(\cdot)$ are revealed in the following lemma. Let $\partial F(0)=\{0\}$.

\begin{lemma}\label{lemmarest}
The weakly convex sparseness measure $F(\cdot)$ satisfies the following properties:
\begin{enumerate}
\item
For all $t_1,t_2\in\mathbb{R}$, $F(t_1+t_2)\le F(t_1)+F(t_2)$;
\item
For all $t\in(0,+\infty)$ and $f(t)\in\partial F(t)$, $f(t)\ge0$;
\item
For all $t\in\mathbb{R}$ and $f(t)\in\partial F(t)$, $|f(t)|\le\alpha$;
\item
For all $t_1,t_2\in\mathbb{R}$ and $f(t_1)\in\partial F(t_1)$, it holds that
\begin{align}\label{lemmapre3}
(t_1-t_2)f(t_1)\ge F(t_1)-F(t_2)+\rho(t_1-t_2)^2;
\end{align}
\item
For all $t\in\mathbb{R}$, $F(t)-\alpha|t|-\rho t^2\ge0$.
\end{enumerate}
\end{lemma}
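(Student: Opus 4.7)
The plan is to prove the five items in the stated order, each building on the previous together with Definition~\ref{definition_weak_spar}, Lemma~\ref{lemmapre}, and Propositions~\ref{proposition_1}--\ref{proposition_3}.

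Item (1) reduces by evenness to $t_1,t_2\ge 0$; since $t\mapsto F(t)/t$ is non-increasing on $(0,+\infty)$ by Definition~\ref{definition_weak_spar}.3, one has $F(t_1+t_2)/(t_1+t_2)\le F(t_i)/t_i$ for $i=1,2$, and the convex combination with weights $t_1/(t_1+t_2)$ and $t_2/(t_1+t_2)$ yields subadditivity; mixed-sign cases reduce to monotonicity of $F$ on $[0,+\infty)$. Item (2) uses this monotonicity directly: for $t>0$ and small $\theta>0$, $F(t-\theta)\le F(t)$ gives $D_F(t;-1)\le 0$, and the defining inequality of the generalized gradient (\ref{definition_gen}) with $\nu=-1$ yields $-f(t)\le 0$. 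Item (3) combines (1) with Lemma~\ref{lemmapre}.1: subadditivity together with $F(\theta)\le\alpha\theta$ gives $D_F(t;1)\le\alpha$, so $f(t)\le\alpha$ on $(0,+\infty)$; paired with (2), this gives $0\le f(t)\le\alpha$ there, and the identity $\partial F(-t)=-\partial F(t)$ (a direct consequence of evenness and the definition of $\partial F$) extends the bound to $t<0$.

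Item (4) is a rearrangement of Proposition~\ref{proposition_3}, but that proposition is stated on $S=[0,+\infty)$, so I must extend it to $t_1,t_2\in\mathbb{R}$. The case $t_1>0,\, t_2\ge 0$ is immediate. For $t_1>0,\, t_2<0$, I would apply Proposition~\ref{proposition_3} to the pair $(t_1,-t_2)$ with $-t_2>0$, substitute $F(-t_2)=F(t_2)$ via evenness, and check that the resulting right-hand side dominates the target one: a short calculation shows the gap equals $-2t_2\bigl(f(t_1)-2\rho t_1\bigr)$, which is nonnegative because $f(t_1)\ge 0$ by (2), $t_1>0$, and $\rho\le 0$ (itself forced by properties 2--3 of Definition~\ref{definition_weak_spar} combined with weak convexity). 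The case $t_1=0$ uses the convention $\partial F(0)=\{0\}$ together with $F\ge 0$, and the case $t_1<0$ follows from the reflection symmetry $(t_1,t_2,f(t_1))\mapsto(-t_1,-t_2,-f(t_1))$, under which both sides of (\ref{lemmapre3}) are invariant.

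Item (5) uses the decomposition $F(t)=H(t)+\rho t^2$ on $[0,+\infty)$ supplied by Proposition~\ref{proposition_1}, where $H$ is convex with $H(0)=0$. Since $F(t)/t$ is non-increasing on $(0,+\infty)$, $\lim_{t\to 0_+}F(t)/t=\sup_{t>0}F(t)/t=\alpha$, hence $H'(0_+)=\lim_{t\to 0_+}\bigl(F(t)/t-\rho t\bigr)=\alpha$; convexity of $H$ then gives $H(t)\ge\alpha t$ for $t\ge 0$, so $F(t)\ge\alpha t+\rho t^2$, and evenness extends to $t\in\mathbb{R}$. The main obstacle I anticipate is the extension in item (4): Proposition~\ref{proposition_3}'s natural quadratic remainder $\rho(t_1+t_2)^2$ does not automatically match the target's $\rho(t_1-t_2)^2$ once one changes $t_2$ to $-t_2$, and absorbing this mismatch is precisely what forces us to combine the sign information $\rho\le 0$ with the non-negativity of $f(t_1)$ from item (2).
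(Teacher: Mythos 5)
Your proposal is correct and follows essentially the same route as the paper's proof: the weighted-combination argument from the non-increasing of $F(t)/t$ for subadditivity, the directional-derivative bounds from the definition of the generalized gradient for items (2)--(3), the reflection to $t_1\ge 0$ plus applying Proposition~\ref{proposition_3} to the pair $(t_1,-t_2)$ and absorbing the mismatch via $f(t_1)\ge 0$ and $\rho\le 0$ for item (4), and the decomposition $F(t)=H(t)+\rho t^2$ with $H'(0_+)=\alpha$ for item (5); your gap computation $-2t_2\bigl(f(t_1)-2\rho t_1\bigr)\ge 0$ is exactly the content of the paper's two-step estimate in the mixed-sign case. The only cosmetic deviation is in item (3), where you derive $D_F(t;1)\le\alpha$ from subadditivity and Lemma~\ref{lemmapre}.1 rather than directly from the bound $(F(t+\theta)-F(t))/\theta\le F(t)/t\le\alpha$, which is an equally valid one-line argument.
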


\begin{proof}
The proof is postponed to Section~\ref{prooflemmarest}.
\end{proof}

Based on the definitions of weakly convex sparseness measure and null space constant with their properties, a lemma is established for preparation as follows.

\begin{lemma}\label{lemmanew}
For any tuple $(J,{\bf A},K)$ with $J(\cdot)$ formed by weakly convex sparseness measure $F(\cdot)$ and $\gamma(J,{\bf A},K)<1$, and for any positive constant $M_0$, the inequality
\begin{align}\label{lemmafor3}
J({\bf x})-J({\bf x}^*)\ge C_1\left(\|{\bf x}-{\bf x}^*\|_2-C_2\|{\bf A}({\bf x-x}^*)\|_2\right)
\end{align}
holds for all vectors ${\bf x}^*$ and $\bf x$ satisfying $\|{\bf x}^*\|_0\le K$ and $\|{\bf x-x}^*\|_2\le M_0$, where $C_1$ and $C_2$ are respectively specified as (\ref{lemmafor4}) and (\ref{lemmafor5}).
\end{lemma}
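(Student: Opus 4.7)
The plan is to decompose the error vector $\mathbf{h}:=\mathbf{x}-\mathbf{x}^*$ relative to $\mathbf{A}$ as $\mathbf{h}=\mathbf{u}+\mathbf{v}$ with $\mathbf{u}\in\mathcal{N}(\mathbf{A})$ and $\mathbf{v}\in\mathcal{N}(\mathbf{A})^{\perp}$. Because $\mathbf{A}\mathbf{v}=\mathbf{A}\mathbf{h}$ and $\mathbf{v}$ lies in the row space of $\mathbf{A}$, one gets $\|\mathbf{v}\|_2\le\|\mathbf{A}\mathbf{h}\|_2/\sigma_{\min}(\mathbf{A})$. The idea is then to apply the null space property to the null-space piece $\mathbf{u}$ while absorbing $\mathbf{v}$ into the slack term $\|\mathbf{A}\mathbf{h}\|_2$.

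First I would peel off $\mathbf{v}$ using subadditivity of $J$ (Lemma~\ref{lemmarest}.1) and evenness of $F$, writing
\begin{align*}
J(\mathbf{x})=J(\mathbf{x}^*+\mathbf{u}+\mathbf{v})\ge J(\mathbf{x}^*+\mathbf{u})-J(\mathbf{v}),
\end{align*}
where $J(\mathbf{v})\le\alpha\|\mathbf{v}\|_1\le\alpha\sqrt{N}\|\mathbf{v}\|_2$ by Lemma~\ref{lemmapre}.1. Next I would lower-bound $J(\mathbf{x}^*+\mathbf{u})-J(\mathbf{x}^*)$. Let $S$ be the support of $\mathbf{x}^*$, so $|S|\le K$ and $x^*_i=0$ for $i\notin S$. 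Subadditivity applied coordinate-wise on $S$ gives $F(x^*_i+u_i)\ge F(x^*_i)-F(u_i)$, which yields
\begin{align*}
J(\mathbf{x}^*+\mathbf{u})-J(\mathbf{x}^*)\ge J(\mathbf{u}_{S^c})-J(\mathbf{u}_S).
\end{align*}
Applying the null space property (Definition~\ref{definition_NSP}) to $\mathbf{u}\in\mathcal{N}(\mathbf{A})$ with $|S|\le K$ gives $J(\mathbf{u}_S)\le\gamma\,J(\mathbf{u}_{S^c})$, and combining this with $J(\mathbf{u})=J(\mathbf{u}_S)+J(\mathbf{u}_{S^c})$ produces $J(\mathbf{u}_{S^c})-J(\mathbf{u}_S)\ge\frac{1-\gamma}{1+\gamma}J(\mathbf{u})$, writing $\gamma:=\gamma(J,\mathbf{A},K)$.

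The step where the hypothesis $\|\mathbf{h}\|_2\le M_0$ enters is in converting $J(\mathbf{u})$ to $\|\mathbf{u}\|_2$. Since $\|\mathbf{u}\|_2\le\|\mathbf{h}\|_2\le M_0$, each coordinate obeys $|u_i|\le M_0$, and the monotonicity of $F(t)/t$ on $(0,+\infty)$ in Definition~\ref{definition_weak_spar}.3 gives $F(u_i)\ge\frac{F(M_0)}{M_0}|u_i|$. Summing over $i$ and using $\|\mathbf{u}\|_1\ge\|\mathbf{u}\|_2$ yields $J(\mathbf{u})\ge\frac{F(M_0)}{M_0}\|\mathbf{u}\|_2$, so that $J(\mathbf{x}^*+\mathbf{u})-J(\mathbf{x}^*)\ge C_1\|\mathbf{u}\|_2$. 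Finally, the triangle inequality $\|\mathbf{u}\|_2\ge\|\mathbf{h}\|_2-\|\mathbf{v}\|_2$ combined with the previous bounds gives
\begin{align*}
J(\mathbf{x})-J(\mathbf{x}^*)\ge C_1\|\mathbf{h}\|_2-(C_1+\alpha\sqrt{N})\|\mathbf{v}\|_2\ge C_1\bigl(\|\mathbf{h}\|_2-C_2\|\mathbf{A}\mathbf{h}\|_2\bigr),
\end{align*}
after substituting $\|\mathbf{v}\|_2\le\|\mathbf{A}\mathbf{h}\|_2/\sigma_{\min}(\mathbf{A})$ and collecting the coefficient into $C_2=(\alpha\sqrt{N}+C_1)/(C_1\sigma_{\min}(\mathbf{A}))$. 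I expect the main obstacle to be the very first step of cleanly separating the null-space and row-space components via subadditivity; once $\mathbf{v}$ is peeled off, what remains is a weakly convex analogue of the classical $\ell_1$ null-space-property argument, with the role of $\|\mathbf{u}\|_1$ played by $J(\mathbf{u})$ and the conversion to $\|\mathbf{u}\|_2$ mediated by the constant $F(M_0)/M_0$.
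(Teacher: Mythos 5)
Your proof is correct and takes essentially the same route as the paper's: the identical orthogonal decomposition of the error into its null-space and row-space components, the same subadditivity and null-space-constant arguments, and the same $F(M_0)/M_0$ conversion from $J(\mathbf{u})$ to $\|\mathbf{u}\|_2$, arriving at exactly the constants $C_1$ and $C_2$ of (\ref{lemmafor4}) and (\ref{lemmafor5}). The only difference is cosmetic: you peel off the row-space component $\mathbf{v}$ before splitting over the support of $\mathbf{x}^*$, whereas the paper performs these two subadditivity steps in the opposite order.
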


\begin{proof}
The proof is postponed to Section~\ref{prooflemmanew}.
\end{proof}

The following corollary can be immediately derived from Lemma~\ref{lemmanew}.

\begin{corollary}\label{coroconstant}
For any tuple $(J,{\bf A},K)$ with $J(\cdot)$ formed by weakly convex sparseness measure $F(\cdot)$ and $\gamma(J,{\bf A},K)<1$, and for any positive constant $M_0$, the inequality
\begin{align}\label{lemmaform3}
J({\bf x})-J({\bf x}^*)\ge \frac{C_1}{2}\|{\bf x}-{\bf x}^*\|_2
\end{align}
holds for all vectors ${\bf x}^*$ and $\bf x$ satisfying $\|{\bf x}^*\|_0\le K$, $\|{\bf x-x}^*\|_2\le M_0$, and $\|{\bf x-x}^*\|_2\ge 2C_2\|{\bf A}({\bf x-x}^*)\|_2$, where $C_1$ and $C_2$ are specified as (\ref{lemmafor4}) and (\ref{lemmafor5}), respectively.
\end{corollary}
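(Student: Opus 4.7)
The plan is to obtain the corollary as an immediate consequence of Lemma~\ref{lemmanew} by plugging in the additional hypothesis that $\|{\bf x}-{\bf x}^*\|_2 \ge 2C_2\|{\bf A}({\bf x}-{\bf x}^*)\|_2$. Since the hypotheses $\|{\bf x}^*\|_0 \le K$ and $\|{\bf x}-{\bf x}^*\|_2 \le M_0$ of Lemma~\ref{lemmanew} are already assumed, I can directly invoke the bound
\begin{align*}
J({\bf x}) - J({\bf x}^*) \ge C_1\bigl(\|{\bf x}-{\bf x}^*\|_2 - C_2\|{\bf A}({\bf x}-{\bf x}^*)\|_2\bigr).
\end{align*}

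The extra assumption rearranges to $C_2\|{\bf A}({\bf x}-{\bf x}^*)\|_2 \le \tfrac{1}{2}\|{\bf x}-{\bf x}^*\|_2$. First I would substitute this upper bound into the parenthetical term above, which yields
\begin{align*}
\|{\bf x}-{\bf x}^*\|_2 - C_2\|{\bf A}({\bf x}-{\bf x}^*)\|_2 \ge \|{\bf x}-{\bf x}^*\|_2 - \tfrac{1}{2}\|{\bf x}-{\bf x}^*\|_2 = \tfrac{1}{2}\|{\bf x}-{\bf x}^*\|_2.
\end{align*}
Multiplying by the nonnegative constant $C_1$ (which is nonnegative because $F(M_0)/M_0 \ge 0$ and $\gamma(J,{\bf A},K) < 1$) then delivers (\ref{lemmaform3}) and completes the argument.

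There is essentially no obstacle here: the corollary is a direct algebraic specialization of Lemma~\ref{lemmanew}, so all of the real work (handling the null space constant, the weak-convexity inequality of Proposition~\ref{proposition_3}, and the pseudo-inverse bound involving $\sigma_{\min}({\bf A})$) has already been absorbed into the constants $C_1$ and $C_2$ via Lemma~\ref{lemmanew}. The only thing worth double-checking is that the signs and the nonnegativity of $C_1$ are consistent, so that multiplying the inequality by $C_1$ preserves its direction; this is immediate from the definition (\ref{lemmafor4}).
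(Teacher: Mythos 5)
Your proposal is correct and matches the paper exactly: the paper states that the corollary is ``immediately derived from Lemma~\ref{lemmanew},'' and your argument---substituting the hypothesis $C_2\|{\bf A}({\bf x}-{\bf x}^*)\|_2\le\tfrac{1}{2}\|{\bf x}-{\bf x}^*\|_2$ into the bound of Lemma~\ref{lemmanew} and using the positivity of $C_1$---is precisely that immediate derivation.
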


The inequality (\ref{lemmaform3}) is somewhat similar to the concept of Lipschitz continuity, but with the difference that the inequality sign is reversed. According to (\ref{lemmaform3}), if the gap between $J({\bf x})$ and $J({\bf x}^*)$ is small, $\bf x$ would not be far away from the sparse vector ${\bf x}^*$. The following Lemma~\ref{theoremlocal} demonstrates the main result on the local minima of $J$-minimization.

\begin{lemma}\label{theoremlocal}
For any tuple $(J,{\bf A},K)$ with $J(\cdot)$ formed by weakly convex sparseness measure $F(\cdot)$ and $\gamma(J,{\bf A},K)<1$, and for any positive constant $M_0$, the inequality
\begin{align}
({\bf x}-{\bf x}^*)^{\rm T}\nabla J({\bf x})>0
\end{align}
holds for all vectors ${\bf x}^*$ and $\bf x$ satisfying $\|{\bf x}^*\|_0\le K$,
\begin{align}\label{theoremfor1}
\|{\bf x-x}^*\|_2\le \min\left\{M_0,\frac{C_1}{-4\rho}\right\},
\end{align}
and $\|{\bf x-x}^*\|_2\ge 2C_2\|{\bf A}({\bf x-x}^*)\|_2$, where $C_1$ and $C_2$ are specified as (\ref{lemmafor4}) and (\ref{lemmafor5}), respectively.
\end{lemma}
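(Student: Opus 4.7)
The plan is to bound $({\bf x}-{\bf x}^*)^{\rm T}\nabla J({\bf x})$ from below by combining two ingredients: the weakly convex lower bound on $F(\cdot)$ applied coordinate-wise, and the reverse Lipschitz-type inequality of Corollary~\ref{coroconstant}. The hypothesis $\|{\bf x-x}^*\|_2\le C_1/(-4\rho)$ will then be exactly strong enough to absorb the negative quadratic weak-convexity term.

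First I would apply Lemma~\ref{lemmarest}.4 (equivalently Proposition~\ref{proposition_3}) with $t_1=x_i$ and $t_2=x_i^*$ for each coordinate $i$, choosing the particular generalized gradient $f(x_i)$ used in $\nabla J({\bf x})$. This yields
\begin{align*}
(x_i-x_i^*)f(x_i)\ge F(x_i)-F(x_i^*)+\rho(x_i-x_i^*)^2.
\end{align*}
Summing over $i=1,\ldots,N$, using $J({\bf x})=\sum_i F(x_i)$ and $\|{\bf x}-{\bf x}^*\|_2^2=\sum_i (x_i-x_i^*)^2$, I obtain
\begin{align*}
({\bf x}-{\bf x}^*)^{\rm T}\nabla J({\bf x})\ge J({\bf x})-J({\bf x}^*)+\rho\|{\bf x}-{\bf x}^*\|_2^2.
\end{align*}

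Second, since the hypotheses $\|{\bf x}^*\|_0\le K$, $\|{\bf x-x}^*\|_2\le M_0$, and $\|{\bf x-x}^*\|_2\ge 2C_2\|{\bf A}({\bf x-x}^*)\|_2$ are exactly those of Corollary~\ref{coroconstant}, I can substitute $J({\bf x})-J({\bf x}^*)\ge \frac{C_1}{2}\|{\bf x}-{\bf x}^*\|_2$ into the bound above, giving
\begin{align*}
({\bf x}-{\bf x}^*)^{\rm T}\nabla J({\bf x})\ge \frac{C_1}{2}\|{\bf x}-{\bf x}^*\|_2+\rho\|{\bf x}-{\bf x}^*\|_2^2 = \|{\bf x}-{\bf x}^*\|_2\Bigl(\frac{C_1}{2}+\rho\|{\bf x}-{\bf x}^*\|_2\Bigr).
\end{align*}

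Finally, I would invoke the hypothesis $\|{\bf x-x}^*\|_2\le C_1/(-4\rho)$ (nontrivial only when $\rho<0$; the case $\rho=0$ is immediate). This gives $\rho\|{\bf x}-{\bf x}^*\|_2\ge -C_1/4$, so the parenthesized factor is at least $C_1/4>0$, and the whole right-hand side is at least $\frac{C_1}{4}\|{\bf x}-{\bf x}^*\|_2$. Assuming ${\bf x}\neq{\bf x}^*$ (otherwise the combined hypotheses force $\bf x$ to equal ${\bf x}^*$ and the statement is vacuous at that point in the sense that the directional information is degenerate; one only needs strict positivity whenever ${\bf x}\neq{\bf x}^*$), strict positivity of $({\bf x}-{\bf x}^*)^{\rm T}\nabla J({\bf x})$ follows.

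The only mildly delicate point is the bookkeeping to confirm that the threshold $C_1/(-4\rho)$ in the hypothesis is sharp for the algebra: it is precisely the value for which the negative quadratic contribution from the weak-convexity term exactly cancels half of the linear lower bound from Corollary~\ref{coroconstant}, leaving a strictly positive remainder. No new machinery beyond Corollary~\ref{coroconstant} and Lemma~\ref{lemmarest}.4 is required, so I anticipate the main proof to be short.
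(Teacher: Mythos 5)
Your proposal is correct and follows essentially the same route as the paper: the paper's proof likewise sums Lemma~\ref{lemmarest}.4) coordinate-wise to get $({\bf x}-{\bf x}^*)^{\rm T}\nabla J({\bf x})\ge J({\bf x})-J({\bf x}^*)+\rho\|{\bf x-x}^*\|_2^2$, then invokes Corollary~\ref{coroconstant} together with the radius hypothesis $\|{\bf x-x}^*\|_2\le C_1/(-4\rho)$ to conclude $({\bf x}-{\bf x}^*)^{\rm T}\nabla J({\bf x})\ge C_1\|{\bf x-x}^*\|_2/4$. Your treatment of the degenerate point ${\bf x}={\bf x}^*$ is a minor extra care the paper leaves implicit.
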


\begin{proof}
The proof is postponed to Section~\ref{prooftheoremlocal}.
\end{proof}

Lemma~\ref{theoremlocal} demonstrates the distribution of the local minima of $J$-minimization. As is revealed, for any local minimum $\bf x$ in the area of (\ref{theoremfor1}), it also satisfies
\begin{align*}
\|{\bf x-x}^*\|_2\le2C_2\|{\bf A}({\bf x-x}^*)\|_2.
\end{align*}
Therefore, Lemma~\ref{theoremlocal} implies that there is no local minimum in the corresponding annulus. Intuitively, recalling that ${\bf A}({\bf x}(n)-{\bf x}^*)={\bf e}$ for the PGG method, if the initial solution satisfies (\ref{theoremfor1}), the recovered solution is stable against the noise. The following Lemma~\ref{maintheorem} demonstrates the detailed convergence property of the PGG method in one iteration. For simplicity, let $\bf x$ and ${\bf x}^+$ represent ${\bf x}(n)$ and ${\bf x}(n+1)$, respectively.

\begin{lemma}\label{maintheorem}
For any tuple $(J,{\bf A},K)$ with $J(\cdot)$ formed by weakly convex sparseness measure $F(\cdot)$ and $\gamma(J,{\bf A},K)<1$, positive constant $M_0$, and vector ${\bf x}^*$ with $\|{\bf x}^*\|_0\le K$, if the previous iterative solution ${\bf x}$ of the PGG method satisfies (\ref{theoremfor1}) and
\begin{align}\label{theoremfor2}
\|{\bf x}-{\bf x}^*\|_2\ge \frac{2\mu\alpha^2N}{C_1}\kappa+4C_2\|{\bf e}\|_2,
\end{align}
where $\mu>1$ and $C_1$ and $C_2$ are respectively specified as (\ref{lemmafor4}) and (\ref{lemmafor5}), the next iterative solution ${\bf x}^+$ satisfies
\begin{align}
\|{\bf x}^+-{\bf x}^*\|_2^2\le\|{\bf x}-{\bf x}^*\|_2^2-(\mu-1)\alpha^2N\kappa^2.
\end{align}
\end{lemma}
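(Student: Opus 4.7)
The plan is to exploit the fact that, since ${\bf x}(0)={\bf A}^{\dagger}{\bf y}$ and (\ref{itera2}) preserves ${\bf A}{\bf x}(n)={\bf y}$ at every iteration (because ${\bf A}{\bf A}^{\dagger}={\bf I}$), one always has ${\bf A}({\bf x}-{\bf x}^*)={\bf e}$. Writing $P={\bf I}-{\bf A}^{\dagger}{\bf A}$ for the orthogonal projector onto $\mathcal{N}({\bf A})$, I would first unfold (\ref{itera1})--(\ref{itera2}) to obtain the orthogonal decomposition
\begin{align*}
{\bf x}^+ - {\bf x}^* = P({\bf x}-{\bf x}^*) - \kappa P\nabla J({\bf x}) + {\bf A}^{\dagger}{\bf e},
\end{align*}
whose first two summands lie in $\mathcal{N}({\bf A})$ and whose last summand ${\bf A}^{\dagger}{\bf e}$ lies in the row space of ${\bf A}$. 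Combining this with the analogous decomposition ${\bf x}-{\bf x}^*=P({\bf x}-{\bf x}^*)+{\bf A}^{\dagger}{\bf e}$ and invoking the Pythagorean theorem cancels the $\|{\bf A}^{\dagger}{\bf e}\|_2^2$ pieces and reduces the one-step descent to
\begin{align*}
\|{\bf x}^+-{\bf x}^*\|_2^2-\|{\bf x}-{\bf x}^*\|_2^2 = -2\kappa\langle P({\bf x}-{\bf x}^*),\nabla J({\bf x})\rangle + \kappa^2\|P\nabla J({\bf x})\|_2^2.
\end{align*}

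The next step is to lower-bound the inner product. Expanding $\langle P({\bf x}-{\bf x}^*),\nabla J({\bf x})\rangle = ({\bf x}-{\bf x}^*)^{\rm T}\nabla J({\bf x}) - ({\bf A}^{\dagger}{\bf e})^{\rm T}\nabla J({\bf x})$ and summing Lemma~\ref{lemmarest}.4) across coordinates yields
\begin{align*}
({\bf x}-{\bf x}^*)^{\rm T}\nabla J({\bf x}) \ge J({\bf x}) - J({\bf x}^*) + \rho\|{\bf x}-{\bf x}^*\|_2^2.
\end{align*}
Because the hypothesis $\|{\bf x}-{\bf x}^*\|_2 \ge 4C_2\|{\bf e}\|_2 \ge 2C_2\|{\bf A}({\bf x}-{\bf x}^*)\|_2$ together with (\ref{theoremfor1}) places the pair $({\bf x},{\bf x}^*)$ squarely inside the regime of Corollary~\ref{coroconstant}, the gap is at least $(C_1/2)\|{\bf x}-{\bf x}^*\|_2$; combining this with $\rho\|{\bf x}-{\bf x}^*\|_2^2 \ge -(C_1/4)\|{\bf x}-{\bf x}^*\|_2$, which follows from $\|{\bf x}-{\bf x}^*\|_2 \le C_1/(-4\rho)$, I would conclude that $({\bf x}-{\bf x}^*)^{\rm T}\nabla J({\bf x}) \ge (C_1/4)\|{\bf x}-{\bf x}^*\|_2$. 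The residual term $({\bf A}^{\dagger}{\bf e})^{\rm T}\nabla J({\bf x})$ is dispatched by Cauchy--Schwarz together with the uniform bound $\|\nabla J({\bf x})\|_2 \le \alpha\sqrt{N}$ from Lemma~\ref{lemmarest}.3), giving at most $\alpha\sqrt{N}\|{\bf e}\|_2/\sigma_{\min}({\bf A})$, and the same gradient bound controls the quadratic term via $\|P\nabla J({\bf x})\|_2^2 \le \alpha^2 N$.

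Collecting these three estimates delivers
\begin{align*}
\|{\bf x}^+-{\bf x}^*\|_2^2 - \|{\bf x}-{\bf x}^*\|_2^2 \le -\frac{C_1}{2}\kappa\|{\bf x}-{\bf x}^*\|_2 + \frac{2\kappa\alpha\sqrt{N}\|{\bf e}\|_2}{\sigma_{\min}({\bf A})} + \alpha^2 N\kappa^2,
\end{align*}
and inserting the hypothesis (\ref{theoremfor2}) — using $2C_1 C_2 = 2(\alpha\sqrt{N}+C_1)/\sigma_{\min}({\bf A})$ so that the $4C_2\|{\bf e}\|_2$ piece of (\ref{theoremfor2}) absorbs both noise contributions exactly — produces the claimed bound $-(\mu-1)\alpha^2 N\kappa^2$. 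The principal obstacle is the first step: one must recognize that (\ref{itera2}) makes ${\bf A}({\bf x}-{\bf x}^*)={\bf e}$ an invariant and that only the null-space component of the error is ever updated, so the ${\bf A}^{\dagger}{\bf e}$ piece is orthogonal to every direction of motion and does not accumulate in the descent bound. Once this reduction is carried out, the rest is a systematic combination of Lemma~\ref{lemmarest}, Corollary~\ref{coroconstant}, and the two radius constraints (\ref{theoremfor1}) and (\ref{theoremfor2}).
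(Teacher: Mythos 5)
Your proposal is correct and follows essentially the same route as the paper's proof: the Pythagorean decomposition that cancels the $\|{\bf A}^{\dagger}{\bf e}\|_2^2$ pieces is just a repackaging of the paper's direct expansion of $\|{\bf u}-\kappa({\bf I}-{\bf A}^{\dagger}{\bf A})\nabla J({\bf x})\|_2^2$, since ${\bf u}^{\rm T}({\bf I}-{\bf A}^{\dagger}{\bf A})\nabla J({\bf x})=\langle P{\bf u},\nabla J({\bf x})\rangle$ and ${\bf u}^{\rm T}{\bf A}^{\dagger}{\bf A}\nabla J({\bf x})=({\bf A}^{\dagger}{\bf e})^{\rm T}\nabla J({\bf x})$ under the invariant ${\bf A}({\bf x}-{\bf x}^*)={\bf e}$. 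Beyond that, your three estimates (the $C_1\|{\bf u}\|_2/4$ lower bound via Corollary~\ref{coroconstant} and the radius constraint, the Cauchy--Schwarz noise bound, and the $\alpha^2N$ gradient bound) and the final bookkeeping with $C_1C_2$ coincide with the paper's argument.
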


\begin{proof}
The proof is postponed to Section~\ref{proofmaintheorem}.
\end{proof}

According to Lemma~\ref{maintheorem}, if the iterative solution ${\bf x}(n)$ lies within a neighborhood of the sparse signal ${\bf x}^*$ as (\ref{theoremfor1}), as long as the distance between ${\bf x}(n)$ and ${\bf x}^*$ is larger than a quantity linear in both the step size $\kappa$ and the noise term $\|{\bf e}\|_2$, the next iterative solution ${\bf x}(n+1)$ will definitely get closer to ${\bf x}^*$, and the distance reduction is at least $(\mu-1)\alpha^2N\kappa^2$. Therefore, in finite iterations, the iterative solution ${\bf x}(n)$ will get into the $(O(\kappa)+O(\|{\bf e}\|_2))$-neighborhood of ${\bf x}^*$.

To ensure that the PGG method converges, we require the sufficient condition (\ref{theoremfor1}) satisfied for the initial solution. We can simply choose parameters such that
\begin{align}\label{rhocons}
M_0=\|{\bf x}(0)-{\bf x}^*\|_2\le\frac{C_1}{-4\rho}.
\end{align}
The following lemma reveals that penalties with small non-convexity will result in (\ref{rhocons}).

\begin{lemma}\label{theorem_new}
For any tuple $(J,{\bf A},K)$ with $J(\cdot)$ formed by weakly convex sparseness measure $F(\cdot)$ and $\gamma(J,{\bf A},K)<1$, and for any positive constant $M_0$, the constraint (\ref{rhocons}) holds if the non-convexity of $J(\cdot)$ satisfies (\ref{theoremfor4}).
\end{lemma}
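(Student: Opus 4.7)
The plan is to reduce the desired inequality $M_{0}\le C_{1}/(-4\rho)$ to an algebraic inequality in $-\rho$, $\alpha$, $M_{0}$, $\gamma$, and then recognize the hypothesis (\ref{theoremfor4}) as being exactly (or stronger than) that inequality. The case $\rho=0$ is trivial (the LHS of (\ref{theoremfor4}) is zero so the hypothesis is automatic, and $C_{1}/(-4\rho)$ is formally $+\infty$), so I will assume $\rho<0$ throughout the main argument.

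First I would note that $F(M_{0})/M_{0}$ appears in $C_{1}$, and we need a lower bound on this ratio in order to bound $C_{1}$ from below. For this I would invoke Lemma~\ref{lemmarest}.5), which gives $F(t)\ge \alpha|t|+\rho t^{2}$ for all $t\in\mathbb{R}$. Applied at $t=M_{0}>0$ and divided by $M_{0}$, this yields
\begin{equation*}
\frac{F(M_{0})}{M_{0}}\;\ge\;\alpha+\rho M_{0},
\end{equation*}
and therefore, by the definition (\ref{lemmafor4}) of $C_{1}$,
\begin{equation*}
C_{1}\;\ge\;\bigl(\alpha+\rho M_{0}\bigr)\,\frac{1-\gamma(J,{\bf A},K)}{1+\gamma(J,{\bf A},K)}.
\end{equation*}
A preliminary check shows that under (\ref{theoremfor4}) one has $-\rho M_{0}\le \alpha(1-\gamma)/(5+3\gamma)<\alpha$, so $\alpha+\rho M_{0}>0$ and the above lower bound on $C_{1}$ is strictly positive; this legitimizes the manipulations that follow.

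Next I would reduce the target inequality $-4\rho M_{0}\le C_{1}$ (equivalent to (\ref{rhocons}) since $-4\rho>0$) to a clean algebraic form. Using the lower bound on $C_{1}$, it suffices to show
\begin{equation*}
-4\rho M_{0}\;\le\;(\alpha+\rho M_{0})\,\frac{1-\gamma(J,{\bf A},K)}{1+\gamma(J,{\bf A},K)}.
\end{equation*}
Clearing the positive denominator $1+\gamma$ and collecting the $\rho M_{0}$ terms on the left gives $-\rho M_{0}(5+3\gamma)\le \alpha(1-\gamma)$, which rearranges to
\begin{equation*}
\frac{-\rho}{\alpha}\;\le\;\frac{1}{M_{0}}\,\frac{1-\gamma(J,{\bf A},K)}{5+3\gamma(J,{\bf A},K)},
\end{equation*}
i.e., precisely (\ref{theoremfor4}). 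Hence the hypothesis implies the desired chain of inequalities and (\ref{rhocons}) follows.

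There is really no hard step here; the argument is a short algebraic manipulation. The only conceptually non-trivial ingredient is the lower bound $F(M_{0})/M_{0}\ge \alpha+\rho M_{0}$, which is not immediate from the definition of a weakly convex sparseness measure but is provided by Lemma~\ref{lemmarest}.5). The main thing to be careful about is that the coefficient $5+3\gamma$ appearing in the hypothesis is forced by this particular lower bound on $C_{1}$: if one tried to close the argument with a weaker estimate of $F(M_{0})/M_{0}$, the constant would worsen. Verifying that $\alpha+\rho M_{0}>0$ before multiplying both sides by it is the only place where one must use part of the hypothesis before the final equivalence.
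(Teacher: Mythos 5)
Your proof is correct and follows essentially the same route as the paper's: both invoke Lemma~\ref{lemmarest}.5) at $t=M_0$ to obtain $C_1\ge(\alpha+\rho M_0)\frac{1-\gamma(J,{\bf A},K)}{1+\gamma(J,{\bf A},K)}$, and then rearrange $-4\rho M_0\le(\alpha+\rho M_0)\frac{1-\gamma}{1+\gamma}$ into exactly the hypothesis (\ref{theoremfor4}). Your added checks (the $\rho=0$ case and the positivity of $\alpha+\rho M_0$) are details the paper leaves implicit but do not change the argument.
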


\begin{proof}
The proof is postponed to Section~\ref{prooftheoremnew}.
\end{proof}

Next we consider the performance of the APGG method. Since ${\bf A}^{\rm T}{\bf B}$ is adopted as the approximation of ${\bf A}^{\dagger}$, the iterative solution of APGG no longer satisfies ${\bf Ax}(n)={\bf y}$. The following lemma gives the bound of $\|{\bf A}({\bf x}(n)-{\bf x}^*)\|_2$.

\begin{lemma}\label{theorem_space}
The iterative solution ${\bf x}(n)$ of the APGG method satisfies
\begin{align}
\|{\bf A}({\bf x}(n)-{\bf x}^*)\|_2\le \|{\bf y}\|_2\zeta^{n+1}+\frac{1}{2}C_5\kappa+\|{\bf e}\|_2,
\end{align}
where $C_5$ is specified as (\ref{lemmafor8}).
\end{lemma}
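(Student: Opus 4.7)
My plan is to analyze the residual sequence $r_n := \mathbf{A}\mathbf{x}(n) - \mathbf{y}$, derive a simple one-step contraction recursion for it, unfold the recursion as a geometric series, and then convert the resulting bound on $\|r_n\|_2$ into a bound on $\|\mathbf{A}(\mathbf{x}(n)-\mathbf{x}^*)\|_2$ via $\mathbf{y}=\mathbf{A}\mathbf{x}^*+\mathbf{e}$ and the triangle inequality.

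First I would substitute the APGG update $\mathbf{x}(n+1) = \tilde{\mathbf{x}}(n+1) + \mathbf{A}^{\mathrm T}\mathbf{B}(\mathbf{y}-\mathbf{A}\tilde{\mathbf{x}}(n+1))$ into $\mathbf{A}\mathbf{x}(n+1)-\mathbf{y}$ and collect terms. A direct computation gives the clean identity
\begin{equation*}
r_{n+1} = (\mathbf{I}-\mathbf{A}\mathbf{A}^{\mathrm T}\mathbf{B})\bigl(r_n - \kappa\,\mathbf{A}\nabla J(\mathbf{x}(n))\bigr),
\end{equation*}
since $\tilde{\mathbf{x}}(n+1)=\mathbf{x}(n)-\kappa\nabla J(\mathbf{x}(n))$. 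Using the hypothesis $\|\mathbf{I}-\mathbf{A}\mathbf{A}^{\mathrm T}\mathbf{B}\|_2\le\zeta$ together with the coordinatewise bound $|f(x_i)|\le\alpha$ from Lemma~\ref{lemmarest}.3), which yields $\|\nabla J(\mathbf{x}(n))\|_2\le\alpha\sqrt{N}$, the submultiplicativity of the spectral norm then gives the scalar recursion
\begin{equation*}
\|r_{n+1}\|_2 \le \zeta\|r_n\|_2 + \zeta\kappa\alpha\sqrt{N}\,\|\mathbf{A}\|_2.
\end{equation*}

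Next I would handle the base case. Because APGG is initialized with $\mathbf{x}(0)=\mathbf{A}^{\mathrm T}\mathbf{B}\mathbf{y}$, one immediately has $r_0 = (\mathbf{A}\mathbf{A}^{\mathrm T}\mathbf{B}-\mathbf{I})\mathbf{y}$, hence $\|r_0\|_2\le\zeta\|\mathbf{y}\|_2$. Iterating the recursion and summing the geometric series,
\begin{equation*}
\|r_n\|_2 \le \zeta^{n+1}\|\mathbf{y}\|_2 + \zeta\kappa\alpha\sqrt{N}\,\|\mathbf{A}\|_2\sum_{k=0}^{n-1}\zeta^{k} \le \zeta^{n+1}\|\mathbf{y}\|_2 + \frac{\zeta\alpha\sqrt{N}\,\|\mathbf{A}\|_2}{1-\zeta}\kappa,
\end{equation*}
and the second summand is precisely $\tfrac{1}{2}C_5\kappa$ by the definition (\ref{lemmafor8}) of $C_5$.

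Finally, I would convert the residual bound into the stated bound on $\|\mathbf{A}(\mathbf{x}(n)-\mathbf{x}^*)\|_2$ by writing $\mathbf{A}(\mathbf{x}(n)-\mathbf{x}^*) = r_n + \mathbf{e}$ and invoking the triangle inequality. I do not anticipate a serious obstacle: the argument is essentially a Banach-style contraction applied to the residual, and the only subtlety is verifying the algebraic identity for $r_{n+1}$ and keeping the index in the geometric sum correct so that the $\zeta^{n+1}$ factor (rather than $\zeta^n$) appears in the final bound.
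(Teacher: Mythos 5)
Your proposal is correct and follows essentially the same route as the paper: the same one-step identity $r_{n+1}=(\mathbf{I}-\mathbf{A}\mathbf{A}^{\rm T}\mathbf{B})(r_n-\kappa\mathbf{A}\nabla J(\mathbf{x}(n)))$ (the paper writes it with the opposite sign convention for the residual), the same bounds $\zeta$ and $\alpha\sqrt{N}\|\mathbf{A}\|_2$, the same base case from $\mathbf{x}(0)=\mathbf{A}^{\rm T}\mathbf{B}\mathbf{y}$, geometric-series unfolding, and the final triangle inequality via $\mathbf{y}=\mathbf{A}\mathbf{x}^*+\mathbf{e}$. No gaps; the constant $\tfrac{1}{2}C_5$ and the $\zeta^{n+1}$ exponent come out exactly as in the paper's proof.
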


\begin{proof}
The proof is postponed to Section~\ref{prooftheoremspace}.
\end{proof}

According to Lemma~\ref{theorem_space}, if the accurate pseudo-inverse matrix is applied, i.e., $\zeta=0$, the result is consistent in the scenario with accurate projection. For any fixed approximate precision $\zeta\in(0,1)$, as $n$ approaches infinity and the step size $\kappa$ is sufficiently small, the result reveals that the performance degradation caused by the approximate projection can be omitted. For the convenience of theoretical analysis, define a constant $N_{\kappa}$ such that for all $n\ge N_{\kappa}$,
\begin{align*}
\|{\bf A}({\bf x}(n)-{\bf x}^*)\|_2\le C_5\kappa+\|{\bf e}\|_2.
\end{align*}

Since Lemma~\ref{lemmanew}, Corollary~\ref{coroconstant}, Lemma~\ref{theoremlocal}, and Lemma~\ref{theorem_new} are independent of specific algorithms, they still hold for the APGG method. The following lemma demonstrates the convergence property of APGG in one iteration, which is a counterpart of Lemma~\ref{maintheorem}.

\begin{lemma}\label{lemma_APGG}
For any tuple $(J,{\bf A},K)$ with $J(\cdot)$ formed by weakly convex sparseness measure $F(\cdot)$ and $\gamma(J,{\bf A},K)<1$, positive constant $M_0$, vector ${\bf x}^*$ with $\|{\bf x}^*\|_0\le K$, and ${\bf A}^{\rm T}{\bf B}$ as an approximate pseudo-inverse matrix with $\zeta<1$, if the previous iterative solution ${\bf x}$ of the APGG method satisfies (\ref{theoremfor1}) and
\begin{align}\label{itfirstnoise}
\|{\bf x}-{\bf x}^*\|_2\ge\mu C_3\kappa+C_4\|{\bf e}\|_2,
\end{align}
where $\mu>1$ and $C_3$ and $C_4$ are respectively specified as (\ref{lemmafor6}) and (\ref{lemmafor7}), the next iterative solution ${\bf x}^+$ satisfies
\begin{align}\label{lemmafor11}
\|{\bf x}^+-{\bf x}^*\|_2^2\le\|{\bf x}-{\bf x}^*\|_2^2-(\mu-1)d\alpha^2N\kappa^2,
\end{align}
where $d=\|{\bf I}-{\bf A}^{\rm T}{\bf BA}\|_2^2$.
\end{lemma}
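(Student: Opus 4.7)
The plan is to parallel the proof of Lemma~\ref{maintheorem}, carrying through the extra error terms introduced by the approximate pseudo-inverse $\mathbf{A}^{\rm T}\mathbf{B}$. First I would combine the two updates (\ref{itera1}) and (\ref{itera2}) (with $\mathbf{A}^{\rm T}\mathbf{B}$ in place of $\mathbf{A}^\dagger$), substitute $\mathbf{y} = \mathbf{A}\mathbf{x}^* + \mathbf{e}$, and collect to the clean identity
\begin{equation*}
\mathbf{x}^+ - \mathbf{x}^* \;=\; \mathbf{P}\bigl((\mathbf{x}-\mathbf{x}^*) - \kappa\nabla J(\mathbf{x})\bigr) + \mathbf{A}^{\rm T}\mathbf{B}\mathbf{e}, \qquad \mathbf{P} := \mathbf{I} - \mathbf{A}^{\rm T}\mathbf{B}\mathbf{A},
\end{equation*}
noting $\|\mathbf{P}\|_2^2 = d$. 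Squaring and expanding decomposes $\|\mathbf{x}^+ - \mathbf{x}^*\|_2^2$ into a ``main'' piece $\|\mathbf{P}(\mathbf{x}-\mathbf{x}^*)\|_2^2 - 2\kappa(\mathbf{x}-\mathbf{x}^*)^{\rm T}\mathbf{P}^{\rm T}\mathbf{P}\nabla J(\mathbf{x}) + \kappa^2\|\mathbf{P}\nabla J(\mathbf{x})\|_2^2$ plus cross terms involving $\mathbf{A}^{\rm T}\mathbf{B}\mathbf{e}$.

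To bound the descent term from below, I would use the vector form of Proposition~\ref{proposition_3}, obtained by summing coordinatewise: $\nabla J(\mathbf{x})^{\rm T}(\mathbf{x}-\mathbf{x}^*) \ge J(\mathbf{x}) - J(\mathbf{x}^*) + \rho\|\mathbf{x}-\mathbf{x}^*\|_2^2$. Lemma~\ref{theorem_space} (applied for $n \ge N_\kappa$) gives $\|\mathbf{A}(\mathbf{x}-\mathbf{x}^*)\|_2 \le C_5\kappa + \|\mathbf{e}\|_2$, and the hypothesis (\ref{itfirstnoise}) together with the ``$2C_2C_5$'' and ``$2C_2$'' branches of (\ref{lemmafor6})--(\ref{lemmafor7}) forces $\|\mathbf{x}-\mathbf{x}^*\|_2 \ge 2C_2\|\mathbf{A}(\mathbf{x}-\mathbf{x}^*)\|_2$, activating Corollary~\ref{coroconstant} to yield $J(\mathbf{x}) - J(\mathbf{x}^*) \ge (C_1/2)\|\mathbf{x}-\mathbf{x}^*\|_2$. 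The $\rho$-term is absorbed by the neighborhood constraint (\ref{theoremfor1}), leaving the linear lower bound $\nabla J(\mathbf{x})^{\rm T}(\mathbf{x}-\mathbf{x}^*) \ge (C_1/4)\|\mathbf{x}-\mathbf{x}^*\|_2$, while $\|\mathbf{P}\nabla J(\mathbf{x})\|_2^2 \le d\alpha^2 N$ follows from Lemma~\ref{lemmarest}.3.

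The middle term $(\mathbf{x}-\mathbf{x}^*)^{\rm T}\mathbf{P}^{\rm T}\mathbf{P}\nabla J(\mathbf{x})$ differs from $(\mathbf{x}-\mathbf{x}^*)^{\rm T}\nabla J(\mathbf{x})$ by corrections of the form $(\mathbf{x}-\mathbf{x}^*)^{\rm T}\mathbf{A}^{\rm T}(\cdots)\mathbf{A}\nabla J(\mathbf{x})$, which factor through $\|\mathbf{A}(\mathbf{x}-\mathbf{x}^*)\|_2$ and are therefore controlled by $C_5\kappa + \|\mathbf{e}\|_2$, $\|\nabla J(\mathbf{x})\|_2 \le \alpha\sqrt{N}$, $\|\mathbf{B}\|_2$, and $\|\mathbf{A}\|_2$. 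The noise cross term $2\bigl(\mathbf{P}(\mathbf{x}-\mathbf{x}^*-\kappa\nabla J(\mathbf{x}))\bigr)^{\rm T}\mathbf{A}^{\rm T}\mathbf{B}\mathbf{e}$ and the quadratic noise term $\|\mathbf{A}^{\rm T}\mathbf{B}\mathbf{e}\|_2^2$ are bounded by the same ingredients. All these contributions are exactly what (\ref{lemmafor8})--(\ref{lemmafor10}) are engineered to collect into $C_5$, $C_6$, and $C_7$.

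Plugging the bounds back in, the hypothesis (\ref{itfirstnoise}) with $\mu>1$ forces the linear descent $(\kappa C_1/2)\|\mathbf{x}-\mathbf{x}^*\|_2$ to dominate the sum of the quadratic $d\alpha^2 N\kappa^2$ term and all residual approximation/noise contributions by a margin of at least $(\mu-1)d\alpha^2 N\kappa^2$, yielding (\ref{lemmafor11}). The main obstacle I expect is precisely the book-keeping in the previous paragraph: every cross term produced by $\mathbf{P}^{\rm T}\mathbf{P} - \mathbf{I}$ and by the additive $\mathbf{A}^{\rm T}\mathbf{B}\mathbf{e}$ must be apportioned cleanly between the $\kappa$- and $\|\mathbf{e}\|_2$-channels of (\ref{itfirstnoise}). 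The inequality closes only because $C_3$ and $C_4$ are defined as \emph{maxima} of two expressions---one guaranteeing $2C_2\|\mathbf{A}(\mathbf{x}-\mathbf{x}^*)\|_2 \le \|\mathbf{x}-\mathbf{x}^*\|_2$ (so that Corollary~\ref{coroconstant} applies), and the other dominating the residuals packed into $C_6$ and $C_7$.
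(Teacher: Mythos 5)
Your starting identity ${\bf x}^+-{\bf x}^*={\bf P}\left(({\bf x}-{\bf x}^*)-\kappa\nabla J({\bf x})\right)+{\bf A}^{\rm T}{\bf B}{\bf e}$ is algebraically correct, and several ingredients do match the paper's proof: the bound $\|{\bf P}\nabla J({\bf x})\|_2^2\le d\alpha^2N$, the activation of the descent bound ${\bf u}^{\rm T}\nabla J({\bf x})\ge C_1\|{\bf u}\|_2/4$ through Lemma~\ref{theorem_space} and the $2C_2C_5$, $2C_2$ branches of $C_3,C_4$. But the book-keeping you defer is exactly where the argument breaks. By splitting the update into a ${\bf P}({\bf x}-{\bf x}^*)$ channel and an ${\bf A}^{\rm T}{\bf B}{\bf e}$ channel, and bounding each through $\|{\bf A}({\bf x}-{\bf x}^*)\|_2\le C_5\kappa+\|{\bf e}\|_2$ and $\|{\bf e}\|_2$ separately, you create terms that are quadratic in $\|{\bf e}\|_2$ and carry \emph{no} factor of $\kappa$: the quadratic noise term $\|{\bf A}^{\rm T}{\bf B}{\bf e}\|_2^2\le(1+\zeta)\|{\bf B}\|_2\|{\bf e}\|_2^2$, the noise part of the cross term $2({\bf P}{\bf u})^{\rm T}{\bf A}^{\rm T}{\bf B}{\bf e}$, and the noise part of $\|{\bf P}{\bf u}\|_2^2-\|{\bf u}\|_2^2$ (note $\|{\bf P}\|_2\ge1$ because ${\bf P}$ acts as the identity on $\mathcal{N}({\bf A})$, so this difference cannot simply be dropped; expanding it produces terms of order $\|{\bf B}\|_2\|{\bf A}{\bf u}\|_2^2$). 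The only negative term available is the descent term, roughly $-\frac{C_1\kappa}{2}\|{\bf u}\|_2$, whose noise capacity under hypothesis (\ref{itfirstnoise}) is $\frac{C_1\kappa}{2}C_4\|{\bf e}\|_2=O(\kappa\|{\bf e}\|_2)$. An uncancelled $O(\|{\bf e}\|_2^2)$ surplus cannot be dominated by $O(\kappa\|{\bf e}\|_2)$ uniformly: fix $\|{\bf e}\|_2$ small, let $\kappa\rightarrow0$ (both (\ref{theoremfor1}) and (\ref{itfirstnoise}) remain satisfiable in this regime), and the chain of inequalities misses (\ref{lemmafor11}) by a margin of order $\|{\bf e}\|_2^2$. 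The target inequality leaves room only for $\kappa^2$ and $\kappa\|{\bf e}\|_2$ errors, which is why in the paper's final estimate the quantities $C_6\kappa$ and $C_7\|{\bf e}\|_2$ appear multiplied by $\frac{C_1}{2}\kappa$.

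The cancellation that rescues the argument is exact but is destroyed by your term-by-term norm bounds: since ${\bf P}{\bf u}+{\bf A}^{\rm T}{\bf B}{\bf e}={\bf u}+{\bf A}^{\rm T}{\bf B}({\bf y}-{\bf A}{\bf x})$, the three $\kappa$-free terms in your expansion combine into $\|{\bf u}+{\bf A}^{\rm T}{\bf B}({\bf y}-{\bf A}{\bf x})\|_2^2$, and the residual ${\bf y}-{\bf A}{\bf x}$ — unlike ${\bf A}({\bf x}-{\bf x}^*)$ — carries no noise: by (\ref{prooflemma4}) and $n\ge N_\kappa$ it is bounded by $C_5\kappa$ alone, because the iterates track the noisy measurement ${\bf y}$ rather than ${\bf A}{\bf x}^*$. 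This is precisely the paper's decomposition (\ref{proof3}): it keeps ${\bf A}^{\rm T}{\bf B}({\bf y}-{\bf A}{\bf x})$ intact as a unit, so every error term in its expansion inherits a factor of $\kappa$, either explicitly or through $\|{\bf y}-{\bf A}{\bf x}\|_2\le C_5\kappa$, and the noise enters only as $\kappa\|{\bf e}\|_2$ via $\|{\bf A}{\bf u}\|_2\le C_5\kappa+\|{\bf e}\|_2$. To repair your proof you must regroup $-{\bf A}^{\rm T}{\bf B}{\bf A}{\bf u}+{\bf A}^{\rm T}{\bf B}{\bf e}$ into ${\bf A}^{\rm T}{\bf B}({\bf y}-{\bf A}{\bf x})$ \emph{before} taking norms — which is to say, revert to the paper's grouping.
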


\begin{proof}
The proof is postponed to Section~\ref{prooftheoremAPGG}.
\end{proof}

\section{Numerical Simulation}\label{Sec_Simu}

In this section, several simulations are implemented to test the recovery performance of the (A)PGG method, and to verify the theoretical analysis. The sensing matrix $\bf A$ is of size $M=200$ and $N=1000$, whose entries are independently and identically distributed Gaussian with zero mean and variance $1/M$. The locations of the nonzero entries of the sparse signal ${\bf x}^*$ are randomly chosen among all possible choices, and these nonzero entries satisfy Gaussian distribution or symmetric Bernoulli distribution with zero mean. The sparse signal is finally normalized to have unit $\ell_2$ norm. In all simulations, the approximate ${\bf A}^{\dagger}$ is calculated using the method introduced in Appendix~\ref{app_cal}.

\begin{figure}[t]
\begin{center}
\includegraphics[width=4in]{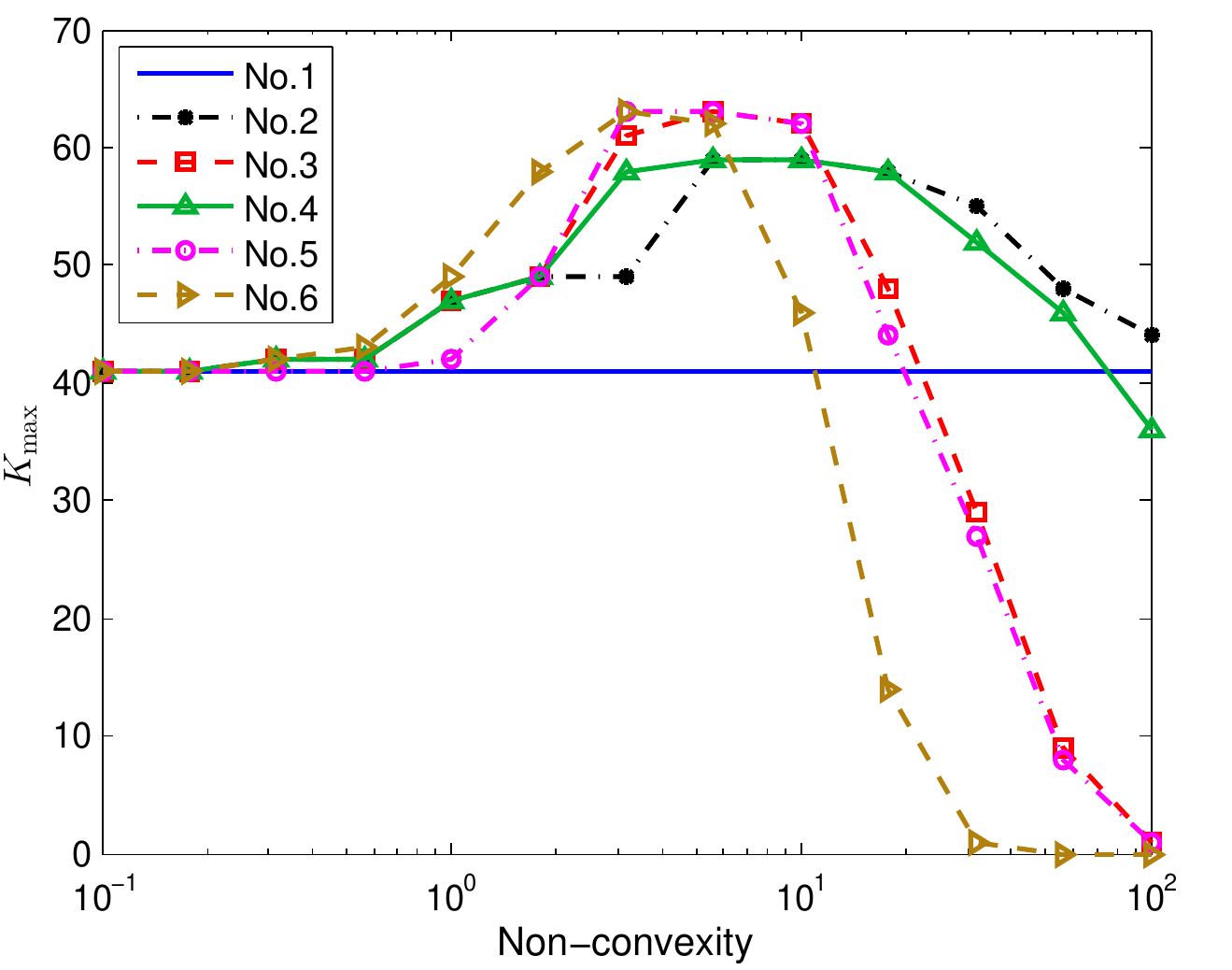}
\caption{The figure shows the recovery performance of the PGG method with different sparsity-inducing penalties and different choices of non-convexity when the nonzero entries of the sparse signal satisfy Gaussian distribution. The corresponding sparseness measures are from TABLE~\ref{table constant}. The problem dimensions are $M=200$ and $N=1000$, and $K_{\max}$ is the largest integer which guarantees $100\%$ successful recovery.}\label{maxkrho}
\end{center}
\end{figure}

The first experiment tests the recovery performance of the PGG method in the noiseless scenario with different sparsity-inducing penalties and different choices of non-convexity. The penalties are formed by sparseness measures in TABLE~\ref{table constant}. The parameter $p=0.5$ and $\sigma$ is set to have desired non-convexity. The No.~1 corresponds to the $\ell_1$ penalty, which is tested in the same parameter settings as a benchmark. The penalties are scaled so that the parameter $\alpha=1$. For each penalty with some certain non-convexity, the sparsity level $K$ varies from $1$ to $100$ with increment of one. The step size $\kappa$ is set to $1\times 10^{-5}$. If the recovery SNR (RSNR) is higher than $40$dB, this recovery is regarded as a success. The simulation is repeated $100$ times to calculate the successful recovery probability versus sparsity $K$. Then the crucial sparsity $K_{\max}$, which is the largest integer which guarantees $100\%$ successful recovery, is recorded. The results when the nonzero entries of the sparse signal satisfy Gaussian distribution and Bernoulli distribution are presented in Fig.~\ref{maxkrho} and Fig.~\ref{maxkrho1}, respectively. As can be seen from the results, as the non-convexity of the sparsity-inducing penalty increases, the performance of PGG improves at first, and degenerates when the non-convexity continues to grow. When the non-convexity approaches zero, the performances of these penalties are close to that of the $\ell_1$ penalty. The results support the speculation in the end of Section~\ref{subsec_perf} that as the non-convexity increases, the performance of $J$-minimization improves, and verify Theorem~\ref{theorem_PGG} that the non-convexity should be smaller than a threshold to guarantee the convergence of PGG.

\begin{figure}[t]
\begin{center}
\includegraphics[width=4in]{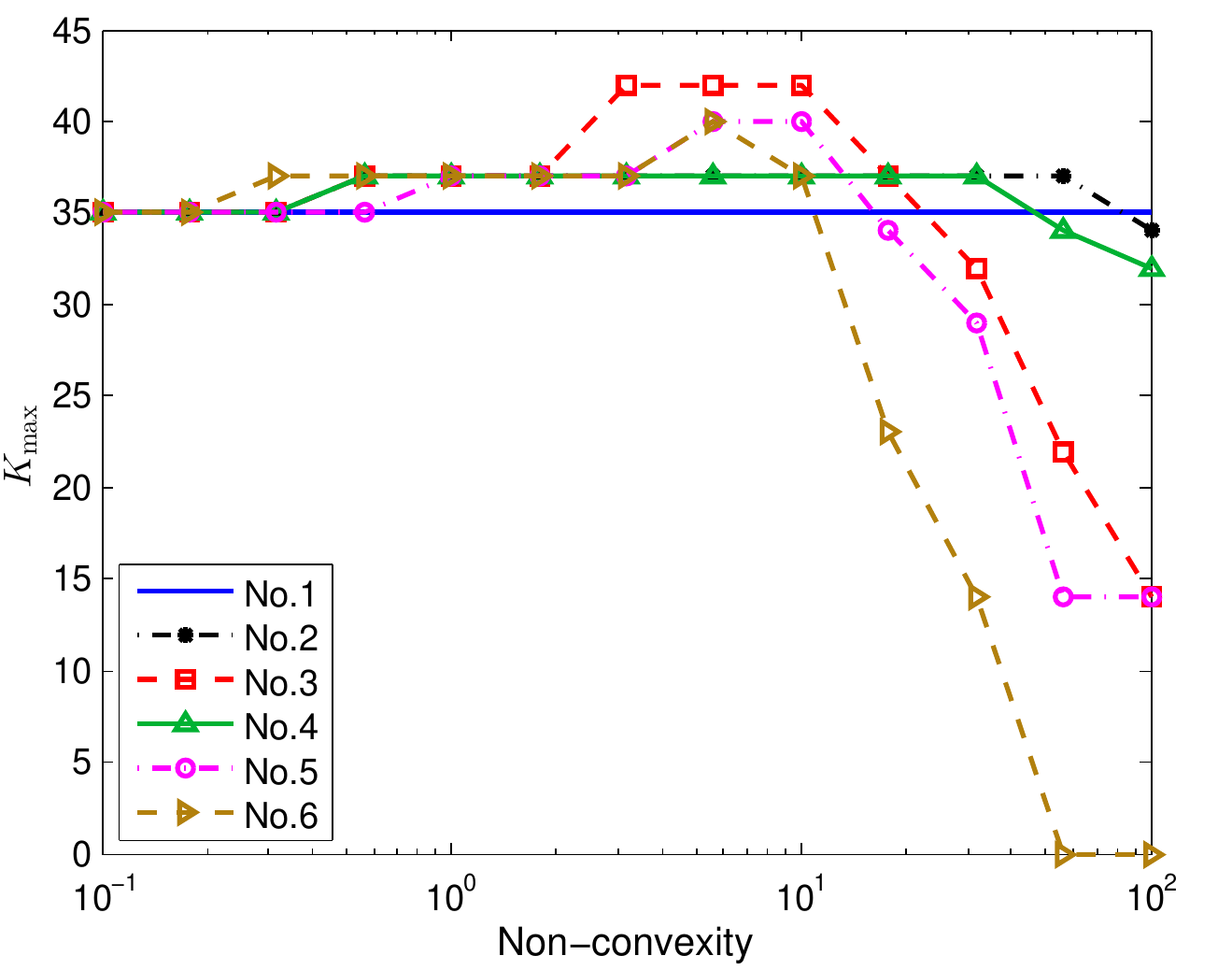}
\caption{The figure shows the recovery performance of the PGG method with different sparsity-inducing penalties and different choices of non-convexity when the nonzero entries of the sparse signal satisfy Bernoulli distribution. The corresponding sparseness measures are from TABLE~\ref{table constant}. The problem dimensions are $M=200$ and $N=1000$, and $K_{\max}$ is the largest integer which guarantees $100\%$ successful recovery.}\label{maxkrho1}
\end{center}
\end{figure}

In the second experiment, the recovery performance of (A)PGG is compared in the noiseless scenario with some typical sparse recovery algorithms, including orthogonal matching pursuit (OMP) \cite{OMP}, the solution to $\ell_1$-minimization \cite{cvx_1}, reweighted $\ell_1$ minimization \cite{RL1}, ISL0 \cite{ISL0}, and IRLS \cite{IRLS}. In the simulation $K$ varies from $20$ to $100$. The (A)PGG method adopts the No.~6 sparseness measure in TABLE~\ref{table constant} with non-convexity as $10^{0.75}$, and the penalty is scaled so that $\alpha=1$. The step size is set to $1\times 10^{-5}$. The iteration number for calculating inexact pseudo-inverse matrices is $0$ and the average approximate precision $\zeta=0.91$. The simulation is repeated $500$ times to calculate the successful recovery probability versus sparsity $K$. The simulation results when the nonzero entries of the sparse signal satisfy Gaussian distribution and Bernoulli distribution are demonstrated in Fig.~\ref{probability} and Fig.~\ref{probability1}, respectively. As can be seen, for both distributions, IRLS, PGG, and APGG guarantee successful recovery for larger sparsity $K$ than the other references. It also reveals that in the noiseless scenario with sufficiently small step size, the approximate projection has little influence on the recovery performance of APGG.

\begin{figure}[t]
\begin{center}
\includegraphics[width=4in]{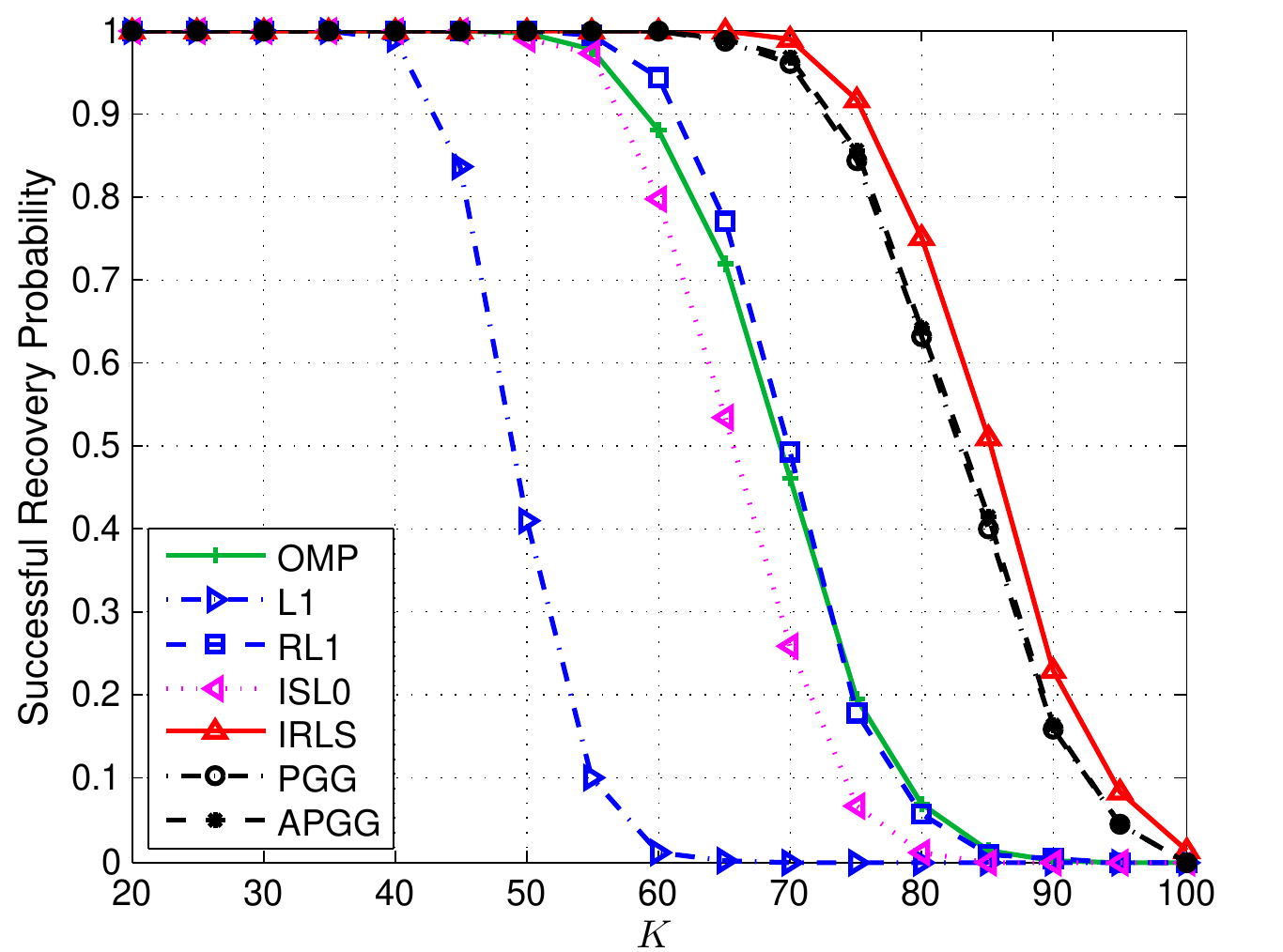}
\caption{The figure compares the successful recovery probability of different algorithms versus sparsity $K$ with $M=200$ and $N=1000$ when the nonzero entries of the sparse signal satisfy Gaussian distribution. The approximate precision of approximate ${\bf A}^{\dagger}$ is $\zeta=0.91$.}\label{probability}
\end{center}
\end{figure}

\begin{figure}[t]
\begin{center}
\includegraphics[width=4in]{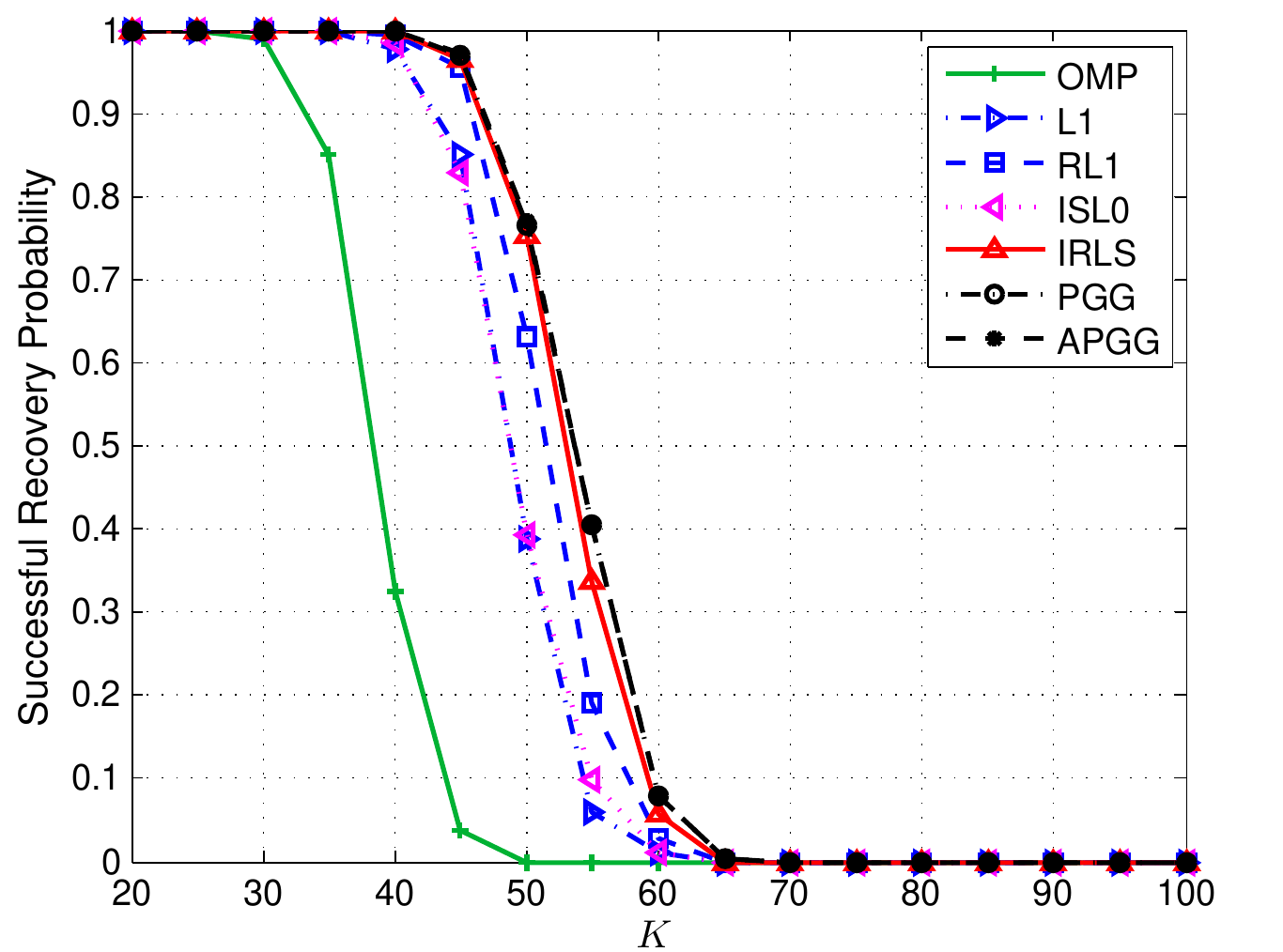}
\caption{The figure compares the successful recovery probability of different algorithms versus sparsity $K$ with $M=200$ and $N=1000$ when the nonzero entries of the sparse signal satisfy Bernoulli distribution. The approximate precision of approximate ${\bf A}^{\dagger}$ is $\zeta=0.91$.}\label{probability1}
\end{center}
\end{figure}

In the last experiment, the recovery precisions of the (A)PGG method are simulated under different settings of step size and measurement noise. In the simulation, the nonzero entries of the sparse signal satisfy Gaussian distribution and the sparsity level $K=30$. The same sparseness measure as that in the previous experiment is adopted, and the iteration number for calculating approximate ${\bf A}^{\dagger}$ is $4$ such that $\zeta=0.22$. The simulation is repeated $500$ times to calculate the $95\%$ confidence interval of RSNR and the average RSNR (which is defined as the mean relative root squared error in dB), and the results are shown in Fig.~\ref{SNRvskappa}. As can be seen, there is almost no difference between the performance of PGG and that of APGG. In the noisy scenario, the RSNR is dependent on both the step size and the measurement SNR (MSNR). For fixed MSNR, as the step size decreases, the RSNR improves at first, and remains the same when the step size is sufficiently small. Larger MSNR results in larger RSNR limit. In the noiseless scenario, the RSNR improves as the step size decreases, and it can be arbitrarily large by adopting sufficiently small step size. These results are accordant with Theorem~\ref{theorem_PGG} and Theorem~\ref{theorem_APGG}, which implies that the recovery error is linear in both the step size and the noise term.

\begin{figure}[t]
\begin{center}
\includegraphics[width=4in]{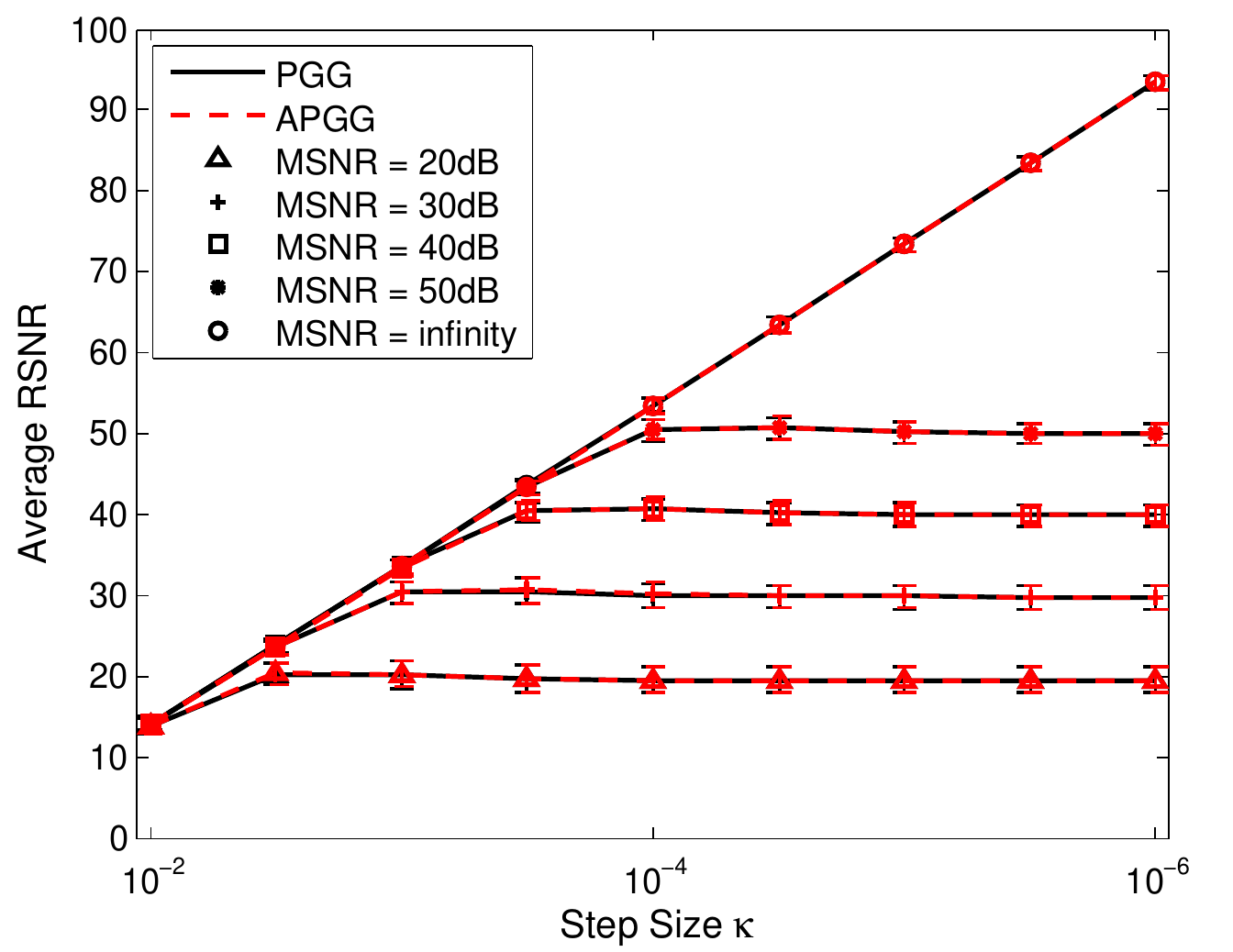}
\caption{The figure demonstrates the recovery precisions of the (A)PGG method with their $95\%$ confidence intervals under different step sizes and MSNRs with $M=200$, $N=1000$, and $K=30$ when the nonzero entries of the sparse signal satisfy Gaussian distribution. The approximate precision of approximate ${\bf A}^{\dagger}$ is $\zeta=0.22$.}\label{SNRvskappa}
\end{center}
\end{figure}

\section{Proof}\label{Sec_Proof}

\subsection{Proof of Lemma~\ref{lemmapre}}\label{prooflemmapre}

\begin{proof}

1) The continuity of $F(\cdot)$ can be easily checked by Proposition~\ref{proposition_1} and the continuity of convex functions. As for the inequality, we only need to consider the case of $t>0$. Since $F(t)/t$ is non-increasing on $(0,+\infty)$ and
\begin{align*}
\lim_{t\rightarrow0_+}\frac{F(t)}{t}=\lim_{t\rightarrow0_+}\left(\frac{H(t)}{t}+\rho t\right)=\lim_{t\rightarrow0_+}\frac{H(t)-H(0)}{t-0}\triangleq\alpha
\end{align*}
is a finite quantity, it holds that for all $t>0$, $F(t)/t\le\alpha$.

2) It is easy to check that $F(\beta t)$ satisfies Definition~\ref{definition_weak_spar}.1)-3). Since $F(\beta t)=H(\beta t)+\beta^2\rho t^2$ and $H(\beta t)$ is convex, $F(\beta t)$ satisfies Definition~\ref{definition_weak_spar}.4) with parameter $\rho_{\beta}=\beta^2\rho$. In addition, since $F(\beta t)/t=H(\beta t)/t+\beta^2\rho t$, the same argument as the proof of Lemma~\ref{lemmapre}.1) implies that $\alpha_{\beta}=\beta\alpha$.
\end{proof}

\subsection{Proof of Theorem~\ref{theorem_l0}}\label{proofthl0}

\begin{proof}
According to the definition of null space constant, $\gamma(\ell_0,{\bf A},K)<1$ implies that for any nonzero vector ${\bf z}\in\mathcal{N}({\bf A})$, $\bf z$ has at least $(2K+1)$ nonzero entries, and any $2K$ column vectors of ${\bf A}$ are linearly independent. Since $F(\cdot)$ is non-decreasing and bounded on $[0,+\infty)$, without loss of generality, we assume $\lim_{t\rightarrow+\infty}F(t)=C>0$.

For any $\varepsilon>0$, define
\begin{align*}
\delta=\frac{\varepsilon}{\sqrt{N}(D\|{\bf A}\|_2+1)}>0
\end{align*}
where $D^{-1}$ is the smallest singular value of all $2K$ column submatrices of $\bf A$ ($D^{-1}$ is nonzero since any $2K$ column vectors of ${\bf A}$ are linearly independent). Since $F(\cdot)$ is non-decreasing on $[0,+\infty)$, these exists $\beta_0>0$ such that for all $\beta>\beta_0$ and for all $t>\delta$, $F(\beta t)>\frac{K}{K+1}C$.

First we prove that for all $\beta>\beta_0$, $\hat{\bf x}^{\beta}$ has at most $K$ entries with absolute value no less than $\delta$. This is due to the fact that (define $I_{\beta}$ as the set of index $i$ satisfying $|\hat{x}^{\beta}_i|\ge\delta$)
\begin{align*}
KC\ge J(\beta{\bf x}^*)\ge J(\beta\hat{\bf x}^{\beta})\ge\sum_{i\in I_{\beta}}F(\beta\hat{x}^{\beta}_i)>\frac{K}{K+1}C\cdot\#I_{\beta}
\end{align*}
which implies $\#I_{\beta}\le K$. Together with $K$-sparse signal ${\bf x}^*$, at most $2K$ entries of $\hat{\bf x}^{\beta}-{\bf x}^*$ are with absolute value no less than $\delta$.

Now we prove that for all $\beta>\beta_0$, $\|\hat{\bf x}^{\beta}-{\bf x}^*\|_2\le\varepsilon$. Define ${\bf z}^{\beta}=\hat{\bf x}^{\beta}-{\bf x}^*$ and $I^{\beta}$ as the set of index $i$ satisfying $|{z}^{\beta}_i|\ge\delta$, then as has been proved, $\#I^{\beta}\le2K$. On the one hand,
\begin{align*}
\|{\bf z}_{(I^{\beta})^c}^{\beta}\|_2\le\sqrt{N}\delta.
\end{align*}
On the other hand, since ${\bf A}{\bf z}^{\beta}=\bf 0$,
\begin{align*}
\|{\bf z}_{I^{\beta}}^{\beta}\|_2\le D\|{\bf A}{\bf z}_{I^{\beta}}^{\beta}\|_2=D\|{\bf A}{\bf z}_{(I^{\beta})^c}^{\beta}\|_2\le D\|{\bf A}\|_2\sqrt{N}\delta.
\end{align*}
Therefore,
\begin{align*}
\|{\bf z}^{\beta}\|_2\le\|{\bf z}_{I^{\beta}}^{\beta}\|_2+\|{\bf z}_{(I^{\beta})^c}^{\beta}\|_2\le(D\|{\bf A}\|_2+1)\sqrt{N}\delta=\varepsilon
\end{align*}

To sum up, we have proved that for any $\varepsilon>0$, there exists $\beta_0>0$ such that for all $\beta>\beta_0$, $\|\hat{\bf x}^{\beta}-{\bf x}^*\|_2\le\varepsilon$. This directly leads to Theorem~\ref{theorem_l0}.
\end{proof}

\subsection{Proof of Theorem~\ref{theorem_NSP}}\label{proofthNSP}

\begin{proof}
Define a class of penalties $J_{\beta}({\bf x})=J(\beta{\bf x})$ for $\beta>0$. We first prove that for all $\beta>0$, $\gamma(J,{\bf A},K)=\gamma(J_{\beta},{\bf A},K)$. This can be easily proved from the definition of the null space constant and the fact that for all $\beta>0$, $\beta{\bf z}\in\mathcal{N}({\bf A})$ is equivalent to ${\bf z}\in\mathcal{N}({\bf A})$.

Now we prove $\gamma(J,{\bf A},K)=\gamma(\ell_1,{\bf A},K)$. If not, according to Proposition~\ref{proposition_NSP}.3), there exists $\delta>0$ such that for all $\beta>0$,
\begin{align}\label{app_equ1}
\gamma(J_{\beta},{\bf A},K)\le\gamma(\ell_1,{\bf A},K)-3\delta.
\end{align}
According to the definition of the null space constant, there exist ${\bf z}\in\mathcal{N}({\bf A})$ and set $S$ with $\#S\le K$ such that
\begin{align}\label{app_equ2}
\|{\bf z}_S\|_1/\|{\bf z}_{S^c}\|_1\ge\gamma(\ell_1,{\bf A},K)-\delta.
\end{align}
In addition, since for fixed ${\bf z}$ and $S$,
\begin{align*}
\lim_{\beta\rightarrow0_+}J(\beta{\bf z}_S)/J(\beta{\bf z}_{S^c})=\|{\bf z}_S\|_1/\|{\bf z}_{S^c}\|_1,
\end{align*}
there exists $\beta_0>0$ such that for all $0<\beta\le\beta_0$,
\begin{align}\label{app_equ3}
J(\beta{\bf z}_S)/J(\beta{\bf z}_{S^c})\ge\|{\bf z}_S\|_1/\|{\bf z}_{S^c}\|_1-\delta.
\end{align}
Combining (\ref{app_equ2}) with (\ref{app_equ3}), it can be derived that
\begin{align}
J(\beta{\bf z}_S)/J(\beta{\bf z}_{S^c})\ge\gamma(\ell_1,{\bf A},K)-2\delta
\end{align}
holds for all $0<\beta\le\beta_0$, which contradicts (\ref{app_equ1}).
\end{proof}

\subsection{Proof of Lemma~\ref{lemmarest}}\label{prooflemmarest}

\begin{proof}

1) Consider the non-trivial scenario where $t_1$ and $t_2$ are both nonzero. Since $F(t)/t$ is non-increasing on $(0,+\infty)$, it is easily checked that
\begin{equation*}
\begin{aligned}
F(t_1)=F(|t_1|)&\ge \left(|t_1|F(|t_1|+|t_2|)\right)/(|t_1|+|t_2|);\\
F(t_2)=F(|t_2|)&\ge \left(|t_2|F(|t_1|+|t_2|)\right)/(|t_1|+|t_2|).
\end{aligned}
\end{equation*}
Summing these two inequalities, together with the non-decreasing property of $F(\cdot)$ on $[0,+\infty)$, it holds that
\begin{align*}
F(t_1)+F(t_2)&\ge F(|t_1|+|t_2|)\ge F(|t_1+t_2|)=F(t_1+t_2).
\end{align*}

2) Since $F(\cdot)$ is non-decreasing on $[0,+\infty)$, the directional derivative
\begin{align*}
D_F(t,-1)=\lim_{\theta\rightarrow0_+}(F(t-\theta)-F(t))/\theta\le0
\end{align*}
holds for all $t>0$. Therefore, the definition of the generalized gradient set (\ref{definition_gen}) implies that for all $f(t)\in\partial F(t)$, $f(t)\ge0$.

3) It is easy to check that $F(\cdot)$ is also weakly convex on $(-\infty,0]$ with parameter $\rho$ and that for all $t\in\mathbb{R}$, $\partial F(-t)=-\partial F(t)$. Therefore we only need to consider the case of $t>0$. Due to the non-increasing property of $F(t)/t$, it can be verified that $(F(t+\theta)-F(t))/\theta\le F(t)/t$ holds for all $\theta>0$. Therefore the definition of the generalized gradient implies
\begin{align*}
0\le f(t)\le\lim_{\theta\rightarrow0_+}(F(t+\theta)-F(t))/\theta\le F(t)/t\le\alpha.
\end{align*}

4) First, if $(t_1,t_2)$ satisfies the inequality (\ref{lemmapre3}), it is easy to check that $(-t_1,-t_2)$ also satisfies it, therefore we only need to consider the scenario that $t_1\ge 0$.

If $t_1=0$, the result is obvious since $\rho\le0$. If $t_1>0$ and $t_2\ge0$, according to Proposition~\ref{proposition_3} and the fact that $F(\cdot)$ is weakly convex with parameter $\rho$ on $[0,+\infty)$, the inequality (\ref{lemmapre3}) is still obvious. If $t_1>0$ and $t_2<0$, then $-t_2>0$. Since $f(t_1)\ge 0$, it can be derived that
\begin{align*}
(t_1-t_2)f(t_1)&\ge F(t_1)-F(-t_2)+\rho(t_1+t_2)^2\\
&\ge F(t_1)-F(t_2)+\rho(t_1-t_2)^2.
\end{align*}
To sum up, the inequality (\ref{lemmapre3}) is proved.

5) Assume $F(t)=H(t)+\rho t^2$ and decompose $H(\cdot)$ by $H(t)=\alpha|t|+G(t)$. Since $H(\cdot)$ is convex, according to the definition of $\alpha$,  $G(t)\ge0$.
\end{proof}

\subsection{Proof of Lemma~\ref{lemmanew}}\label{prooflemmanew}

\begin{proof}
Define ${\bf u}={\bf x}-{\bf x}^*$ and decompose ${\bf u}$ by ${\bf u}={\bf z}+{\bf z}^{\bot}$, where ${\bf z}\in \mathcal{N}({\bf A})$ and ${\bf z}^{\bot}\in\mathcal{N}({\bf A})^{\bot}$, which denotes the orthogonal complement of $\mathcal{N}({\bf A})$. Therefore ${\bf Az}^{\bot}={\bf Au}$. Since $\sigma_{\min}({\bf A})$ is the smallest nonzero singular value of ${\bf A}$,
\begin{align}\label{lemmaproofz1}
\|{\bf z}^{\bot}\|_2\le \|{\bf Au}\|_2/\sigma_{\min}({\bf A}).
\end{align}
Supposing that ${\bf x}^*$ is supported on $T$ and according to Lemma~\ref{lemmarest}.1), it can be derived that
\begin{align}\label{lemmaproof2}
J({\bf x})-J({\bf x}^*)&=J({\bf x}^*+{\bf u}_T)-J({\bf x}^*)+J({\bf u}_{T^c})\nonumber\\
&\ge J({\bf u}_{T^c})-J({\bf u}_T).
\end{align}
By the decomposition of ${\bf u}$, it can be further derived from Lemma~\ref{lemmarest}.1) that
\begin{align}\label{lemmaproofknex1}
J({\bf x})-J({\bf x}^*)\ge J({\bf z}_{T^c})-J({\bf z}_T)-J({\bf z}^{\bot}).
\end{align}

On the one hand, according to the definition of null space constant,
\begin{align}\label{prooflemma2}
J({\bf z}_{T^c})-J({\bf z}_T)\ge\frac{1-\gamma(J,{\bf A},K)}{1+\gamma(J,{\bf A},K)}J({\bf z}).
\end{align}
On the other hand, according to Lemma~\ref{lemmapre}.1) and (\ref{lemmaproofz1}),
\begin{align}\label{lemmaproofknex3}
J({\bf z}^{\bot})\le \alpha\|{\bf z}^{\bot}\|_1\le\alpha\sqrt{N}\|{\bf Au}\|_2/\sigma_{\min}({\bf A}).
\end{align}
Since for $1\le i\le N$, $|z_i|\le\|{\bf z}\|_2\le \|{\bf u}\|_2\le M_0$, it can be calculated that
\begin{align}\label{prooflemma3}
J({\bf z})\ge F(M_0)\|{\bf z}\|_1/M_0\ge F(M_0)\|{\bf z}\|_2/M_0,
\end{align}
where the first inequality is due to Definition~\ref{definition_weak_spar}.3). Therefore (\ref{lemmaproofknex1}), (\ref{prooflemma2}), (\ref{lemmaproofknex3}), and (\ref{prooflemma3}) imply
\begin{align}\label{lemmaproofz2}
J({\bf x})-J({\bf x}^*)\ge C_1\|{\bf z}\|_2-\alpha\sqrt{N}\|{\bf Au}\|_2/\sigma_{\min}({\bf A}).
\end{align}
Since $\|{\bf u}\|_2\le\|{\bf z}\|_2+\|{\bf z}^{\bot}\|_2$, according to (\ref{lemmaproofz1}), (\ref{lemmafor3}) can be directly derived.
\end{proof}

\subsection{Proof of Lemma~\ref{theoremlocal}}\label{prooftheoremlocal}

\begin{proof}
According to Lemma~\ref{lemmarest}.4), it can be derived that
\begin{align}\label{proofknexfor4}
({\bf x}-{\bf x}^*)^{\rm T}\nabla J({\bf x})&\ge J({\bf x})-J({\bf x}^*)+\rho\|{\bf x-x}^*\|_2^2.
\end{align}
Since $\|{\bf x-x}^*\|_2\le\frac{C_1}{-4\rho}$, Corollary~\ref{coroconstant} and (\ref{proofknexfor4}) imply
\begin{align}\label{proofknexfor6}
({\bf x}-{\bf x}^*)^{\rm T}\nabla J({\bf x})\ge C_1\|{\bf x-x}^*\|_2/4,
\end{align}
which completes the proof.
\end{proof}

\subsection{Proof of Lemma~\ref{maintheorem}}\label{proofmaintheorem}

\begin{proof}
Define ${\bf u}={\bf x}-{\bf x}^*$ and ${\bf u}^+={\bf x}^+-{\bf x}^*$. According to the procedure of PGG, it can be derived that ${\bf u}^+={\bf u}-\kappa({\bf I}-{\bf A}^{\dagger}{\bf A})\nabla J({\bf x})$, which further implies
\begin{equation}\label{proofknexfor5}
\begin{aligned}
\|{\bf u}^+\|_2^2=&\|{\bf u}\|_2^2+\kappa^2\|({\bf I}-{\bf A}^{\dagger}{\bf A})\nabla J({\bf x})\|_2^2\\
&-2\kappa{\bf u}^{\rm T}({\bf I}-{\bf A}^{\dagger}{\bf A})\nabla J({\bf x}).
\end{aligned}
\end{equation}
According to Lemma~\ref{lemmarest}.3), the second item on the right side of (\ref{proofknexfor5}) can be bounded as
\begin{align*}
\|({\bf I}-{\bf A}^{\dagger}{\bf A})\nabla J({\bf x})\|_2^2\le\|\nabla J({\bf x})\|_2^2\le\alpha^2N.
\end{align*}
The third item on the right side of (\ref{proofknexfor5}) can be decomposed to
\begin{align*}
{\bf u}^{\rm T}({\bf I}-{\bf A}^{\dagger}{\bf A})\nabla J({\bf x})={\bf u}^{\rm T}\nabla J({\bf x})-{\bf u}^{\rm T}{\bf A}^{\dagger}{\bf A}\nabla J({\bf x}).
\end{align*}
On the one hand, according to the proof of Lemma~\ref{theoremlocal}, (\ref{proofknexfor6}) implies that
\begin{align*}
{\bf u}^{\rm T}\nabla J({\bf x})\ge C_1\|{\bf u}\|_2/4.
\end{align*}
On the other hand,
\begin{align*}
{\bf u}^{\rm T}{\bf A}^{\dagger}{\bf A}\nabla J({\bf x})\le\alpha\sqrt{N}\|{\bf A}{\bf u}\|_2/\sigma_{\min}({\bf A}).
\end{align*}
Substituting these inequalities into (\ref{proofknexfor5}) and according to (\ref{theoremfor2}), the right side of (\ref{proofknexfor5}) can be bounded as $\|{\bf u}\|_2^2-(\mu-1)\alpha^2N\kappa^2$, which arrives Lemma~\ref{maintheorem}.
\end{proof}

\subsection{Proof of Lemma~\ref{theorem_new}}\label{prooftheoremnew}

\begin{proof}
According to the definition of $C_1$ and Lemma~\ref{lemmarest}.5),
\begin{align}
\frac{C_1}{-4\rho}\ge\frac{\alpha M_0+\rho M_0^2}{-4\rho M_0}\frac{1-\gamma(J,{\bf A},K)}{1+\gamma(J,{\bf A},K)}.
\end{align}
Therefore, due to (\ref{theoremfor4}), the constraint (\ref{rhocons}) holds.
\end{proof}

\subsection{Proof of Lemma~\ref{theorem_space}}\label{prooftheoremspace}

\begin{proof}
First, we prove that
\begin{align}
\|{\bf y}-{\bf Ax}(n)\|_2\le\|{\bf y}\|_2\zeta^{n+1}+C_5\kappa/2.
\end{align}
For $n=0$, the initialization is ${\bf x}(0)={\bf A}^{\rm T}{\bf By}$, which satisfies
\begin{align}\label{initerror}
\|{\bf y}-{\bf A}{\bf x}(0)\|_2=\|{\bf y}-{\bf A}{\bf A}^{\rm T}{\bf By}\|_2\le\|{\bf y}\|_2\zeta.
\end{align}
For the $(n+1)$th iteration, the iterative solution obeys
\begin{align}\label{iteration}
{\bf x}(n+1)={\bf A}^{\rm T}{\bf B}{\bf y}+({\bf I}-{\bf A}^{\rm T}{\bf BA})({\bf x}(n)-\kappa\nabla J({\bf x}(n))),
\end{align}
which satisfies
\begin{equation*}
\begin{aligned}
&\|{\bf y}-{\bf A}{\bf x}(n+1)\|_2\\
=&\|({\bf I}-{\bf A}{\bf A}^{\rm T}{\bf B})({\bf y}-{\bf A}({\bf x}(n)-\kappa\nabla J({\bf x}(n))))\|_2\\
\le&\|{\bf y}-{\bf A}{\bf x}(n)\|_2\zeta+\alpha\sqrt{N}\|{\bf A}\|_2\kappa\zeta.
\end{aligned}
\end{equation*}
Together with (\ref{initerror}), it can be derived by recursion that
\begin{align}
\|{\bf y}-{\bf A}{\bf x}(n)\|_2&\le\|{\bf y}-{\bf A}{\bf x}(0)\|_2\zeta^n+\frac{\zeta\alpha\sqrt{N}\|{\bf A}\|_2}{1-\zeta}\cdot\kappa\nonumber\\
&\le\|{\bf y}\|_2\zeta^{n+1}+C_5\kappa/2,\label{prooflemma4}
\end{align}

Now we turn to the proof of Lemma~\ref{theorem_space}. Since ${\bf y=Ax}^*+{\bf e}$, it can be derived that
\begin{align*}
\|{\bf A}({\bf x}(n)-{\bf x}^*)\|_2&\le\|{\bf y-Ax}(n)\|_2+\|{\bf y-Ax}^*\|_2\\
&\le\|{\bf y}\|_2\zeta^{n+1}+C_5\kappa/2+\|{\bf e}\|_2,
\end{align*}
which completes the proof.
\end{proof}

\subsection{Proof of Lemma~\ref{lemma_APGG}}\label{prooftheoremAPGG}

\begin{proof}
Similar to the proof of Lemma~\ref{maintheorem}, define ${\bf u}={\bf x}-{\bf x}^*$ and ${\bf u}^+={\bf x}^+-{\bf x}^*$. According to (\ref{iteration}), it holds that ${\bf u}^+={\bf u}+{\bf A}^{\rm T}{\bf B}({\bf y}-{\bf A}{\bf x})-\kappa({\bf I}-{\bf A}^{\rm T}{\bf BA})\nabla J({\bf x})$, which further implies
\begin{equation}\label{proof3}
\begin{aligned}
\|{\bf u}^+\|_2^2=&\|{\bf u}\|_2^2+\|{\bf A}^{\rm T}{\bf B}({\bf y}-{\bf A}{\bf x})\|_2^2+2{\bf u}^{\rm T}{\bf A}^{\rm T}{\bf B}({\bf y}-{\bf A}{\bf x})\\
&+\kappa^2\|({\bf I}-{\bf A}^{\rm T}{\bf BA})\nabla J({\bf x})\|_2^2\\
&-2\kappa{\bf u}^{\rm T}({\bf I}-{\bf A}^{\rm T}{\bf BA})\nabla J({\bf x})\\
&-2\kappa({\bf y}-{\bf A}{\bf x})^{\rm T}{\bf B}^{\rm T}{\bf A}({\bf I}-{\bf A}^{\rm T}{\bf BA})\nabla J({\bf x}).
\end{aligned}
\end{equation}

According to (\ref{prooflemma4}) and $n\ge N_{\kappa}$, for the second item on the right side of (\ref{proof3}),
\begin{align*}
\|{\bf A}^{\rm T}{\bf B}({\bf y}-{\bf A}{\bf x})\|_2^2\le(1+\zeta)\|{\bf B}\|_2C^2_5\kappa^2.
\end{align*}
For the third item,
\begin{align*}
{\bf u}^{\rm T}{\bf A}^{\rm T}{\bf B}({\bf y}-{\bf A}{\bf x})\le\|{\bf B}\|_2C_5\kappa\left(C_5\kappa+\|{\bf e}\|_2\right).
\end{align*}
For the forth item,
\begin{align*}
\|({\bf I}-{\bf A}^{\rm T}{\bf BA})\nabla J({\bf x})\|_2^2\le\|{\bf I}-{\bf A}^{\rm T}{\bf BA}\|_2^2\alpha^2N=d\alpha^2N
\end{align*}
For the fifth item, it can be decomposed to
\begin{align*}
{\bf u}^{\rm T}({\bf I}-{\bf A}^{\rm T}{\bf BA})\nabla J({\bf x})={\bf u}^{\rm T}\nabla J({\bf x})-{\bf u}^{\rm T}{\bf A}^{\rm T}{\bf BA}\nabla J({\bf x}).
\end{align*}
According to the proof of Lemma~\ref{theoremlocal}, since $\|{\bf x-x}^*\|_2\ge 2C_2(C_5\kappa+\|{\bf e}\|_2)$, (\ref{proofknexfor6}) implies that
\begin{align*}
{\bf u}^{\rm T}\nabla J({\bf x})\ge C_1 \|{\bf u}\|_2/4,
\end{align*}
and
\begin{align*}
{\bf u}^{\rm T}{\bf A}^{\rm T}{\bf BA}\nabla J({\bf x})\le\alpha\sqrt{N}\|{\bf A}\|_2\|{\bf B}\|_2\left(C_5\kappa+\|{\bf e}\|_2\right).
\end{align*}
For the last item,
\begin{align*}
&({\bf y}-{\bf A}{\bf x})^{\rm T}{\bf B}^{\rm T}{\bf A}({\bf I}-{\bf A}^{\rm T}{\bf BA})\nabla J({\bf x})\\
\ge&-\alpha\sqrt{N}\|{\bf A}\|_2\|{\bf B}\|_2\zeta C_5\kappa.
\end{align*}

Together with the above inequalities, (\ref{proof3}) can be simplified to
\begin{equation}\label{proofthfor1}
\begin{aligned}
\|{\bf u}^+\|_2^2\le&\|{\bf u}\|_2^2+d\alpha^2N\kappa^2\\
&-\frac{C_1}{2}\left(\|{\bf u}\|_2-C_6\kappa-C_7\|{\bf e}\|_2\right)\kappa,
\end{aligned}
\end{equation}
where $C_6$ and $C_7$ are specified as (\ref{lemmafor9}) and (\ref{lemmafor10}), respectively. Therefore, under the assumption (\ref{itfirstnoise}), inequality (\ref{proofthfor1}) implies (\ref{lemmafor11}), which completes the proof.
\end{proof}

\subsection{Proof of Theorem~\ref{theorem_iternum}}\label{proofthiternum}

\begin{proof}
Assume that the iterative solution of APGG satisfies
\begin{align}\label{proofknexfor7}
\|{\bf x}-{\bf x}^*\|_2\ge 2C_3\kappa+2C_4\|{\bf e}\|_2.
\end{align}
Since Lemma~\ref{lemma_APGG} holds for any $\mu>1$, we choose
\begin{align}
\mu(n)=\frac{\|{\bf x}-{\bf x}^*\|_2-C_4\|{\bf e}\|_2}{C_3\kappa}>1,
\end{align}
and the next iterative solution satisfies
\begin{align}
&\|{\bf x}^+-{\bf x}^*\|_2^2\nonumber\\
\le&\|{\bf x}-{\bf x}^*\|_2^2-(\mu(n)-1)d\alpha^2N\kappa^2\nonumber\\
=&\|{\bf x}-{\bf x}^*\|_2^2-\frac{d\alpha^2N\kappa}{C_3}\left(\|{\bf x}-{\bf x}^*\|_2-C_3\kappa-C_4\|{\bf e}\|_2\right)\nonumber\\
\le&\left(\|{\bf x}-{\bf x}^*\|_2-\frac{d\alpha^2N\kappa}{4C_3}\right)^2,
\end{align}
where the last inequality can be derived from the assumption (\ref{proofknexfor7}). Therefore,
\begin{align}\label{proofknexfor8}
\|{\bf x}^+-{\bf x}^*\|_2\le \|{\bf x}-{\bf x}^*\|_2-\frac{d\alpha^2N\kappa}{4C_3},
\end{align}
i.e., the distance reduction is at least $\frac{d\alpha^2N\kappa}{4C_3}$. Since the initial solution satisfies $\|{\bf x}(0)-{\bf x}^*\|_2\le M_0$, in at most
\begin{align*}
\frac{M_0}{\frac{d\alpha^2N\kappa}{4C_3}}=\frac{4C_3M_0}{d\alpha^2N\kappa}
\end{align*}
iterations, the recovered solution by APGG satisfies (\ref{theoremfor5}).

It needs to be noted that, similar to the discussions in Section III-E of \cite{l1ZAP}, $\mu$ is just a parameter in the theoretical analysis, and the choice of $\mu$ would not influence the actual convergence of iterations of APGG. In other words, the inequality (\ref{proofknexfor8}) always holds as long as the assumption (\ref{proofknexfor7}) holds, and this fact is independent of the choice of $\mu$.
\end{proof}

\section{Conclusion}\label{Sec_Conc}

This paper considers the convergence guarantees of a non-convex approach for sparse recovery. A class of weakly convex sparseness measures is adopted to constitute the sparsity-inducing penalties. The convergence analysis of the (A)PGG method reveals that when the non-convexity of the penalty is below a threshold (which is in inverse proportion to the distance between the initial solution and the sparse signal), the recovery error is linear in both the step size and the noise term. As for the APGG method, the influence of the approximate projection is reflected in the coefficients instead of an additional error term. Therefore, in the noiseless scenario with sufficiently small step size, APGG returns a solution with any given precision. Simulation results verify the theoretical analysis in this paper, and the recovery performance of APGG is not much influenced by the approximate projection.

There are several future directions to be explored. The first direction is to study the performance of $J$-minimization for tuple $({\bf A},{\bf x}^*)$. In this paper we mainly utilize the null space constant to characterize its performance, and it is only tight for tuple $({\bf A},K)$. For a fixed sparse signal ${\bf x}^*$, as the non-convexity $-\rho/\alpha$ increases, the performance of $J$-minimization should be different, as is revealed in Theorem~\ref{theorem_l0} and Fig.~\ref{maxkrho}-\ref{maxkrho1}. The second possible direction is to improve the performance of sparse recovery by solving a sequence of optimization problems with different choices of non-convexity. The major concern would be the selection rules of the sequence of non-convexity such that the recovered solution for the previous non-convexity would lie in the convergence neighborhood for the next non-convexity.

\appendices

\section{Approximate Calculation of ${\bf A}^{\dagger}$}\label{app_cal}

The methods of computing ${\bf A}^\dag$ have been developed to a mature technology. They are roughly classified into two categories: direct methods \cite{ShinozakiD} and iterative methods \cite{ShinozakiI}. Direct methods are mainly based on matrix decompositions, such as QR decomposition \cite{ShinozakiD} and singular value decomposition \cite{SVD1,SVD2}. Iterative methods, on the other hand, derive the pseudo-inverse matrix iteratively. To develop more accurate solutions, they cost more computational resources. Therefore, the iterative methods are preferred if approximate pseudo-inverse matrix can be applied to reduce the computational complexity.

A well-known iterative method introduced by Ben-Israel et al. \cite{IsraelI} is
\begin{equation*}
\begin{aligned}
{\bf Y}_0&=\varsigma {\bf A}^{\rm T},\\
{\bf Y}_k&={\bf Y}_{k-1}(2{\bf I}-{\bf AY}_{k-1})
\end{aligned}
\end{equation*}
with the parameter $\varsigma$ satisfying $0<\varsigma<2/\|{\bf AA}^{\rm T}\|_1$, where $\|\cdot\|_1$ denotes the maximum absolute column sum of the matrix. Simple calculation derives that
\begin{equation*}
\begin{aligned}
\|{\bf I}-{\bf AY}_0\|_2&=\|{\bf I}-\varsigma{\bf AA}^{\rm T}\|_2<1,\\
\|{\bf I}-{\bf AY}_k\|_2&\le\|{\bf I}-{\bf AY}_{k-1}\|_2^2\le\|{\bf I}-{\bf AY}_0\|_2^{2^k},
\end{aligned}
\end{equation*}
which means this method is quadratic convergence.

In this paper, it is assumed that the approximate pseudo-inverse matrix is of the form ${\bf A}^{\rm T}{\bf B}$, i.e., the transpose of $\bf A$ multiplied by a matrix ${\bf B}\in\mathbb{R}^{M\times M}$. ${\bf B}$ is considered as the approximation of $({\bf A}{\bf A}^{\rm T})^{-1}$. It is verified that most, if not all, iterative methods \cite{ShinozakiI,IsraelI,Genbook} satisfy this assumption.

%


\end{document}